\newtheorem{assumption}{Assumption}
\newtheorem{theorem}{Theorem}
\newtheorem{lemma}[theorem]{Lemma}
\newtheorem{definition}[theorem]{Definition}
\newtheorem{corollary}[theorem]{Corollary}
\newtheorem{claim}[theorem]{Claim}
\newcommand{\abs}[1]{\left\lvert #1 \right\rvert}
\newcommand{\braces}[1]{\left\{#1\right\}}
\newcommand{\bracket}[1]{\left[#1\right]}
\newcommand{\ceil}[1]{\left\lceil #1 \right\rceil}
\newcommand{\paren}[1]{\left(#1\right)}
\newcommand{\set}[1]{\braces{#1}}
\newcommand{\NN}{\ensuremath{\mathbb{N}}}
\newcommand{\RR}{\ensuremath{\mathbb{R}}}
\newcommand{\ZZ}{\ensuremath{\mathbb{Z}}}
\newcommand{\half}{\ensuremath{\frac 1 2}}
\newcommand{\B}{\mathcal{B}}
\newcommand{\conlamufl}{\textsf{Cont-$\lambda$-UFL}\xspace}
\newcommand{\contkmed}{\textsf{Cont-$k$-Med}\xspace}
\newcommand{\contkmeans}{\textsf{Cont-$k$-Means}\xspace}
\newcommand{\contkp}{\textsf{Cont-$(k,p)$-Clustering}\xspace}
\newcommand{\contkcwo}{\textsf{Cont-$k$CwO}\xspace}
\newcommand{\kcwo}{{\sf $k$-CwO}\xspace}
\newcommand{\contfairkmed}{\textsf{ContFair-$k$-Med}\xspace}
\newcommand{\child}{\mathsf{child}}
\newcommand{\cost}{{\mathsf{cost}}}
\newcommand{\costf}{{\mathsf{cost_F}}}
\newcommand{\costc}{{\mathsf{cost_C}}}
\newcommand{\cov}{\ensuremath{\mathrm{cov}}}
\newcommand{\dee}{\mathrm{d}}
\newcommand{\DLP}{{\sf DLP}}
\newcommand{\even}{{\sf even}}
\newcommand\eps{\ensuremath{\varepsilon}}
\newcommand\Exp{\mathbf{Exp}}
\newcommand{\Ie}{I_\eps}
\newcommand{\lp}{\textsf{LP}\xspace}
\newcommand{\NP}{{\sf NP}}
\newcommand{\odd}{{\sf odd}}
\newcommand{\opt}{{\tt opt}}
\renewcommand{\P}{{\sf P}}
\newcommand{\ra}{r_\alpha}
\newcommand{\Reps}{\mathsf{Reps}}
\newcommand{\Repscov}{\mathsf{Reps}_{\mathrm{cov}}}
\newcommand{\Repsa}{\mathsf{Reps_\alpha}}
\newcommand{\RepsCy}{\mathsf{Reps}_{(C,y)}}
\newcommand{\RepsCCy}{\mathsf{Reps}_{(C^2,y)}}
\newcommand{\RepsCya}{\mathsf{Reps_{(C,y),\alpha}}}
\newcommand{\treeroot}{{\rm root}}
\newcommand{\Scov}{S_{\mathrm{cov}}}
\newcommand{\Sep}{{\sf Sep}}
\newcommand{\SepCy}{\mathsf{Sep}_{(C,y)}}
\newcommand{\SepCCy}{\mathsf{Sep}_{(C^2,y)}}
\newcommand{\SepCya}{\mathsf{Sep}_{(C,y),\alpha}}
\newcommand{\Sepcov}{\mathsf{Sep}_{\mathrm{cov}}}
\title{Approximation Algorithms for Continuous Clustering and Facility Location Problems}
\author{}
\date{}
\begin{document}

\clearpage
\begin{center}
{\Large Approximation Algorithms for Continuous \\ \vspace{0.5em}Clustering and Facility Location Problems}

\vspace{2em}

\begin{tabular}{ccc}
{\large Deeparnab Chakrabarty} & {\large Maryam Negahbani} & {\large Ankita Sarkar}\\
\href{mailto:deeparnab@dartmouth.edu}{\tt deeparnab@dartmouth.edu} & \href{mailto:maryam.negahbani.gr@dartmouth.edu}{\tt maryam.negahbani.gr@dartmouth.edu} & \href{mailto:ankita.sarkar.gr@dartmouth.edu}{\tt ankita.sarkar.gr@dartmouth.edu}\\
Department of Computer Science & Department of Computer Science & Department of Computer Science\\
Dartmouth College, NH, USA & Dartmouth College, NH, USA & Dartmouth College, NH, USA
\end{tabular}
\begin{abstract}
In this paper, we consider center-based clustering problems where $C$, the set of points to be clustered, lies in a metric space $(X,d)$, and the set $X$ of candidate centers is potentially infinite-sized. We call such problems {\em continuous} clustering problems to differentiate them from the {\em discrete} clustering problems where the set of candidate centers is explicitly given.
It is known that for many objectives, when one restricts the set of centers to $C$ itself and applies
an $\alpha_\mathsf{dis}$-approximation algorithm for the discrete version, one obtains a $\beta \cdot \alpha_{\mathsf{dis}}$-approximation algorithm for the continuous version via the triangle inequality property of the distance function. Here $\beta$ depends on the objective, and for many objectives such as $k$-median, $\beta = 2$, while for some others such as $k$-means, $\beta = 4$.
The motivating question in this paper is \emph{whether this gap of factor $\beta$ between continuous and discrete problems is inherent, or can one design better algorithms for continuous clustering than simply reducing to the discrete case as mentioned above?}
In a recent SODA 2021 paper, Cohen-Addad, Karthik, and Lee prove a
factor-$2$ and a factor-$4$ hardness, respectively, for the continuous versions of the $k$-median and $k$-means problems, even when the number of cluster centers is a constant.
The discrete problem for a constant number of centers is easily solvable exactly using enumeration, and therefore, in certain regimes, the ``$\beta$-factor loss'' seems unavoidable.

In this paper, we describe a technique based on the {\em round-or-cut} framework to approach continuous clustering problems. We show that, for the continuous versions of some clustering problems, we can design
approximation algorithms attaining a better factor than the $\beta$-factor blow-up mentioned above. In particular, we do so for: the uncapacitated facility location problem with uniform facility opening costs $(\lambda$-UFL); the $k$-means problem; the individually fair $k$-median problem; and the $k$-center with outliers problem. Notably, for $\lambda$-UFL, where $\beta = 2$ and the discrete version is NP-hard to approximate within a factor of $1.27$, we describe a $2.32$-approximation for the continuous version, and indeed $2.32 < 2 \times 1.27$. Also, for $k$-means, where $\beta = 4$ and the best known approximation factor for the discrete version is $9$, we obtain a $32$-approximation for the continuous version, which is better than $4 \times 9 = 36$.

The main challenge one faces  is that most algorithms for the discrete clustering problems, including the state of the art solutions, depend on Linear Program (LP) relaxations that become infinite-sized in the continuous version. To overcome this, we design new linear program relaxations for the continuous clustering problems which,  although having exponentially many constraints,  are amenable to the round-or-cut framework.

\end{abstract}

\begin{minipage}{0.89\textwidth}
{\small~\\

\textbf{Related version.} \textit{ESA 2022 version:} \url{https://drops.dagstuhl.de/opus/volltexte/2022/16971}~\cite{ESA-version}\\

\textbf{Funding.}\quad The authors were supported by NSF award \#2041920.}
\end{minipage}
\end{center}

\section{Introduction}
Clustering is a ubiquitous problem arising in various areas ranging from data analysis to operations research. 
One popular class of clustering problems are the so-called {\em center-based clustering} problems where the quality of the clustering is determined by a  function of the distances of every point in $C$ to the ``centers'' of the clusters they reside in. Two extensively studied measures are the sum of these distances, with the resulting problem called the $k$-median problem, and the sum of squares of these distances, with the resulting problem called the $k$-means problem.

In most settings, these center-based clustering problems are \NP-hard and one considers approximation algorithms for the same. Traditionally, however, approximation algorithms for these problems have been studied in {\em finite/discrete} metric spaces and, in fact, usually under the constraint that the set of centers, aka facilities, can be selected only from a prescribed subset $F\subseteq X$ of the metric space. Indeed, this model makes perfect sense when considering applications in operations research where the possible depot-locations may be constrained. These discrete problems have been extensively studied~\cite{HochbS1985,shmoys-tardos-aardal-1997,cgst-primal-rounding,guha-khuller-hardness,ChariKMN2001,CharG1999,JainV2001,JainMMSV2003,AryaGKMMP2001,KanunMNPSW2004,Li2013,LiS2016,ByrkaGRS2013} over the last three decades. 
For instance, for the $k$-median problem, the best known approximation algorithm is a $2.675$-approximation \cite{k-median-approx}, while the best known hardness is $(1+2/e) \approx 1.74$~\cite{JainMMSV2003}.
For the $k$-means problem, the best known approximation algorithm is a $(9+\eps)$-approximation algorithm~\cite{Ahmadian2017BetterGF,GuptaT2008,KanunMNPSW2004}, while the best hardness is $1 + 8/e \approx 3.94$~\cite{JainMMSV2003}.

Restricting to a finite metric space, however, makes the problem easier, and indeed many of the above algorithms in the papers mentioned above would be infeasible to implement if $X$ were extremely large -- for instance, if $X$ were $\RR^m$ for some large dimension $m$, and the distance function were the $\ell_p$-metric, for some $p$.
On the other hand, it is reasonably easy to show using triangle inequality that if one considers opening centers from $C$ itself, and thus reduces the problem to its discrete version, then one incurs a hit of a factor $\beta$ in the approximation factor, where $\beta$ is a constant depending on the objective function. In particular, if we look at the sum-of-distances objectives such as in $k$-median, then $\beta = 2$, while if one looks at the sum-of-squared-distances such as in $k$-means, then $\beta = 4$.
Therefore, one immediately gets a $5.35$-approximation for the {\em continuous $k$-median} problem and a $36+\eps$-approximation for the {\em continuous $k$-means problem}.  The question we investigate  is 
\begin{quote}\emph{
	Is this factor $\beta$ hit necessary between the continuous and discrete versions of center based clustering problems, or can one design better approximation algorithms for the continuous case?
}
\end{quote}
It is crucial to note that when considering designing algorithms, we do not wish to make {\em any} assumptions on the underlying metric space $(X,d)$. For instance, we do not wish to assume $X = \RR^m$ for some $m$.
This is important, for we really want to compare ourselves with the $\beta$ which is obtained using only the triangle-inequality and symmetry property of $d$. On the other hand, to exhibit that a certain algorithm does not work, any candidate metric space suffices.

Recently, in a thought-provoking paper~\cite{cohen-addad-et-al}, Cohen-Addad, Karthik, and Lee show that, unless $\P = \NP$, the $k$-median and $k$-means problem defined on $(\RR^m, \ell_\infty)$ cannot have an approximation ratio better than $2$ and $4$, respectively, {\em even when} $k$ is a constant! Since the discrete problems have trivial exact algorithms via enumeration when $k$ is a constant, this
seems to indicate that in certain cases the above factor $\beta$ hit is unavoidable. Is it possible that the {\em inapproximability} of the continuous problem is  indeed $\beta$ times the inapproximability of the discrete version? 

\subsection{Our Results}
Our main contribution is a direct approach towards the continuous versions of clustering problems. We apply this to the following
clustering problems where we obtain a factor better than $\beta \cdot \alpha_{\mathsf{dis}}$, where $\alpha_{\mathsf{dis}}$ is the best known factor for the discrete version of the problem.

\begin{itemize}
	\item In the continuous $\lambda$-UFL problem, a ``soft'' version of the continuous $k$-median problem, one is allowed to pick any number of centers but has to pay a parameter $\lambda$ 
	for each picked center. The objective is to minimize the sum of distances of points in $C$ to picked centers plus the cost for opening these centers.
	Again, note that the centers can be opened anywhere in $X$. For the discrete version, where the only possible center locations are in $C$, there is a $1.488$-approximation due to Li~\cite{Li2013}, and a hardness of approximation within a factor of $1.278$ is known due to Guha and Khuller~\cite{guha-khuller-hardness}. We describe a $2.32$-approximation algorithm. Note that $2.32 < 2\cdot 1.278$ and thus, for this problem, the inapproximability is {\em not} $\beta$ times that of the discrete case. We also show how the reduction of~\cite{cohen-addad-et-al} carries over, to prove a hardness of $2$ for this problem.
	
	\item In the continuous $k$-means problem, we wish to minimize the sum of squares of distances of clients to the closest open center. 
	 Recall that for this problem we have $\beta = 4$, and thus one gets a 
	 $36$-factor algorithm for the continuous $k$-means using the best known $9$-factor~\cite{Ahmadian2017BetterGF,GuptaT2008,KanunMNPSW2004} algorithm for the discrete problem. 
	We describe an improved $32$-approximation for the continuous $k$-means problem.
	
	\item For the continuous $k$-median problem, our techniques fall short of improving $2\times$ the best known approximation factor for the discrete $k$-median problem. On the other hand, we obtain better algorithms for the
	the {\em individually fair} or {\em priority} version of the continuous $k$-median problem. In this problem, every point $v\in C$ has a specified radius $r(v)$ and desires a center opened within this distance.
	The objective is the same as the $k$-median problem: minimize the sum of all the distances.
	This problem arises as a possible model~\cite{JungKL2020,chakrabarty-negahbani-fairness,MahabV2020,VakilY2021, Ples87,BCCN21} in the study of fair clustering solutions, since the usual $k$-median algorithms may place certain clients inordinately far away. 
	At a technical level, this problem is a meld of the $k$-median and the $k$-center problems; the latter is $\NP$-hard, which forces one to look at {\em bicriterion} approximations. An $(\alpha,\gamma)$-approximation would return a solution within $\alpha$ times the optimum but may connect $v$ to a point as far as $\gamma r(v)$ away. Again, any $(\alpha, \gamma)$-approximation for the {\em discrete} version where $X=C$ would imply a $(2\alpha, 2\gamma)$-approximation for the continuous version.
	
	The best discrete approximation is $(8,3)$ due to Vakilian and Yal{\c{c}}{\i}ner~\cite{VakilY2021} which would imply an $(16,3)$-approximation for the continuous version. We describe an $(8,8)$-approximation for the continuous version of the problem.

	\item In the $k$-center with outliers (kCwO) problem, we are given a parameter $m \leq |C|$, and we need to serve only $m$ of the clients. The objective is the maximum distance of a served client to its center. 
	The $k$-center objective is one of the objectives for which most existing discrete algorithms can compare themselves directly with the continuous optimum. The $3$-approximation algorithm in~\cite{ChariKMN2001} for the kCwO problem is one such example. However, the 
	best known algorithm for kCwO for the discrete case (when $X=C$)
	is a $2$-approximation by Chakrabarty, Goyal, and Krishnaswamy~\cite{ChakrGK2020} which proceeds via LP rounding, and does not give a $2$-approximation for continuous kCwO. This was explicitly noted in a work by Ding, Yu, and Wang (\emph{``... unclear of the resulting approximation ratio for the problem in Euclidean space.''})~\cite{DingYW19}, that describes a $2$-approximation for kCwO in Euclidean space, however, violating the number of clients served. We give a proper $2$-approximation for the continuous kCwO problem (with no assumptions on the metric space) with no violations.
\end{itemize}

\subsection{Our Technical Insight}
Most state of the art approximation algorithms for center-based clustering problems are based on LP relaxations where one typically has variables $y_i$ for {\em every} potential location 
of a center. When the set $X$ is large, this approach becomes infeasible. Our main technical insight, underlying all our results, is to use a different style of linear program with polynomially many variables but exponentially many constraints. We then use the {\em round-or-cut} framework to obtain our approximation factor. More precisely, given a {\em potential} solution to our program, we either ``round'' it to get a desired solution within the asserted approximation factor, or we find a {\em separating hyperplane} proving that this potential solution is infeasible.
Once this hyperplane is fed to the ellipsoid algorithm~\cite{ellipsoid}, the latter generates another potential solution, and the process continues. Due to the ellipsoid method's guarantees, we obtain our approximation factor in polynomial time. 

For every client $v\in C$, our LP relaxation has variables of the form $y(v,r)$, indicating whether there is some point $x\in X$ in an $r$-radius around $v$ which is ``open'' as a center. Throughout the paper we use $r$ as a quantity varying ``continuously'', but it can easily be discretized, with a loss of at most $\frac{1}{\mathrm{poly}(n)}$, to arise from a set of size $\leq \mathrm{poly}(n)$. Thus there are only polynomially many such variables. We add the natural ``monotonicity'' constraints: $y(v,r) \leq y(v,s)$ whenever $r\leq s$. Interestingly, for one of the applications, we also need the monotonicity constraints for non-concentric balls: if $B(v,r) \subseteq B(u,s)$, then we need $y(v,r)\leq y(u,s)$.

We have a variable $C_v$ indicating the cost the client $v$ pays towards the optimal solution. 
Next, we connect the $C_v$'s and the $y(v,r)$'s in the following ways (when $\beta = 1$, and something similar when $\beta = 2$).
One connection states that for any $r$, $C_v \geq r\cdot (1 - y(v,r))$ and we add these to our LP. For the last two applications listed above, this suffices.
However, one can also state the stronger condition of $C_v \geq \int_{0}^\infty (1 - y(v,r))dr$. Indeed, the weaker constraint is the ``Markov-style inequality''
version of the stronger constraint.

Our second set of constraints restrict the $y(v,r)$'s to be ``not too large''. For instance, for the fair  $k$-median or kCwO problems where we are only allowed $k$ points from $X$, we assert that for {\em any} set $\mathcal{B}$ of disjoint balls $B(v,r)$, we must have the sum of the respective $y(v,r)$'s to be at most $k$. This set of constraints is exponentially many, and this is the set of constraints that need the round-or-cut machinery. 
For the $\lambda$-UFL problem, we have that the sum of the $y(v,r)$'s scaled by $\lambda$ plus the sum of the $C_v$'s should be at most $\opt_g$, which is a running guess of $\opt$.

Once we set up the framework above, then we can {\em port many existing rounding} algorithms for the discrete clustering problems without much hassle.
In particular, this is true for rounding algorithms which use the $C_v$'s as the core driving force. For the continuous $\lambda$-UFL problem, we port the rounding algorithm from the paper~\cite{shmoys-tardos-aardal-1997} by Shmoys, Tardos, and Aardal. 
For the continuous $k$-means problem, we port the rounding algorithm from the paper~\cite{cgst-primal-rounding} by Charikar, Guha, Shmoys, and Tardos.
For the continuous fair $k$-median problem, we port the rounding algorithm from the paper~\cite{chakrabarty-negahbani-fairness} by Chakrabarty and Negahbani, which itself builds on the algorithm present in the paper~\cite{AlamdS2017} by Alamdari and Shmoys. For the continuous kCwO problem, we port the rounding algorithm present in the paper~\cite{ChakrGK2020} by Chakrabarty, Goyal, and Krishnaswamy.

Our results fall short for the continuous $k$-median problem (without fairness), where we can port the rounding algorithm from the paper~\cite{cgst-primal-rounding} 
and get a $6.67$-approximation. This, however, does not improve upon the $5.35$-factor mentioned earlier.

\subsection{Other Related Works and Discussion}
The continuous $k$-means and median problems have been investigated quite a bit in the specific setting when $X = \RR^m$ and when $d(u,v)$ is the $\ell_2$ distance.
The paper~\cite{Matousek2000} by Matou\v{s}ek describes an $(1+\eps)$-approximation (PTAS) that runs in time $O(n\log^k n \cdot \eps^{-O(k^2m)})$. This led to a flurry of results~\cite{HarPM2004,delaVFKKR03,KumarSS2004,Chen2006,FeldmanMS2007} on obtaining PTASes with better dependencies on $k$ and $m$ via the applications of coresets. There is a huge and growing literature on coresets, and we refer the interested reader to the paper~\cite{CohenSS2021} by Cohen-Addad, Saulpic, and Schwiegelshohn, and the references within, for more information.
Another approach to the continuous $k$-means problem has been local search.
The paper~\cite{KanunMNPSW2004} which describes a $9+\eps$-approximation was first stated for the geometric setting, however it also went via the discretization due to Matou\v{s}ek~\cite{Matousek2000} and suffered a running time of exponential dependency on the dimension. More recent papers~\cite{FriggstadRS2019l,CohenKM2019l} described local-search based PTASes for metrics with doubling dimension $D$, with running time exponentially depending on $D$. These doubling metrics generalize $(\RR^m, \ell_2)$-metrics.
However, none of the above ideas seem to suggest better constant factor approximations for the continuous $k$-median/means problem in the general case, and indeed even when $X = \RR^m$ but $m$ is part of the input.

The $k$-means problem in the metric space $(\RR^m, \ell_2)$, where $m$ and $k$ are not constants, has been studied extensively~\cite{Trevisan2000,AwasthiCKS2015,Cohen-AddadK2019,Cohen-AddadKL2022,Ahmadian2017BetterGF,Cohen-AddadEMN2022}, and is called the Euclidean $k$-means problem. The discrete version of this problem was proved {\sf APX}-hard in 2000~\cite{Trevisan2000}, but the {\sf APX}-hardness of the continuous version was proved much later, in 2015~\cite{AwasthiCKS2015}. More recently, the hardness results for both versions have been improved: the discrete Euclidean $k$-means problem is hard to approximate to factor $1.17$, while the continuous problem is hard to approximate to factor $1.07$~\cite{Cohen-AddadK2019}. Moreover, under assumption of a complexity theoretic hypothesis called the Johnson coverage hypothesis, these numbers have been improved to $1.73$ and $1.36$, respectively~\cite{Cohen-AddadKL2022}. On the algorithmic side, the discrete Euclidean $k$-means problem admits a better approximation ratio than the general case: a $6.36$ approximation was described in 2017~\cite{Ahmadian2017BetterGF}, which was very recently improved to $5.912$~\cite{Cohen-AddadEMN2022}.

We believe that our paper takes the first stab at getting approximation ratios better than $\beta \times$ the best discrete factor for the {\em continuous} clustering problems.
Round-or-cut is a versatile framework for approximation algorithm design with many recent applications~\cite{CarrFKP2001,ChakrCKK2015,AnSS2017,chakrabarty-negahbani-f-center,AneggAKZ22}, and the results in our paper is yet another application of this paradigm. However, many questions remain.
We believe that the most interesting question to tackle is the continuous $k$-median problem. The best known discrete $k$-median algorithms are, in fact, combinatorial in nature, and are obtained via 
 applying the primal-dual/dual-fitting based methods~\cite{JainV2001,JainMMSV2003,LiS2016,k-median-approx} on the discrete LP. However, their application still needs an explicit description of the facility set, and it is interesting to see if they can be {\em directly} ported to the continuous setting.

All the algorithms in our paper, actually {\em still} open centers from $C$. Even then, we are able to do better than simply reducing to the discrete case, because we do not commit to the $\beta$ loss upfront, and instead round from a fractional solution that can open centers anywhere in $X$. 
This raises an interesting question for the $k$-median problem (or any other center based clustering problem): consider the potentially infinite-sized LP which has variables $y_i$ for all
 $i\in X$, but restrict to the optimal solution which only is allowed to open centers from $C$. How big is this ``integrality gap''? It is not too hard to show that for the $k$-median problem this is between $2$ and $4$. The upper bound gives hope we can get a true $4$-approximation for the continuous $k$-median problem, but it seems one would need new ideas to obtain such a result.


\paragraph*{Organization of this Paper}
In the main body, we focus on the continuous $\lambda$-UFL and the continuous fair $k$-median results, since we believe that they showcase the technical ideas in this paper. Proofs of certain statements have been deferred to the appendix.
The description of the results on continuous $k$-means and continuous $k$-center with outliers can be found in~\Cref{appsec:kmeans} and~\Cref{appsec:kcwo}, respectively.

\section{Preliminaries}\label{sec:prelim}
Given a metric space $(X,d)$ on points $X$ with pairwise distances $d$, we use the notation $d(v,S) = \min_{i \in S}d(v,i)$ for $v \in X$ and $S\subseteq X$ to denote $v$'s distance to the set $S$.
\begin{definition}[Continuous $k$-median (\contkmed)]
The input is a metric space $(X,d)$, clients $C \subseteq X, \abs{C} = n \in \NN$, and $k \in \NN$. The goal is to find $S \subseteq X, \abs{S} = k$ minimizing $\cost(S) := \sum_{v \in C}d(v,S)$.
\end{definition}
\begin{definition}[Continuous Fair $k$-median (\contfairkmed)]
Given the \contkmed input, plus fairness radii $r : C \to [0,\infty)$, the goal is to find $S \subseteq X, \abs{S} = k$ such that $\forall v \in C$, $d(v,S) \leq r(v)$, minimizing $\cost(S)$.
\end{definition}




In the Uncapacitated Facility Location (UFL) problem, the restriction of opening only $k$ facilities is replaced by having a cost associated with opening each facility. When these costs are equal to the same value $\lambda$ for all facilities, the problem is called $\lambda$-UFL.
\begin{definition}[Continuous $\lambda$-UFL (\conlamufl)]
Given a metric space $(X,d)$, clients $C \subseteq X, \abs{C} = n \in \NN$, and $\lambda \geq 0$, find $S \subseteq X$ that minimizes the sum of ``connection cost'' $\costc(S) := \sum_{v \in C} d(v,S)$ and ``facility opening cost'' $\costf(S) := \lambda|S|$.
\end{definition}
Let $\Delta = \max_{u,v \in C} d(u,v)$ denote the diameter of a metric $(X,d)$. For $x \in X$, $0 \leq r \leq \Delta$, the {\em ball of radius $r$ around $x$} is $B(x,r) := \{x' \in X \mid d(x',x) \leq r\}$. Throughout the paper, we use balls of the form $B(v,r)$ where $v$ is a client and $r \in \RR$. To circumvent the potentially infinite number of radii, the radii can be discretized into $\Ie = \{\eps, 2\eps, \dots, \ceil{\Delta/\eps}\eps\}$ for a small constant $\eps = O(1/n^2)$. Thereupon, we can appeal to the following lemma to bound the size of $\Ie$ by $O(n^5)$.
\begin{lemma}[Rewording of Lemma 4.1, \cite{Ahmadian2017BetterGF}]\label{lem:delta-vs-n}
Losing a factor of $\paren{1+\frac{100}{n^2}}$, we can assume that for any $u,v \in C$, $1 \leq d(u,v) \leq n^3$.
\end{lemma}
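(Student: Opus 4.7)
The plan is to execute a standard scaling-and-truncation preprocessing. First, in polynomial time, compute a constant-factor upper bound $T$ on $\opt$: reduce \contkmed to the discrete $k$-median on $F=C$ (paying the $\beta=2$ factor of the intro) and invoke any known constant-factor approximation for discrete $k$-median. This gives $T$ with $\opt \le T \le c\cdot \opt$ for some absolute constant $c$. Rescale the whole metric by $1/T$, so that after rescaling $\opt \in [1/c, 1]$; this is the working ``unit'' for the remaining arguments.

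Next, define a new metric $d'$ on $X$ by two independent operations: a floor $d'(u,v) := \max(d(u,v), \delta)$ with $\delta = 1/n^2$ applied only to client-client pairs (to enforce the lower bound), and a ceiling $d'(x,y) := \min(d'(x,y), R)$ with $R = n$ applied everywhere (to enforce the upper bound). Verify by a brief case analysis that both operations preserve the triangle inequality: truncation-from-above is standard, and floor-from-below on a subset of pairs can be realized by the usual ``metric completion'' (replacing each changed $d(u,v)$ by the shortest-path distance in the resulting weighted graph, which only pushes some distances down and is never below $\delta$ by construction). After both operations, a final rescale by $1/\delta = n^2$ brings all client-client distances into $[1, n^3]$.

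The approximation loss comes from two sources. The floor can inflate each client's service cost by at most $\delta = 1/n^2$, for a total additive increase of at most $n\cdot \delta = 1/n$; since $\opt \ge 1/c$ after rescaling, this is a multiplicative $(1+O(1/n^2))$ increase. The ceiling, on the other hand, does not change the optimal cost: in the optimal solution every client pays at most $\opt \le 1$, which is vastly smaller than $R = n$, so no client-to-its-optimal-center distance is ever truncated, and hence $\opt_{d'} = \opt_d$. Combining, any $\alpha$-approximate solution on $(X,d')$ has cost at most $\alpha\cdot \opt_{d'} = \alpha\cdot \opt_d$ under $d'$, which transfers back to $(X,d)$ with at most the same additive $1/n$ slack from the floor, yielding a $(1 + 100/n^2)\alpha$-approximation on the original instance.

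The main technical obstacle is the ceiling step: one must rule out that solutions on $d'$ exploit truncation to appear artificially cheap. This is precisely where the gap between $\opt \le 1$ and the truncation threshold $R = n$ is needed: any center that is ``close in $d'$ but far in $d$'' would have $d$-distance at least $R \gg \opt$, which can never beat the optimum solution on $(X,d)$ that uses only centers within distance $\opt$. Thus the minimum-cost solution on $(X,d')$ is realized by a configuration whose $d$-cost and $d'$-cost differ by at most the floor slack, and the $(1+100/n^2)$ factor absorbs this slack with room to spare once the constants $\delta$, $R$, and the rescale factor are chosen as above.
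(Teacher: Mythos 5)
First, note that the paper itself does not prove this statement: it is presented as a direct rewording of Lemma~4.1 of Ahmadian et al.\ and cited without proof, so there is no in-paper argument to compare yours against. Judged on its own merits, the scaling-plus-truncation template you use is the right kind of argument, but two steps fail as written. Your ``metric completion'' repair of the selective floor does not work: after raising only the client-client pairs to $\delta$, shortest paths in the reweighted graph may route through a non-client point $w \in X \setminus C$, whose incident distances were never floored, so the completed distance is still at most $d(u,w)+d(w,v)$ and can remain below $\delta$. The floor simply does not stick under that completion. (The fix is to floor \emph{all} pairs, $d'(x,y) := \max(d(x,y),\delta)$, which one can verify directly remains a metric and has the same cost impact; but that is not what your write-up does, and the claim ``never below $\delta$ by construction'' is false.)

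Second, the constants do not jointly give the claimed bounds. With $\delta = 1/n^2$, the total additive floor-inflation is $n\delta = 1/n$, and dividing by $\opt \ge 1/c$ yields a multiplicative loss of $1 + c/n$, i.e.\ $(1+\Theta(1/n))$, not the lemma's $(1+O(1/n^2))$. Forcing a $(1+100/n^2)$ loss requires $\delta = O(1/n^3)$; but then the final rescale factor is $\Theta(n^3)$ and the ceiling $R = n$ places the client-client diameter at $\Theta(n^4)$, not $n^3$. Shrinking $R$ to a constant restores the $n^3$ range, but your ceiling-transfer argument needs the approximation ratio $\alpha$ to lie strictly below $R$ (so that no term of the $d'$-cost is ever saturated), so $R$ must be chosen as a function of the target $\alpha$ rather than being an absolute constant---a dependence you do not set up. As stated, the three knobs $\delta$, $R$, and the rescale factor are over-constrained, and they do not simultaneously produce both the factor $(1+100/n^2)$ and the range $[1,n^3]$.
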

For simplicity of exposition, we present our techniques using radii in $\RR$, and observe that discretizing to $\Ie$ incurs an additive loss of at most $O(n\eps) = O(1/n)$ in our guarantees. We also note that $\log\opt \leq \log(n\Delta) = O(\log n)$ by the above, which enables us to efficiently binary-search over our guesses $\opt_g$.

\section{Continuous \texorpdfstring{$\lambda$}{lambda}-UFL}\label{sec:ufl}
We start this section with our $2.32$-approximation for \conlamufl (\Cref{thm:ufl-approx}). For this, we introduce a new linear programming formulation, and adapt the rounding algorithm of Shmoys-Tardos-Aardal to the new program. The resulting procedure exhibits our main ideas, and serves as a warm-up for the remaining sections. Also, in \Cref{sec:ufl:hardness}, we prove that it is \NP-hard to approximate \conlamufl within a factor of $2-o(1)$, using ideas due to Cohen-Addad, Karthik, and Lee~\cite{cohen-addad-et-al}. This shows that the continuous version cannot be approximated as well as the discrete version, which has a best-known approximation factor of 1.463~\cite{Li2013}.

\subsection{Approximation algorithm}\label{sec:ufl:approx}
This subsection is dedicated to proving the following theorem:
\begin{theorem}\label{thm:ufl-approx}
There is a polynomial time algorithm that, for an instance of \conlamufl with optimum $\opt$, yields a solution with cost at most $(\frac{2}{1-e^{-2}} + \eps)\opt < 2.32\cdot \opt$. Here $\eps = O(\frac{1}{n^2})$.
\end{theorem}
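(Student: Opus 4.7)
The plan is to introduce a new LP relaxation with polynomially many variables and exponentially many constraints, and solve it via round-or-cut using the ellipsoid method. The LP has variables $y(v,r) \in [0,1]$ for each client $v \in C$ and each discretized radius $r \in \Ie$, meant to represent the fractional mass of facilities the LP places inside $B(v,r)$, along with connection-cost variables $C_v \ge 0$. I impose the monotonicity constraints $y(v,r) \le y(v,s)$ for $r \le s$, the Markov-type linkage $C_v \ge r\bigl(1 - y(v,r)\bigr)$ for all $v,r$, the integral linkage $C_v \ge \int_0^\infty (1 - y(v,r))\,dr$, and---the exponential family---the budget constraint
\[
\lambda \sum_{B(v,r) \in \mathcal{B}} y(v,r) + \sum_{v \in C} C_v \le \opt_g
\]
for every collection $\mathcal{B}$ of pairwise-disjoint balls. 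Any integer solution to \conlamufl of cost at most $\opt_g$ induces a valid assignment, so after binary-searching over $\opt_g$ I may assume feasibility whenever $\opt_g \ge \opt$.

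The round-or-cut subroutine, given a candidate $(y,C)$ and a guess $\opt_g$, either produces an integer solution of cost at most $\frac{2}{1-e^{-2}}\,\opt_g$ or returns a violated constraint. The polynomial-many constraints are verified directly. For each $\alpha$ drawn from a polynomial-sized discretization of $[e^{-2},1]$, I define $R_v(\alpha) = \min\{r \in \Ie : y(v,r) \ge \alpha\}$ and run Shmoys--Tardos--Aardal style greedy clustering: repeatedly pick the unclustered client with smallest $R_\cdot(\alpha)$ as a leader $v_j$ and gather into its cluster $T_{v_j}$ every unclustered $u$ with $B(u, R_u(\alpha)) \cap B(v_j, R_{v_j}(\alpha)) \neq \emptyset$. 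This yields a pairwise-disjoint family $\mathcal{B}^\alpha = \{B(v_j, R_{v_j}(\alpha))\}_j$ of leader balls, each with LP-mass at least $\alpha$. I then test the budget constraint at $\mathcal{B}^\alpha$; a violation is the separating hyperplane to feed to the ellipsoid algorithm. Otherwise the candidate integer solution opens a facility at each $v_j$ and routes each $u \in T_{v_j}$ to $v_j$.

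The cost analysis at a fixed threshold $\alpha$ has two parts. The triangle inequality applied to the witness point of $B(u, R_u(\alpha)) \cap B(v_{j(u)}, R_{v_{j(u)}}(\alpha)) \ne \emptyset$, combined with leaders being chosen in order of increasing $R_\cdot(\alpha)$, gives $d(u, v_{j(u)}) \le R_u(\alpha) + R_{v_{j(u)}}(\alpha) \le 2 R_u(\alpha)$, so the connection cost is at most $2\sum_v R_v(\alpha)$. The (satisfied) budget constraint applied to $\mathcal{B}^\alpha$, together with $y(v_j, R_{v_j}(\alpha)) \ge \alpha$, bounds the facility cost by $(\opt_g - \sum_v C_v)/\alpha$. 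Optimizing these two bounds at any single $\alpha$ only gives a factor of $3$; the crucial step is to average over $\alpha \sim \mathrm{Unif}[e^{-2},1]$. The integral linkage translates via change of variables into $\int_0^1 R_v(\alpha)\,d\alpha \le C_v$, which gives $\mathbb{E}_\alpha[\sum_v R_v(\alpha)] \le \frac{1}{1-e^{-2}}\sum_v C_v$, while a direct computation yields $\mathbb{E}_\alpha[1/\alpha] = \frac{2}{1-e^{-2}}$, so the expected total cost is at most $\frac{2}{1-e^{-2}}\,\opt_g$.

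The hard part will be carrying the expectation-based argument cleanly through round-or-cut: I must argue that whenever \emph{every} discretized $\alpha$ produces a disjoint family on which the budget constraint holds, at least one such $\alpha$ also produces an integer solution whose cost meets the $\frac{2}{1-e^{-2}}\,\opt_g$ bound (absorbing the $\eps = O(1/n^2)$ loss coming from the $\Ie$-discretization and from the binary search on $\opt_g$). Feeding every discovered violated constraint to the ellipsoid algorithm then yields a polynomial-time procedure achieving the claimed $\bigl(\tfrac{2}{1-e^{-2}} + \eps\bigr)$-approximation.
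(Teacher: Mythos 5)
Your proposal is correct and follows essentially the same route as the paper: the same LP with $y(v,r)$ ball-mass variables, the same Markov/integral linkages and budget constraints, the same Shmoys--Tardos--Aardal-style greedy filtering with threshold $\alpha$, the same use of the disjoint leader balls as the separating inequality, and the same averaging over $\alpha \sim \mathrm{Unif}[e^{-2},1]$ (with $\mathbb{E}[1/\alpha] = \frac{2}{1-e^{-2}}$ balancing against $\mathbb{E}[\sum_v R_v(\alpha)] \le \frac{1}{1-e^{-2}}\sum_v C_v$). The only cosmetic differences are that you phrase the clustering rule as ``balls intersect'' rather than the paper's $d(u,j) \le R_u(\alpha) + R_j(\alpha)$ (both ensure leader-ball disjointness and give $d(u, v_{j(u)}) \le 2R_u(\alpha)$), and you carry the $\opt_g - \sum_v C_v$ term through the expectation slightly differently, but the final bound $\frac{2}{1-e^{-2}}\opt_g$ is obtained identically.
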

We design the following linear program for \conlamufl. We use variables $C_v$ for the connection cost of each client $v$, and $y(v,r)$ for the number of facilities opened within each ball of the form $B(v,r)$. We also use a guess of the optimum $\opt_g$, which we will soon discuss how to obtain. Throughout, we use $y(B)$ as shorthand for $y(v,r)$ where $B=B(v,r)$. 
\begin{align}
\lambda \sum_{B \in \B}y(B) + \sum_{v \in C}C_v &\leq \opt_g & \forall \B \subseteq  \{B(v,r)\}_{\substack{v \in C \\ r \in \RR}} \text{ pairwise disjoint} \tag{\sf UFL}\label{ufllp:cost}\\
\int_0^\infty \paren{1-y(v,r)}\dee r &\leq C_v & \forall v \in C, r \in \RR\tag{{\sf UFL}-1}\label{ufllp:cov}\\
y(v,r) &\leq y(v,r')& \forall v \in C,\;r,r' \in \RR\text{ s.t. }r \leq r'\tag{{\sf UFL}-2}\label{ufllp:mon}\\
y(v,r) \geq 0,\ & C_v \geq 0& \forall v \in C, r \in \RR\notag
\end{align}
Observe that, given a solution $S \subseteq X$ of cost at most $\opt_g$, we can obtain a feasible solution of \ref{ufllp:cost} as follows. For client $v \in C$, we set $C_v = d(v,S)$. For $v \in C, r \in \RR$, we set $y(v,r) = 0$ for $r < d(v,S)$ and $y(v,r) = 1$ for $r \geq d(v,S)$.

Our approach is to round a solution $(C,y)$ of \ref{ufllp:cost}. Observe that there are polynomially many constraints of the form  \eqref{ufllp:cov} and \eqref{ufllp:mon}; hence, we can efficiently obtain a solution $(C,y)$ that satisfies them. So for the remainder of this section, we assume that those constraints are satisfied. On the other hand, there are infinitely many constraints of type \eqref{ufllp:cost}. This is why we employ a round-or-cut framework via the ellipsoid algorithm~\cite{ellipsoid}. We begin with an arbitrary $\opt_g$, and when ellipsoid asks us if a proposed solution $(C,y)$ is feasible, we run the following algorithm.

The algorithm inputs $\alpha < 1$, and defines $\ra(v)$ as the minimum radius at which client $v \in C$ has at least $\alpha$ mass of open facilities around it. First, all clients are deemed {\em uncovered} $(U = C)$. Iteratively, the algorithm picks the $j$, i.e the uncovered client, with the smallest $\ra(j)$. $j$ is put into the set $\RepsCya$. Any client $v$ within distance $\ra(j) + \ra(v)$ of $j$ is considered a $\child$ of $j$ and is now {\em covered}. When all clients are covered, i.e. $U = \emptyset$, the algorithm outputs $\RepsCya$.

\begin{algorithm}[H]\caption{Filtering for \conlamufl}\label{alg:ufl-reps}
\begin{algorithmic}[1]
\Require A proposed solution $(\{C_v\}_{v \in C},\{y(v,r)\}_{v \in C, r \in \RR})$ for \ref{ufllp:cost}, parameter $\alpha \in (e^{-2},1)$
\State $\ra(v) \gets \min\{r \in \RR\mid y(v,r) \geq \alpha\}$ for all $v \in C$
\State $\RepsCya \gets \emptyset$ \Comment{``representative'' clients}
\State $U \gets C$ \Comment{``uncovered'' clients}
\While{$U \neq \emptyset$}
	\State \label{alg:ufl:ln:rep} Pick $j \in U$ with minimum $\ra(j)$
	\State \label{alg:ufl:ln:child} $\child(j) \gets \{v \in U \mid d(v,j) \leq \ra(v) + \ra(j)\}$
	\State $U \gets U \setminus \child(j)$
	\State $\RepsCya \gets \RepsCya + j$
\EndWhile
\Ensure $\RepsCya$
\end{algorithmic}
\end{algorithm}

Notice that, by construction, the collection of balls $\braces{B(j,\ra(j))}_{j \in \RepsCya}$ is pairwise disjoint. Hence, the following constraint, which we call $\SepCya$, is of the form \eqref{ufllp:cost}:
\begin{align}
    \lambda \sum_{j \in \RepsCya}y(j,\ra(j)) + \sum_{v \in C}C_v \leq \opt_g\tag{$\SepCya$}\label{SepCya}
\end{align}
We will show that
\begin{lemma}\label{lma:uflcost}
If $(C,y)$ satisfies \eqref{ufllp:cov}, \eqref{ufllp:mon}, and \ref{SepCya}, then there exists a suitable $\alpha \in (e^{-2},1)$ for which the output of \Cref{alg:ufl-reps} has cost at most $\frac 2 {1-e^{-2}}\opt_g$.
\end{lemma}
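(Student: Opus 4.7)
The plan is to decompose the cost of \Cref{alg:ufl-reps}'s output into facility and connection parts, bound each in terms of $\alpha$ and $S := \sum_{v \in C} C_v$, and then use a uniform averaging argument over $\alpha \in (e^{-2},1)$ to obtain the stated factor. The existence of a good $\alpha$ will then follow because only polynomially many distinct outputs of the algorithm arise as $\alpha$ varies (the quantity $\ra(v)$ is a step function of $\alpha$ with breakpoints at the values of $y(v,\cdot)$), so enumerating these and taking the best beats the expectation.

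First I would verify the two combinatorial facts about the algorithm that make \ref{SepCya} meaningful and bound the per-client connection cost. (i) Disjointness: if $j$ is picked strictly before $j'$ in \RepsCya, then $j'$ was not pulled into $\child(j)$, so $d(j,j') > \ra(j) + \ra(j')$, showing that $\{B(j,\ra(j))\}_{j \in \RepsCya}$ is a pairwise disjoint family and \eqref{SepCya} is a valid instance of \eqref{ufllp:cost}. (ii) Per-client cost: if $v \in \child(j)$, then $v$ was uncovered when $j$ was picked, so by the min-$\ra$ choice of $j$ we have $\ra(j) \le \ra(v)$, hence $d(v,j) \le \ra(v)+\ra(j) \le 2\ra(v)$. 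Thus for any fixed $\alpha$, the total cost of the output is bounded by
\[
\lambda|\RepsCya| + 2\sum_{v \in C} \ra(v).
\]

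Next I would translate \eqref{SepCya} into an upper bound on $|\RepsCya|$: since $y(j,\ra(j)) \ge \alpha$ by definition of $\ra(j)$, we get $\alpha \lambda |\RepsCya| \le \opt_g - S$, i.e.\ $\lambda|\RepsCya| \le (\opt_g - S)/\alpha$. So the total cost for a fixed $\alpha$ is at most $(\opt_g-S)/\alpha + 2 \sum_v \ra(v)$. To relate $\sum_v \ra(v)$ to $S$, I would rewrite \eqref{ufllp:cov} using the layer-cake identity and \eqref{ufllp:mon}:
\[
C_v \;\ge\; \int_0^\infty (1 - y(v,r))\,dr \;=\; \int_0^1 r_t(v)\,dt,
\]
where $r_t(v) := \inf\{r: y(v,r) \ge t\}$ and monotonicity of $y(v,\cdot)$ ensures the inner set is a half-line.

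The main step is the averaging. Draw $\alpha$ uniformly from $(e^{-2},1)$. For the facility side, the normalizing constant $1/(1-e^{-2})$ and $\int_{e^{-2}}^1 \alpha^{-1}\,d\alpha = 2$ give
\[
\mathbf{E}_\alpha\!\left[\frac{\opt_g - S}{\alpha}\right] \;\le\; \frac{2(\opt_g - S)}{1 - e^{-2}}.
\]
For the connection side, bounding $\int_{e^{-2}}^1 r_\alpha(v)\,d\alpha \le \int_0^1 r_t(v)\,dt \le C_v$ and summing over $v$ yields
\[
\mathbf{E}_\alpha\!\left[2\sum_{v} \ra(v)\right] \;\le\; \frac{2S}{1-e^{-2}}.
\]
Adding these, the terms in $S$ cancel and the expected cost is at most $\frac{2\opt_g}{1-e^{-2}}$, so some $\alpha$ realizes this bound.

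The subtle point, which I would be careful to handle cleanly, is the layer-cake identity: it needs $y(v,r) \in [0,1]$ (which we may assume by adding $y \le 1$, consistent with integral solutions) together with $y(v,r) \to 1$ as $r \to \infty$ (forced by finiteness of $\int_0^\infty (1-y(v,r))\,dr \le C_v$). The choice of the interval $(e^{-2},1)$ for $\alpha$ is exactly the one that makes the expected facility multiplier $\mathbf{E}[1/\alpha] = 2/(1-e^{-2})$ match the expected connection multiplier $2/(1-e^{-2})$, which is the crux of why the target factor $2/(1-e^{-2})$ arises.
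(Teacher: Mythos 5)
Your proof is correct and follows essentially the same approach as the paper: pick $\alpha$ uniformly from $(e^{-2},1)$, bound the facility cost via $\mathbf{E}[1/\alpha]=2/(1-e^{-2})$, bound the per-client connection cost via $2\ra(v)$ and the layer-cake/change-of-variables identity tying $\int \ra(v)\,d\alpha$ to $C_v$, and observe the two coefficients balance at $\beta=e^{-2}$. Your presentation is if anything slightly tidier — you invoke \ref{SepCya} pointwise in $\alpha$ before taking expectation, so the intermediate bounds are genuine constants rather than $\alpha$-dependent quantities, and you explicitly flag the $y\le 1$ assumption needed for the layer-cake step — but these are stylistic refinements of the same argument, not a different route.
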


Thus, if we find that the desired approximation ratio is not attained, then it must be that \ref{SepCya} was not satisfied, and we can pass it to ellipsoid as a separating hyperplane. If ellipsoid finds that the feasible region of our linear program is empty, then we increase $\opt_g$ and try again. Otherwise, we obtain a solution $\RepsCya$ that attains the desired guarantees.




We now analyze \Cref{alg:ufl-reps} to prove \Cref{lma:uflcost}.
\begin{proof}[Proof of \Cref{lma:uflcost}]For this proof, we will fix $(C,y)$, and refer to $\RepsCya$ as $\Repsa$.

To prove a suitable $\alpha$ exists, assume $\alpha$ is picked uniformly at random from $(\beta,1)$ for some $0 < \beta < 1$; we will see later that $\beta = e^{-2}$ is optimal. Take $\Repsa$, the output of \Cref{alg:ufl-reps} on $(C,y)$. By definition of $\ra$, $\sum_{j \in \Repsa}y(j,\ra(j)) \geq \alpha\abs\Repsa$. Thus $\costf(\Repsa) = \lambda \abs{\Repsa} \leq \frac{1}{\alpha}\cdot \lambda \sum_{j \in \Reps} y(j,\ra(j))$, which implies
\begin{equation}\label{eq:uflrepfaccost}
    \Exp[\costf(\Repsa)] \leq \frac{\ln(1/\beta)}{(1-\beta)} \lambda \sum_{j \in \Repsa} y(j,\ra(j)).
\end{equation}

To bound the expected connection cost, take $v \in C$ and observe that, since all the clients are ultimately covered in \Cref{alg:ufl-reps}, there has to exist $j \in \Repsa$ for which $v \in \child(j)$. By construction of $\child$, $d(v,j) \leq \ra(v) + \ra(j)$, which is at most $2\ra(v)$ by our choice of $j$ in Line~\ref{alg:ufl:ln:rep}. Thus, for any client $v$, we get $d(v,\Repsa) \leq d(v,j) \leq 2\ra(v)$. So we are left to bound $\Exp\bracket{\ra(v)}$ for an arbitrary client $v \in C$.

We have that $\Exp\bracket{\ra(v)} = \frac 1 {1-\beta} \int_\beta^1 \ra(v)\dee\alpha
\leq \frac 1 {1-\beta} \int_0^1 \ra(v)\dee\alpha$. We notice that at $\alpha = y(v,r), \ra(v) = r$. Also, $r_0(v) = 0$. So given \eqref{ufllp:mon} for all balls $B(v,r)$ with $r \in \RR$, we can apply a change of variable to the integral to get $\int_0^1 \ra(v)\dee \alpha = \int_{r_0(v)}^{r_1(v)}\paren{y(v,r_1(v))-y(v,r)}\dee r \leq \int_0^\infty\paren{1-y(v,r)}\dee r\leq C_v$,
where the last inequality is by \eqref{ufllp:cov}. Thus we have $\Exp\bracket{\ra(v)} \leq \frac {C_v} {1-\beta}$. Summing $d(v, \Repsa)$ over all $v \in C$ we have
\begin{equation}\label{eq:uflrepconcost}
    \Exp[\costc(\Repsa)] = \sum_{v \in C} \Exp[d(v,\Repsa)] \leq 2\sum_{v \in C} \Exp[\ra(v)] \leq \frac 2 {1-\beta}\sum_{v \in C}C_v.
\end{equation}

To balance $\costf$ from \eqref{eq:uflrepfaccost} and $\costc$ from \eqref{eq:uflrepconcost}, we set $\beta = e^{-2}$. The expected \conlamufl cost of $\Repsa$ is, using \ref{SepCya},
\begin{equation*}
    \Exp[\costf(\Repsa) + \costc(\Repsa)] \leq \frac 2 {1-e^{-2}}\paren{\lambda \sum_{j \in \Repsa} y(j,\ra(j)) + \sum_{v \in C}C_v} \leq \frac {2\cdot\opt_g} {1-e^{-2}}.
\end{equation*}
Since the bound holds in expectation over a random $\alpha$, there must exist an $\alpha \in (\beta,1)$ that satisfies it deterministically.
\end{proof}
To obtain a suitable $\alpha$, we can adapt the derandomization procedure from the discrete version~\cite{shmoys-tardos-aardal-1997}. The procedure relies on having polynomially many interesting radii; for this, we recall that while we have used $r \in \RR$ for simplicity, our radii are actually $r \in \Ie$, $\abs\Ie = O(n^5)$.

\subsection{Hardness of approximation}\label{sec:ufl:hardness}

Our hardness result for this problem is as follows:
\begin{restatable}{theorem}{uflhard}\label{thm:ufl-hard}
Given an instance of \conlamufl and $\eps > 0$, it is $\NP$-hard to distinguish between the following:
\begin{itemize}
    \item There exists $S \subseteq X$ such that $\costf(S) + \costc(S) \leq (1+6\eps)n$
    \item For any $S \subseteq X$, $\costf(S) + \costc(S) \geq (2-\eps)n$
\end{itemize}
\end{restatable}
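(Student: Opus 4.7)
The plan is to adapt the factor-$2$ hardness reduction of Cohen-Addad, Karthik, and Lee for \contkmed in $(\RR^m, \ell_\infty)$ to the \conlamufl setting. The natural starting point is a gap version of a covering/independent-set problem on graphs (or bounded-degree hypergraphs) that is NP-hard, tuned so that ``YES'' instances admit a small cover and ``NO'' instances require essentially twice as large a cover.

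First I would construct the \conlamufl instance in $\ell_\infty^{|V|}$ by placing one client $p_e$ per edge at the (scaled) characteristic vector of $e$, so that two distinct clients lie at $\ell_\infty$-distance exactly $1$ from each other, and by setting $\lambda$ to a carefully tuned constant so that the facility-vs-connection tradeoff produces the desired $2$ factor. The key geometric property this exploits is the CAKL observation that in $\ell_\infty$, a collection of binary points sharing a common coordinate equal to $1$ admits a center point at $\ell_\infty$-distance at most $1/2$ from all of them, whereas two points whose supports are disjoint have distance $1$ from any common center if the center is even ``better'' than an arbitrary midpoint.

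For the YES case, given a small cover $T$, open one facility for each $v\in T$, placed so that it serves the ``star'' of all clients $p_e$ with $v \in e$. The facility cost is $\lambda|T| = O(\eps n)$ by the YES promise, and the connection cost can be bounded by approximately $n$, so the total sits inside $(1+6\eps)n$ once the constants are chosen. For the NO case I would argue that any $S \subseteq X$ induces a ``star cover'' of the underlying graph: every opened facility $f$ can serve clients $p_e$ at $\ell_\infty$-distance strictly less than $2 - \eps$ only when the edges $e$ all share a common vertex, since otherwise the coordinates at two disjoint edges $e, e'$ would need to be simultaneously close to $1$ at $f$, a contradiction. Hence the open facilities yield a vertex cover; the NO promise then forces either $\lambda|S|$ large or many clients paying connection cost close to $2$, and the tuning of $\lambda$ makes $\costf(S)+\costc(S) \geq (2-\eps)n$ in either case.

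The main obstacle, and the step that needs the most care, is the NO-case analysis: unlike in \contkmed where $|S|=k$ is fixed, here there is no upper bound on the number of open facilities, so one must prove a packing-style lower bound that shows $\lambda|S| + \costc(S)$ is essentially twice the minimum star cover of $G$, minus lower-order terms. This is where the precise choice of $\lambda$, together with a coordinate-by-coordinate analysis of optimally placed facility points in $\ell_\infty$, must be executed carefully. Once this lemma is established, the gap instance and the hardness of the source covering problem translate directly into the claimed $(1+6\eps)n$ versus $(2-\eps)n$ dichotomy.
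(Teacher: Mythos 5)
Your construction does not create the required gap, and the claimed geometric dichotomy it rests on is false. If clients are placed at $\{0,1\}$-valued edge characteristic vectors $p_e \in \RR^{|V|}$, then for \emph{any} collection of clients, a single facility $f$ can serve all of them at $\ell_\infty$-distance $1/2$: set $f(u) = 1/2$ on every coordinate where some clients have a $1$ and others a $0$, and set $f(u)$ to the common value elsewhere. In particular, two clients with disjoint supports admit the coordinatewise midpoint as a center at distance exactly $1/2$, not $1$; so your ``star cover'' argument in the NO case collapses, and the stated observation that disjoint-support binary points must pay distance $1$ to a common center is incorrect. Symmetrically, the YES case does not behave as you describe: a facility serving the star at $v$ cannot get strictly below $1/2$ either (on any coordinate $u \neq v$ adjacent to $v$, it must split the $0$/$1$ disagreement between edges $\{v,u\}$ and $\{v,u'\}$). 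The upshot is that one facility already serves everything at radius $1/2$, so YES and NO costs coincide up to the $\lambda$ term, and no choice of $\lambda$ separates them.

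The paper avoids this by indexing clients by \emph{vertices}, not edges, and by using signed, $\{0,\pm 2\}$-valued coordinates: for each $v \in V$ one places $A(v) \in \RR^{|E|}$ with $A(v)(e) = \pm 2$ on edges $e$ incident to $v$ (sign from a fixed orientation) and $0$ otherwise. This is what actually encodes adjacency in $\ell_\infty$: adjacent vertices disagree by $4$ on their shared edge while non-adjacent vertices differ by at most $2$, so an independent set can be served by one facility at radius $1$ but a matched pair forces cost $4$. The source hardness is also different from a covering problem: it is the CAKL ``almost 4-colorable vs. no large independent set'' gap — YES instances have four pairwise-disjoint independent sets of size $\approx n/4$ each, and NO instances have no independent set of size $\eps' n$. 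The YES case opens four facilities $s_1,\dots,s_4$ with $\pm 1$ entries on the internal edges of $V_i$, yielding connection cost $\approx n$ and facility cost $4\lambda$ with $\lambda = \eps n$. The NO case uses a matching lower bound (each cluster $V_i$ of the optimum contains a matching of size $\geq (|V_i|-\eps'n)/2$), which is what replaces the missing ``$|S|=k$'' constraint of \contkmed: any $r$ open facilities contribute $\lambda r \geq \eps n$, while the matchings force connection cost $\geq 2n - 2\eps n$, for a total of $(2-\eps)n$. Your instinct that the NO case needs a packing-style lower bound to compensate for the unbounded number of facilities is right, but the mechanism is the matching argument inside each cluster combined with $\lambda \cdot r \geq \eps n$, and it requires the signed, vertex-indexed construction to work.
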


Thus we exhibit hardness of approximation up to a factor of $\frac{2-\eps}{1+6\eps}$, which tends to $2$ as $\eps \to 0$. Our reduction closely follows the hardness proof for \contkmed \cite{cohen-addad-et-al}. We relegate the details to \Cref{appsec:ufl}.
\section{Continuous Fair \texorpdfstring{$k$}{k}-Median}\label{sec:fairness}
The main result of this section is the following theorem.
\begin{theorem}
\label{thm:main}
There exists a polynomial time algorithm for \contfairkmed that, for an instance with optimum cost $\opt$, yields a solution with cost at most $8\opt+\eps$, in which, each client $v \in C$ is provided an open facility within distance $8r(v)+\eps$ of itself. Here $\eps = O\paren{\frac 1 {n^2}}$.
\end{theorem}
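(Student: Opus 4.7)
The plan is to carry the round-or-cut framework of \Cref{sec:ufl} over to \contfairkmed, replacing the soft facility-opening penalty with a hard cardinality constraint and an explicit fairness condition. Using variables $\{C_v\}_{v \in C}$ and $\{y(v, r)\}_{v \in C, r \in \Ie}$ plus a guess $\opt_g$, I would write an LP with: concentric monotonicity $y(v, r) \le y(v, r')$ for $r \le r'$; non-concentric monotonicity $y(v, r) \le y(u, s)$ whenever $B(v, r) \subseteq B(u, s)$; the Markov cost constraints $C_v \ge r\,(1 - y(v, r))$ for every $r$; $\sum_v C_v \le \opt_g$; the fairness constraint $y(v, r(v)) \ge 1$; and, for every pairwise disjoint family $\B$ of balls of the form $B(v, r)$, the cardinality constraint $\sum_{B \in \B} y(B) \le k$. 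The first four families contain only polynomially many constraints (since $|\Ie| = O(n^5)$), while the disjoint-ball constraints are exponentially many and are exactly what the separation oracle must handle. Any integral solution $S$ with $|S| = k$ and $d(v, S) \le r(v)$ satisfies the LP via $y(v, r) = \mathbf{1}[B(v, r) \cap S \ne \emptyset]$ and $C_v = d(v, S)$.

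Given a proposed $(C, y)$, I would pick a parameter $\alpha \in (0, 1)$ and run the greedy filter of \Cref{alg:ufl-reps} with $\ra(v) = \min\{r : y(v, r) \ge \alpha\}$, obtaining representatives $\Repsa$ whose balls $\{B(j, \ra(j))\}_{j \in \Repsa}$ are pairwise disjoint. The fairness constraint forces $\ra(v) \le r(v)$ (since $y(v, r(v)) \ge 1 \ge \alpha$), and the same argument as in the proof of \Cref{lma:uflcost} then gives $d(v, \Repsa) \le 2\,\ra(v) \le 2r(v)$ together with a per-client connection-cost bound of $2\,\ra(v) \le 2\,C_v/(1-\alpha)$ via the Markov constraints. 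If $|\Repsa| > k$, I inspect the $\Sep$-style constraint $\sum_{j \in \Repsa} y(j, \ra(j)) \le k$: a violation is fed to ellipsoid as a separating hyperplane, and otherwise $\alpha|\Repsa| \le k$, giving $|\Repsa| \le k/\alpha$ and triggering a consolidation stage that ports the pairing-based reduction of Chakrabarty and Negahbani~\cite{chakrabarty-negahbani-fairness}, itself building on Alamdari and Shmoys~\cite{AlamdS2017}. The consolidation matches pairs of surviving representatives whose inflated balls are comparable under non-concentric monotonicity, drops the rep with larger $\ra$, and reassigns its children to the survivor; with $\alpha$ and the inflation constants tuned (taking $\alpha = 1/2$ as a natural starting point), the reassignment blowup increases total cost to at most $8\,\opt_g$ and each client's connection distance to at most $8r(v)$, matching the claimed bound up to the $O(1/n)$ additive loss inherent in discretizing radii to $\Ie$.

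The principal obstacle is the consolidation step and its interface with the LP. In the discrete setting of Chakrabarty-Negahbani the fractional openings live at explicit facility locations and pairings can be certified by ``redirecting'' fractional mass among those locations; here we only have the aggregate values $y(v, r)$ and no fixed facility set, so the plan must (i) choose which representatives to pair using only concentric and non-concentric monotonicity, (ii) argue that after each merge the enlarged balls around the survivors still form a family to which the cardinality constraint $\sum y(B) \le k$ applies, so that failure to reach $|\text{survivors}| \le k$ still yields a valid separating hyperplane, and (iii) charge the blowups in radius and connection cost to the Markov and fairness constraints without double-counting. The non-concentric monotonicity $y(v, r) \le y(u, s)$ whenever $B(v, r) \subseteq B(u, s)$ is exactly the tool that lets a surviving representative ``absorb'' the $y$-mass of its dropped partner, closing the loop between the rounding and cutting steps and giving the claimed $(8, 8)$ bicriteria guarantee in polynomial time via the ellipsoid method.
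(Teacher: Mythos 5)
Your LP setup, the insistence on non-concentric monotonicity, and the overall round-or-cut architecture all match the paper. The filtering is slightly different: the paper uses the deterministic radius $R(v)=\min\{r(v),2C_v\}$ rather than the UFL-style $\ra(v)$ with a free parameter $\alpha$; but since the fairness and Markov constraints imply $\ra(v)\le\min\{r(v),2C_v\}$ when $\alpha=\frac12$, this choice is compatible and you still get $d(v,\Repsa)\le 2r(v)$ and disjoint representative balls. So the first half of your argument is sound. (Note that the paper does not randomize $\alpha$ here; the randomization in the UFL proof served to balance facility against connection cost, a trade-off that does not exist once $k$ is a hard budget.)

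The genuine gap is the consolidation/rounding step, which you correctly identify as the crux but then leave as a sketch that does not actually close. The paper does not pair up representatives and ``drop the one with larger radius.'' Instead, it writes a discrete LP on $\Reps$ alone (with weights $\abs{\child(j)}$), uses the half-mass bound $y(j,a(j))\ge\frac12$ where $a(j)=d(j,s(j))/2$ and $s(j)$ is the nearest other representative, consolidates into a fractional solution $(\bar x,\bar z)$ with $\bar z_j\ge\frac12$, then rounds via Charikar--Guha--Shmoys--Tardos: first to a $\{\frac12,1\}$-integral solution by local transfers, then to an integral $S$ of size exactly $k$ by selecting alternate levels of a forest built on the $(j\to s(j))$ edges. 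The crucial structural guarantee this rounding delivers, and that your pairing sketch does not, is Lemma~\ref{lma:fair:use-cgst:j-or-sj}: for every $j\in\Reps$, at least one of $j$ or $s(j)$ is opened, so $d(j,S)\le d(j,s(j))=2a(j)$. This is exactly what powers the fairness bound in Lemma~\ref{lma:fair:fair}, where one picks $w\in B(v,r(v))\setminus B(j,a(j))$ (possible by non-concentric monotonicity and $y(j,a(j))<1$), bounds $a(j)<d(j,w)\le 3r(v)$, and concludes $d(j,S)\le 2a(j)<6r(v)$, hence $d(v,S)\le 8r(v)$. A greedy matching that ``drops the rep with larger $\ra$'' need not keep $s(j)$ around for every dropped $j$, need not land on exactly $k$ open centers, and does not by itself give a handle on how far a dropped representative's replacement is. Your item (iii) (charging the blowup without double-counting) is also not resolved; the paper resolves it through the explicit discrete LP and the $\abs{\child(j)}C_j\le\sum_{v\in\child(j)}C_v$ telescoping, which your sketch does not reproduce. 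Finally, the role you assign to non-concentric monotonicity --- letting a survivor ``absorb'' its partner's $y$-mass --- is not what the constraint is used for; it is used once, to find the witness point $w$ in the fairness proof.
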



We create a round-or-cut framework, via the ellipsoid algorithm~\cite{ellipsoid}, that adapts the Chakrabarty-Negahbani algorithm \cite{chakrabarty-negahbani-fairness} to the continuous setting. For this, we will modify the \ref{ufllp:cost} linear program to suit \contfairkmed. As before, $\opt_g$ is a guessed optimum, $C_v$ is the cost share of a client $v \in C$, and $y(v,r)$ represents the number of facilities opened in $B(v,r)$. There are two key modifications. First, we expand the monotonicity constraints of the form \eqref{ufllp:mon} to include {\em non-concentric} balls, which are crucial for adapting the fairness guarantee of Chakrabarty and Negahbani \cite{chakrabarty-negahbani-fairness}. Second, we enforce the fairness constraints by requiring $y(v,r(v)) \geq 1$ for each client $v \in C$.

\begin{align}
\sum_{v \in C}C_v &\leq \opt_g \tag{\lp}\label{fairlp:cost}\\
\sum_{B \in \B}y(B) &\leq k  &\forall \B \subseteq  \{B(v,r)\}_{\substack{v \in C \\ r \in \RR}} \text{ pairwise disjoint} \tag{\lp-1}\label{fairlp:k}\\
\int_0^\infty\paren{1-y(v,r)}\dee r &\leq C_v & \forall v \in C\tag{\lp-2}\label{fairlp:markovplus}\\
y(u,r) &\leq y(v,r')& \forall u,v \in C, r,r' \in \RR, B(u,r) \subseteq B(v,r') \tag{\lp-3}\label{fairlp:mon}\\
y(v,r(v)) &\geq 1 &\forall v \in C \tag{\lp-4}\label{fairlp:fair}\\
y(v,r) \geq 0,\ & C_v \geq 0& \forall v \in C, r \in \RR\notag
\end{align}


We will frequently use the following property of \ref{fairlp:cost}. See \Cref{appsec:lp} for the proof.
\begin{restatable}{lemma}{markovlp}\label{lma:markovplus-to-markov}
Consider a solution $(C,y)$ of \ref{fairlp:cost}. If for a client $v$, $(C,y)$ satisfies all constraints of the form \eqref{fairlp:markovplus} and \eqref{fairlp:mon} involving $v$, then for any $r_0 \in \RR$, $C_v \geq r_0\paren{1-y(v,r_0)}$.
\end{restatable}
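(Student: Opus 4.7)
The plan is to prove the stated inequality by a direct use of the two hypotheses: the integral constraint \eqref{fairlp:markovplus} and the monotonicity constraint \eqref{fairlp:mon}, viewed as an analogue of Markov's inequality.

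First, I would fix an arbitrary $r_0 \in \RR$ and look at the integral $\int_0^\infty (1 - y(v,r))\,\dee r$. For every $r \le r_0$, the ball $B(v,r)$ is contained in $B(v,r_0)$, so applying \eqref{fairlp:mon} (with $u = v$, $r' = r_0$) gives $y(v,r) \le y(v,r_0)$, hence $1 - y(v,r) \ge 1 - y(v,r_0)$. Dropping the integrand on $(r_0,\infty)$ and using this pointwise lower bound on $[0,r_0]$ yields
\begin{equation*}
\int_0^\infty (1 - y(v,r))\,\dee r \;\ge\; \int_0^{r_0} (1 - y(v,r))\,\dee r \;\ge\; \int_0^{r_0} (1 - y(v,r_0))\,\dee r \;=\; r_0\bigl(1 - y(v,r_0)\bigr).
\end{equation*}
Chaining this with \eqref{fairlp:markovplus} gives $C_v \ge \int_0^\infty (1 - y(v,r))\,\dee r \ge r_0(1 - y(v,r_0))$, which is what we want.

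There is essentially no obstacle here: the nontrivial content is hidden in the hypothesis that \eqref{fairlp:mon} holds for non-concentric balls, which a priori is a strictly stronger requirement than ordinary concentric monotonicity, but for this lemma we only need the concentric special case where $u=v$ and $B(v,r) \subseteq B(v,r_0)$. The only small caveat is that the integral above is taken over $r \in \RR$ in the simplified presentation; after discretizing to $r \in \Ie$ the same argument goes through as a sum, with an additive loss absorbed into the $O(1/n^2)$ error already tracked in \Cref{thm:main}.
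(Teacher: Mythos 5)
Your proof is correct and is essentially identical to the paper's: both fix $r_0$, truncate the integral in \eqref{fairlp:markovplus} to $[0,r_0]$, and then use the (concentric special case of the) monotonicity constraint \eqref{fairlp:mon} to lower-bound the integrand pointwise by $1-y(v,r_0)$, yielding $C_v \ge r_0\paren{1-y(v,r_0)}$. Your observations that only concentric monotonicity is needed here and that discretization to $\Ie$ preserves the argument are accurate side remarks, not deviations.
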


As before, we will only worry about the constraints that are exponentially many. These are \eqref{fairlp:k}. For this, we use ellipsoid~\cite{ellipsoid}. Given a proposed solution $(C,y)$ of \ref{fairlp:cost}, we construct $\RepsCy \subseteq C$, as follows.

We first perform a filtering step. For each $v \in C$, we define $R(v):=\min{r(v),2C_v}$.  In the beginning, all clients are ``uncovered'' (i.e. $U = C$). In each iteration, let $j \in U$ be the uncovered client with the minimum $R(j)$; and add $j$ to our set of ``representatives'' $\RepsCy$. Any $v \in U$ within distance $2R(v)$ of $j$ (including $j$ itself) will be added to the set $\child(j)$, and will be removed from $U$. After all clients are covered, i.e. $U = \emptyset$, the algorithm outputs $\RepsCy$. For a formal description of this algorithm, see \Cref{appsec:reps}.

For a $j \in \RepsCy$, let $s(j)$ be the closest client to $j$ in $\RepsCy\setminus \set{j}$. Let $a(j) := \frac{d(j,s(j))}2$. So the collection of balls $\braces{B(j,a(j))}_{j \in \RepsCy}$ is pairwise disjoint, and the following constraint, which we call $\SepCy$, is of the form \eqref{fairlp:k}.
\begin{align}
    \sum_{j \in \RepsCy}y\paren{j,a(j)} \leq k \tag{$\SepCy$}\label{SepCy}
\end{align}
We have that
\begin{lemma}\label{lma:half-mass}
If $(C,y)$ satisfies \eqref{fairlp:markovplus}-\eqref{fairlp:fair} and \ref{SepCy}, then
\begin{enumerate}[ref = \thetheorem.\arabic*]
    \item $\forall j \in \RepsCy,\;y\paren{j,a(j)} \geq \half$\label{lma:fairkmed:yball-half}
    \item $\abs{\RepsCy} \leq 2k$\label{lma:fairkmed:Reps-2k}
\end{enumerate}
\end{lemma}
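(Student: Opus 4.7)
The plan is to derive both claims from a single separation lemma: for every $j \in \RepsCy$, the nearest other representative $s(j)$ is at distance strictly greater than $2R(j)$, i.e.\ $a(j) > R(j)$. I would prove this by a case analysis on which of $j, s(j)$ enters $\RepsCy$ first. If $s(j)$ is added first, then at that moment $j$ is still uncovered; the minimum-$R$ pick rule gives $R(s(j)) \leq R(j)$, and since $j \notin \child(s(j))$ (otherwise $j$ could never become a representative), the child rule ``$v \in \child(s(j)) \iff d(v,s(j)) \leq 2R(v)$'' forces $d(j, s(j)) > 2R(j)$. If instead $j$ is added first, then symmetrically $R(s(j)) \geq R(j)$ and $s(j) \notin \child(j)$ give $d(j, s(j)) > 2R(s(j)) \geq 2R(j)$. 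Either way, $a(j) = d(j,s(j))/2 > R(j)$.

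With the separation $a(j) > R(j) = \min\{r(j), 2C_j\}$ in hand, part 1 splits by which term achieves the min. If $R(j) = r(j)$, then $a(j) > r(j)$, so concentric monotonicity \eqref{fairlp:mon} combined with the fairness constraint \eqref{fairlp:fair} yields $y(j, a(j)) \geq y(j, r(j)) \geq 1$. If instead $R(j) = 2C_j$, then $a(j) > 2C_j$, and applying \Cref{lma:markovplus-to-markov} at $r_0 = a(j)$ gives $C_j \geq a(j)(1 - y(j, a(j)))$, hence $y(j, a(j)) \geq 1 - C_j/a(j) > 1/2$. In either case $y(j, a(j)) \geq 1/2$, as claimed.

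Part 2 is then a one-liner: by part 1, $\tfrac{1}{2}|\RepsCy| \leq \sum_{j \in \RepsCy} y(j, a(j))$, and by \ref{SepCy} the right-hand side is at most $k$, so $|\RepsCy| \leq 2k$. The main subtlety lies entirely in the separation claim, which requires carefully tracking the order in which representatives are produced and exploiting the minimum-$R$ pick rule in both directions; once that is in place, each half of the lemma follows from a single application of either the monotonicity/fairness pair or the Markov-style inequality \Cref{lma:markovplus-to-markov}.
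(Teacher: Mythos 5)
Your proposal is correct and follows essentially the same route as the paper's proof. Both arguments hinge on the separation $a(j) = d(j,s(j))/2 \geq R(j)$ (the paper uses the non-strict inequality, which suffices; you derive the strict version), then split on whether $R(j) = r(j)$ or $R(j) = 2C_j$, using the fairness and monotonicity constraints in the first case and \Cref{lma:markovplus-to-markov} in the second, and close part 2 with the same one-line averaging against \ref{SepCy}. The one place you are more explicit than the paper is the justification of the separation inequality itself: the paper simply cites the line of the filtering algorithm defining $\child$, while you spell out the two-case argument (depending on which of $j$, $s(j)$ is extracted first, using the minimum-$R$ pick rule in the second case). That elaboration is correct and is indeed what the paper's terse citation is implicitly relying on.
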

\begin{proof}
Fix $j \in \RepsCy$. By Line~\ref{alg:fair-reps:ln:child} of \Cref{alg:fair-reps},
\begin{align}
a(j) = \frac{d(j,s(j))}2 \geq R(j)\label{eq:djsj-Rj}
\end{align}
So if $R(j) = r(j)$, then by \eqref{fairlp:mon} and \eqref{fairlp:fair}, $y\paren{j,a(j)} \geq y(j,r(j)) \geq 1$. Else $R(j) = 2C_j$. By \Cref{lma:markovplus-to-markov}, $C_j \geq a(j)\paren{1-y\paren{j,a(j)}}$.

If $C_j = 0$, then this implies $y\paren{j,a(j)}\geq 1$. 
Otherwise, substituting $a(j)$ by $R(j)$ from \eqref{eq:djsj-Rj} and setting $R(j) = 2C_j$ gives $C_j \geq 2C_j\paren{1-y\paren{j,a(j)}}$, i.e. $y\paren{j,a(j)} \geq \half$. Now, by \ref{SepCy}, we have $k \geq \sum_{j \in \RepsCy}y\paren{j,a(j)} \geq \half \cdot \abs{\RepsCy}$.
\end{proof}

So if we find that $\abs\RepsCy > 2k$, then \ref{SepCy} must be violated, and we can pass it to ellipsoid as a separating hyperplane. Hence in polynomial time, we either find that our feasible region is empty, or we get $(C,y)$ and $\RepsCy$ such that $(C,y)$ satisfies \eqref{fairlp:cost}, \eqref{fairlp:markovplus}-\eqref{fairlp:fair}, and \ref{SepCy}. In the first case, we increase $\opt_g$ and try again. In the latter case, we round $(C,y)$ further to attain our desired approximation ratios, via a rounding algorithm that we will now describe. This algorithm focuses on $\RepsCy$ and ignores other clients, as justified by the following lemma.

\begin{restatable}{lemma}{reps}
\label{lma:fair:reps}
$S \subseteq X$ be a solution to \contfairkmed. Consider a proposed solution $(C,y)$ of \ref{fairlp:cost} that satisfies \eqref{fairlp:cost}. Then $\sum_{v \in C}d(v,S) \leq \sum_{j \in \RepsCy}\abs{\child(j)}d(j,S) + 4\opt_g$.
\end{restatable}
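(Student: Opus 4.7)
}
The plan is to argue that the children sets $\{\child(j)\}_{j \in \RepsCy}$ form a partition of $C$, and then use the triangle inequality along each client-to-representative pair to charge $\sum_{v \in C} d(v,S)$ against $\sum_{j \in \RepsCy} |\child(j)|\cdot d(j,S)$ plus a ``detour'' term that is $O(\opt_g)$ in total. The detour term is controlled by the constraint \eqref{fairlp:cost}, which the solution $(C,y)$ is assumed to satisfy.

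First, I would note that the filtering algorithm (as described in the paragraph preceding \ref{SepCy}, with full pseudocode in the appendix) produces a collection $\{\child(j)\}_{j \in \RepsCy}$ that partitions $C$: every client is removed from $U$ exactly once, when it is assigned to the child set of the current representative. Moreover, by the definition of $\child(j)$, if $v \in \child(j)$ then $d(v,j) \leq 2R(v)$, where $R(v) = \min\{r(v), 2C_v\}$. In particular, $R(v) \leq 2C_v$ by definition, so
\begin{equation*}
    d(v,j) \;\leq\; 2R(v) \;\leq\; 4C_v \qquad \text{for all } v \in \child(j),\; j \in \RepsCy.
\end{equation*}

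Next, for every $v \in C$, let $j(v) \in \RepsCy$ be the unique representative with $v \in \child(j(v))$. By the triangle inequality, $d(v,S) \leq d(v,j(v)) + d(j(v),S)$. Summing over $v \in C$ and grouping by representative:
\begin{align*}
    \sum_{v \in C} d(v,S) \;&\leq\; \sum_{j \in \RepsCy} \sum_{v \in \child(j)} \bigl(d(v,j) + d(j,S)\bigr) \\
    &=\; \sum_{j \in \RepsCy} |\child(j)|\cdot d(j,S) \;+\; \sum_{j \in \RepsCy}\sum_{v \in \child(j)} d(v,j) \\
    &\leq\; \sum_{j \in \RepsCy} |\child(j)|\cdot d(j,S) \;+\; \sum_{v \in C} 4 C_v.
\end{align*}
Finally, invoking \eqref{fairlp:cost}, which asserts $\sum_{v \in C} C_v \leq \opt_g$, the right-hand side is at most $\sum_{j \in \RepsCy}|\child(j)|\cdot d(j,S) + 4\opt_g$, proving the lemma.

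There is no real obstacle here: the only non-routine step is unpacking the child condition from the filtering algorithm to get the inequality $d(v,j) \leq 4C_v$, and the rest is triangle inequality plus a single application of the LP feasibility constraint on $C_v$. The key design choice that makes this work is setting $R(v) = \min\{r(v), 2C_v\}$, which guarantees the $4C_v$ charging bound regardless of which of the two terms dominates.
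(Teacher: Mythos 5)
Your proof is correct and follows essentially the same route as the paper's: you observe that $\{\child(j)\}_{j\in\RepsCy}$ partitions $C$, apply the triangle inequality with $d(v,j)\leq 2R(v)\leq 4C_v$, and then bound $\sum_v C_v\leq\opt_g$ via \eqref{fairlp:cost}. No meaningful difference from the paper's argument.
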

The proof closely follows from a standard technique for the discrete version~\cite{cgst-primal-rounding,chakrabarty-negahbani-fairness}. We provide the proof in \Cref{appsec:fairness:reps}.

Our algorithm will also ignore facilities outside $\RepsCy$, so our solution will be a subset of $\RepsCy$. For the remainder of this section, we fix $(C,y)$, and refer to $\RepsCy$ as $\Reps$. We write the following polynomial-sized linear program, \ref{lpfairkmed}, where $\Reps$ are the only clients and the only facilities. The objective function of \ref{lpfairkmed} is a lower bound on $\sum_{j \in \Reps}\abs{\child(j)}d(j,S)$, so hereafter we compare our output with \ref{lpfairkmed}. We do not include fairness constraints in this program, and we will see later that it is not necessary to do so.

In \ref{lpfairkmed}, the variables $z_i$ for each $i \in \Reps$ denote whether $i$ is open as a facility. The variables $x_{ij}$ for $i,j \in \Reps$ denote whether the client $j$ uses the facility $i$. 

\begin{align}
    \textrm{minimize} \sum_{j \in \Reps}\abs{\child(j)}&\sum_{i \in \Reps}x_{ij}d(j,i) \tag{\DLP} \label{lpfairkmed}\\
    \sum_{i \in \Reps}z_i &\leq k\tag{\DLP-1}\label{lpfairkmed:k}\\
    \sum_{i \in \Reps}x_{ij} &= 1 &\forall j \in \Reps\tag{\DLP-2}\label{lpfairkmed:cov}\\
    x_{ij} &\leq y_i &\forall i,j \in \Reps\tag{\DLP-3}\label{lpfairkmed:sanity}\\
    x_{ij} \geq 0,& \;z_i \geq 0&\forall i,j \in \Reps\notag
\end{align}

We will now round $(C,y)$ to an integral solution of \ref{lpfairkmed}. Our first step is to convert $(C,y)$ to a fractional solution $(\bar x, \bar z)$ of \ref{lpfairkmed}. To do this, for each $j \in \Reps$, we consolidate the $y$-mass in $B(j,a(j))$ onto $j$, i.e. we set $\bar z_j = y(j,a(j))$. By Lemma~\ref{lma:fairkmed:yball-half}, each $\bar z_j$ is then at least $\half$. This allows $j$ to use only itself and $s(j)$ as its fractional facilities.

\begin{algorithm}[H]\caption{Consolidation for \contfairkmed\label{alg:fair:consolidate}}
    \begin{algorithmic}[1]
        \Require A proposed solution $\paren{\braces{C_v}_{v \in C}, \braces{y(v,r)}_{v \in C, r \in \Ie}}$ for \ref{fairlp:cost}, and $\Reps$ from \Cref{alg:fair-reps}
        \For{$j \in \Reps$}
        \State $s(j) \gets \arg\min_{v \in \Reps\setminus j} d(j,v)$
        \State $a(j) \gets d(j,s(j))/2$
        \State $\bar z_j \gets \min\braces{y\paren{j,a(j)},1}$
        \State $\bar x_{jj} \gets \bar z_j$
        \State $\bar x_{js(j)} \gets 1-\bar z_j$
        \EndFor
        \Ensure $(\bar x, \bar z)$
    \end{algorithmic}
\end{algorithm}

\begin{lemma}\label{lma:fair:consolidate}
$\forall j \in \Reps$, $\bar z_j \geq \half$, and $(\bar x, \bar z)$ is a feasible solution of \ref{lpfairkmed} with cost at most $2\opt_g$.
\end{lemma}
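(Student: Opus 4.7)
My plan is to verify the three claims of the lemma in order: the lower bound on each $\bar z_j$, the feasibility of $(\bar x, \bar z)$ with respect to the three constraint families of \ref{lpfairkmed}, and finally the cost bound. The first claim is immediate: \Cref{lma:fairkmed:yball-half} gives $y(j,a(j)) \geq \half$ for each $j \in \Reps$, and since $\bar z_j = \min\{y(j,a(j)),1\}$, we get $\bar z_j \geq \half$.

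For feasibility, I would check each constraint family. For \eqref{lpfairkmed:k}, note $\bar z_j \le y(j,a(j))$ and the balls $\{B(j,a(j))\}_{j \in \Reps}$ are pairwise disjoint by construction of $a(j)$ as half the distance to the nearest other representative; thus \ref{SepCy}, which we assumed holds, gives $\sum_j \bar z_j \le \sum_j y(j,a(j)) \le k$. For \eqref{lpfairkmed:cov}, $\bar x_{jj}+\bar x_{js(j)} = \bar z_j + (1-\bar z_j) = 1$ by construction, and all other $\bar x_{ij}$ are zero. For \eqref{lpfairkmed:sanity}, the only nonzero assignments are $\bar x_{jj}=\bar z_j$ (trivially $\le \bar z_j$) and $\bar x_{js(j)}=1-\bar z_j \leq \half \leq \bar z_{s(j)}$, where the first inequality uses $\bar z_j \ge \half$ and the second uses part~1 applied at $s(j)\in\Reps$.

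The main work is the cost bound. Since only $\bar x_{jj}$ and $\bar x_{js(j)}$ are nonzero, the objective evaluated at $(\bar x,\bar z)$ is $\sum_{j\in\Reps}|\child(j)|\cdot(1-\bar z_j)\cdot 2a(j)$. I would argue that each summand is dominated by $2|\child(j)|\cdot C_j$ by case analysis on $R(j)=\min\{r(j),2C_j\}$:
\begin{itemize}
    \item If $R(j)=r(j)$, then $a(j)\ge R(j)=r(j)$ by \eqref{eq:djsj-Rj}, so by \eqref{fairlp:mon} and \eqref{fairlp:fair}, $y(j,a(j))\ge y(j,r(j))\ge 1$. Hence $\bar z_j=1$ and the summand is zero.
    \item If $R(j)=2C_j<r(j)$, then by \Cref{lma:markovplus-to-markov} (valid because $(C,y)$ satisfies \eqref{fairlp:markovplus} and \eqref{fairlp:mon}) we have $C_j\ge a(j)(1-y(j,a(j)))\ge a(j)(1-\bar z_j)$, so the summand is at most $2|\child(j)|C_j$.
\end{itemize}
Summing over $j$, the total is at most $2\sum_{j\in\Reps:\,R(j)=2C_j}|\child(j)|\,C_j$. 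The key observation to finish is that in this case $C_j$ is a lower bound on $C_v$ for every $v\in\child(j)$: by the greedy choice in Line~\ref{alg:fair:consolidate} of \Cref{alg:fair-reps}, $R(j)\le R(v)$ for $v\in\child(j)$, and combined with $R(j)=2C_j$ and $R(v)\le 2C_v$, this yields $C_j\le C_v$. Therefore $|\child(j)|\,C_j\le\sum_{v\in\child(j)}C_v$, and since the sets $\child(j)$ partition $C$, summing gives $\sum_{j}|\child(j)|\,C_j \le \sum_{v\in C}C_v\le\opt_g$ by \eqref{fairlp:cost}. The cost bound of $2\opt_g$ follows.

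The main obstacle, in my view, is exactly this last step: the naive bound of $|\child(j)|\,(1-\bar z_j)\,2a(j)\le 2|\child(j)|\,C_j$ is not summable against $\sum_v C_v$ unless one exploits the greedy ordering in \Cref{alg:fair-reps}, which forces $C_j$ to be at most the average of $C_v$ over $v\in\child(j)$ in the only case where the summand is nonzero. The split between the two cases of $R(j)$ is what makes the fairness constraint \eqref{fairlp:fair} interact cleanly with the Markov-style bound \eqref{fairlp:markovplus}, and this interaction is exactly what the inflated (non-concentric) monotonicity constraints \eqref{fairlp:mon} are there to enable.
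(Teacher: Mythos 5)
Your proof is correct and follows the same overall route as the paper's (bound $\bar z_j \ge \tfrac 1 2$ via Lemma~\ref{lma:fairkmed:yball-half}, verify the three \ref{lpfairkmed} constraint families, then use \Cref{lma:markovplus-to-markov} to bound the per-representative connection cost), but it is actually \emph{more} careful than the paper on the final step, and rightly so. The paper's chain $\sum_{j}\abs{\child(j)}C_j \le \sum_{j}\sum_{v\in\child(j)}C_v$ is attributed to the greedy order in Line~\ref{alg:fair-reps:ln:rep}, but that line only gives $R(j)\le R(v)$ for $v\in\child(j)$, which does not in general imply $C_j\le C_v$ (take $r(j)$ tiny and $C_j$ huge). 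Your case split on $R(j)$ fixes this: when $R(j)=r(j)$ the term vanishes because $\bar z_j =1$, and when $R(j)=2C_j$ the inequality $2C_j = R(j)\le R(v)\le 2C_v$ legitimately yields $C_j\le C_v$. So you have identified and closed a small gap the paper leaves implicit (though the needed observation does appear inside the proof of \Cref{lma:half-mass}, which is presumably why the authors felt entitled to it).

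One small blemish: your intermediate chain $C_j\ge a(j)\left(1-y\left(j,a(j)\right)\right)\ge a(j)\left(1-\bar z_j\right)$ has the second inequality pointing the wrong way, since $\bar z_j = \min\left\{y\left(j,a(j)\right),1\right\}\le y\left(j,a(j)\right)$ gives $1-\bar z_j \ge 1-y\left(j,a(j)\right)$. The conclusion $C_j\ge a(j)\left(1-\bar z_j\right)$ is nonetheless correct: if $y\left(j,a(j)\right)\le 1$ it holds with equality from the Markov bound, and if $y\left(j,a(j)\right)>1$ then $\bar z_j=1$ and the right-hand side is $0$. Just state the subcases directly rather than chaining through $1-y\left(j,a(j)\right)$.
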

\begin{proof}
For a $j \in \Reps$, if $y\paren{j,a(j)} = 1$ then $\bar z_j = 1$. Otherwise, by Lemma~\ref{lma:fairkmed:yball-half}, $\bar z_j = y\paren{j,a(j)} \geq \half$.

Hence $\bar 1-\bar z_j \leq \half \leq \bar z_{s(j)}$, which implies feasibility by construction and \ref{SepCy}. It also implies that $\sum_{i \in \Reps}\bar x_{ij}d(j,i) = (1-\bar z_j)d(j,s(j))$. If $y\paren{j,a(j)} = 1$, then the RHS above is $0 \leq 2C_j$. Otherwise $\sum_{i \in \Reps}\bar x_{ij}d(j,i) = \paren{1-y\paren{j,a(j)}}d(j,s(j)) = 2\paren{1-y\paren{j,a(j)}}a(j)\leq 2C_j$
where the last inequality follows from \Cref{lma:markovplus-to-markov}. Multiplying by $\abs{\child(j)}$ and summing over all $j \in \Reps$, we have by Line~\ref{alg:fair-reps:ln:rep} in \Cref{alg:fair-reps},
\begin{align*}
    \sum_{j \in \Reps}\abs{\child(j)}\sum_{i \in \Reps}\bar x_{ij}d(j,i)\leq 2\sum_{j \in \Reps}\abs{\child(j)}C_j &\leq 2\sum_{j \in \Reps}\sum_{v \in \child(j)}C_v = 2\sum_{v \in C}C_v
\end{align*}
which is at most $2\opt_g$ by \eqref{fairlp:cost}.
\end{proof}
Now, to round $(\bar x, \bar z)$ to an integral solution, we appeal to an existing technique \cite{cgst-primal-rounding,chakrabarty-negahbani-fairness}. We state the relevant result here, and provide the proof in \Cref{appsec:fairness:rounding}.

\begin{restatable}[\cite{cgst-primal-rounding, chakrabarty-negahbani-fairness}]{lemma}{cgst}\label{lma:fair:use-cgst}
Let $(\bar x, \bar z)$ be a feasible solution of \ref{lpfairkmed} with cost at most $2\opt_g$, such that $\forall j \in \Reps$, $\bar z_j \geq \half$. Then there exists a polynomial time algorithm that produces $S \subseteq \Reps$ such that
\begin{enumerate*}[label = {\color{darkgray} \textsf{\textbf{(\arabic*)}}}, ref = \thetheorem.\arabic*]
    \item $\abs S = k$;\label{lma:fair:use-cgst:k}
    \item If $\bar z_j = 1$, then $j \in S$;\label{lma:fair:use-cgst:1s}
    \item $\forall j \in \Reps$, at least one of $j,s(j)$ is in $S$; and\label{lma:fair:use-cgst:j-or-sj}
    \item $\displaystyle \sum_{j \in \Reps}\abs{\child(j)}d(j,S) \leq 4\opt_g$.\label{lma:fair:use-cgst:cost}
\end{enumerate*}
\end{restatable}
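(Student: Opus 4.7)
The plan is to port the CGST-style rounding of~\cite{cgst-primal-rounding}, as further adapted in~\cite{chakrabarty-negahbani-fairness}, whose hypotheses are exactly met by the $(\bar x, \bar z)$ produced in \Cref{lma:fair:consolidate}. Specifically, $(\bar x, \bar z)$ is ``two-faced'' --- every client $j \in \Reps$ uses only $j$ and $s(j)$ as fractional facilities --- and is near half-integral, with $\bar z_j \in \{1\} \cup [\tfrac{1}{2}, 1)$. Applying \eqref{lpfairkmed:sanity} to $\bar x_{j, s(j)} = 1 - \bar z_j$ yields the pairwise slack $\bar z_j + \bar z_{s(j)} \geq 1$, and $\sum_i \bar z_i \leq k$ is \eqref{lpfairkmed:k}. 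These are precisely the structural features the CGST rounding exploits.

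First I would initialize $S \gets \{j \in \Reps : \bar z_j = 1\}$, immediately securing \ref{lma:fair:use-cgst:1s}. I would then view the map $s : \Reps \to \Reps$ as a functional digraph on $\Reps$; since every vertex has out-degree one, each weakly connected component is a directed cycle with in-trees dangling off it. I would process each component independently, pairing vertices along the cycle and, on each in-tree, pairing each unpaired leaf with its $s$-parent and marching upward. From each such pair I would select the vertex with the larger $|\child(\cdot)|\,\bar z_\cdot$, which secures \ref{lma:fair:use-cgst:j-or-sj} by construction. A counting argument using $\sum_{j \in \text{cycle}} \bar z_j \geq \lceil \ell / 2 \rceil$ (obtained by telescoping $\bar z_j + \bar z_{s(j)} \geq 1$ around a cycle of length $\ell$), combined with $\sum_i \bar z_i \leq k$, gives $|S| \leq k$; we then pad with arbitrary unused vertices from $\Reps$ to reach \ref{lma:fair:use-cgst:k}.

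For \ref{lma:fair:use-cgst:cost}, each $j \in S$ contributes zero, so we focus on $j \notin S$, for which $s(j) \in S$ by \ref{lma:fair:use-cgst:j-or-sj} and the algorithm pays $|\child(j)|\,d(j, s(j))$. The ``prefer larger $|\child(\cdot)|\,\bar z_\cdot$'' selection rule, together with $\bar z_j \geq \tfrac{1}{2}$, is exactly what is needed to charge each algorithm payment to at most twice the LP pair cost $|\child(j)|(1 - \bar z_j)\,d(j, s(j)) + |\child(s(j))|(1 - \bar z_{s(j)})\,d(j, s(j))$; summing over pairs gives algorithm cost at most $2 \cdot 2\opt_g = 4\opt_g$. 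The main technical obstacle is simultaneously balancing \ref{lma:fair:use-cgst:k}, \ref{lma:fair:use-cgst:j-or-sj}, and \ref{lma:fair:use-cgst:cost} over each rho-shape component --- most delicately on odd cycles, where the ``rounded-up'' cover vertex must be chosen so as not to overshoot either budget. This bookkeeping is exactly what the CGST construction packages, and it ports over since our fractional solution meets its two-faced and half-integral hypotheses.
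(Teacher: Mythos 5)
Your proposal tries to do the rounding in one step directly from the fractional $(\bar x,\bar z)$, whereas the paper first converts $(\bar x,\bar z)$ to a $\{\tfrac12,1\}$-integral solution $(\tilde x,\tilde z)$ of no greater cost (Algorithm~4, shifting mass between pairs $j_1,j_2$ ordered by $\abs{\child(j)}d(j,s(j))$) and only then rounds to an integral solution by opening alternate levels of the $s$-forest (Algorithm~5), paying the extra factor of $2$ there. The intermediate half-integral stage is not cosmetic; it is what makes the factor-$2$ charge clean, since at that point every residual client has $\tilde x_{s(j)j}=\tfrac12$ and the LP pays exactly $\tfrac12\abs{\child(j)}d(j,s(j))$. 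Your one-shot scheme does not reproduce this, and there are concrete gaps.

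First, the cost bookkeeping is wrong. The LP cost of client $s(j)$ is $\abs{\child(s(j))}\bigl(1-\bar z_{s(j)}\bigr)d\bigl(s(j),s(s(j))\bigr)$, not $\abs{\child(s(j))}(1-\bar z_{s(j)})d(j,s(j))$ as in your ``LP pair cost'' expression; $s(s(j))$ need not be $j$. And since $1-\bar z_j\le\tfrac12$, the LP's per-client contribution $\abs{\child(j)}(1-\bar z_j)d(j,s(j))$ can be arbitrarily close to $0$ (if $\bar z_j\approx 1$), while the algorithm's payment for an unopened $j$ is the full $\abs{\child(j)}d(j,s(j))$. A pointwise factor-$2$ charge against $j$'s own LP cost is therefore impossible from the raw fractional solution; this is exactly the problem the half-integral step fixes. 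Second, your selection rule ``keep the vertex of larger $\abs{\child(\cdot)}\bar z_\cdot$'' is a $z$-weighted rule with no evident relation to distance cost, whereas both the paper's mass-shifting (Algorithm~4) and the final choice of odd/even levels (Algorithm~5) are driven by quantities of the form $\abs{\child(\cdot)}d(\cdot,s(\cdot))$. Third, the pairing itself is underspecified on trees: if a parent $p$ has children $l_1,l_2$ with $s(l_1)=s(l_2)=p$, pairing $(l_1,p)$ leaves $l_2$ with no natural partner along an $s$-edge, and ``marching upward'' doesn't resolve this since $s(l_2)=p$, not $p$'s parent. Once a vertex $j$ is paired with something other than $s(j)$, selecting one vertex per pair no longer ``secures (3) by construction'' --- $j\notin S$ does not imply $s(j)\in S$. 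The paper sidesteps this entirely by opening alternate levels of each rooted tree, which gives the domination property for free. You have correctly identified the structural hypotheses the CGST machinery needs (two-faced, near half-integral), but the combinatorial core of that machinery --- half-integralize first, then alternate-level the forest --- is precisely what is missing here.
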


Thus, by \Cref{lma:fair:reps}, we have shown that $\sum_{v \in C}d(v,S) \leq \sum_{j \in \Reps}\abs{\child(j)}\sum_{i \in S}d(j,S) +4\opt_g \leq 8\opt_g$. Now we show the fairness ratio, adapting a related result~\cite[Lemma 3]{chakrabarty-negahbani-fairness} from the discrete version. This is where we crucially require the monotonicity constraints \eqref{fairlp:mon} for {\em non-concentric} balls.

\begin{lemma}\label{lma:fair:fair}
$\forall v \in C$, $d(v,S) \leq 8r(v)$.
\end{lemma}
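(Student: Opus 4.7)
The plan is to fix any client $v \in C$, identify the unique representative $j \in \Reps$ for which $v \in \child(j)$ (such a $j$ exists because Algorithm~\ref{alg:fair-reps} terminates only when every client has been covered), and bound $d(v, S)$ via a case analysis on the magnitude of $a(j) = d(j,s(j))/2$ relative to $r(v)$. Directly from Line~\ref{alg:fair-reps:ln:child} and the definition $R(v) := \min\{r(v), 2C_v\} \leq r(v)$, I get $d(v, j) \leq 2R(v) \leq 2r(v)$, so the covering representative $j$ is automatically within $2r(v)$ of $v$.

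I would split on whether $a(j) \geq 3r(v)$ or $a(j) < 3r(v)$. In the first regime, the triangle inequality yields $B(v, r(v)) \subseteq B(j, 3r(v)) \subseteq B(j, a(j))$, and this is precisely where non-concentric monotonicity \eqref{fairlp:mon} becomes indispensable: together with the fairness constraint \eqref{fairlp:fair}, it forces $y(j, a(j)) \geq y(v, r(v)) \geq 1$. In Algorithm~\ref{alg:fair:consolidate} this sets $\bar z_j = 1$, and then \Cref{lma:fair:use-cgst:1s} guarantees $j \in S$, so $d(v, S) \leq d(v, j) \leq 2r(v) \leq 8r(v)$.

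In the complementary regime $a(j) < 3r(v)$, \Cref{lma:fair:use-cgst:j-or-sj} tells me that at least one of $j$, $s(j)$ lies in $S$. If $j \in S$, I am done as above. Otherwise $s(j) \in S$, and the triangle inequality gives $d(v, S) \leq d(v, s(j)) \leq d(v, j) + d(j, s(j)) = d(v, j) + 2a(j) < 2r(v) + 6r(v) = 8r(v)$. Combining the two regimes establishes the claim.

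The main obstacle, and the whole reason for strengthening monotonicity to non-concentric balls in \eqref{fairlp:mon}, is the first regime. When $s(j)$ is arbitrarily far from $j$, routing $v$ via $s(j)$ would blow the $8r(v)$ bound, so I have to force $j$ itself into $S$; and the only mechanism available to do this is the ``$\bar z_j = 1$ implies $j \in S$'' guarantee, which depends on being able to transport the mass $y(v, r(v)) \geq 1$ onto $y(j, a(j))$ via monotonicity across balls with different centers. Once that transfer is in place, the case analysis above is routine.
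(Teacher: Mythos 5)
Your proof is correct and takes essentially the same approach as the paper's: you use $d(v,j)\le 2r(v)$, non-concentric monotonicity~\eqref{fairlp:mon} together with fairness~\eqref{fairlp:fair} to force $\bar z_j=1$, and Lemmas~\ref{lma:fair:use-cgst:1s} and~\ref{lma:fair:use-cgst:j-or-sj}. The only cosmetic difference is that you split explicitly on $a(j)\ge 3r(v)$ vs.\ $a(j)<3r(v)$, whereas the paper argues by contraposition (if $j\notin S$ then $B(v,r(v))\not\subseteq B(j,a(j))$) and extracts a witness point $w$ to deduce $a(j)<3r(v)$.
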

\begin{proof}
Fix $v \in C$, and let $j \in \Reps$ such that $v \in \child(j)$.  By construction of $\child(j)$ in \Cref{alg:fair-reps},
\begin{align}
    d(v,j) \leq 2R(v) \leq 2r(v) \label{eq:fair-close}
\end{align}

So if $j \in S$, then we are done. Otherwise, by Lemma~\ref{lma:fair:use-cgst:1s}, $\bar z_j < 1$, i.e. by \Cref{alg:fair:consolidate}, $y\paren{j,a(j)} < 1$. But by the fairness constraints \eqref{fairlp:fair}, $y(v,r(v)) \geq 1$. So by the monotonicity constraints \eqref{fairlp:mon}, $B(v,r(v)) \not\subseteq B(j,a(j))$, as otherwise we would have $1 \leq y(v,r(v)) \leq y(j,a(j)) < 1$, a contradiction.

So fix $w \in B(v,r(v)) \setminus B\paren{j,a(j)}$. We have
\begin{align}
    a(j) &< d(j,w),\quad d(v,w) \leq r(v)\label{eq:choice-of-w}
\end{align}

By Lemma~\ref{lma:fair:use-cgst:j-or-sj}, either $j \in S$ or $s(j) \in S$, so
\begin{align*}
    d(j,S) &\leq d(j,s(j)) = 2a(j) < 2d(j,w)&\dots\text{by \eqref{eq:choice-of-w}}\\
    &\leq 2\paren{d(v,j) + d(v,w)} \leq 2\paren{2r(v)+r(v)} &\dots\text{by \eqref{eq:fair-close} and \eqref{eq:choice-of-w}}\\
    &= 6r(v)
\end{align*}
So, by \eqref{eq:fair-close}, $d(v,S) \leq d(v,j) + d(j,S) \leq 2r(v) + 6r(v) = 8r(v)$.
\end{proof}
Thus we have proved \Cref{thm:main}.

We observe here that, by the simple reduction of setting all $r(v)$'s to $\infty$, \Cref{thm:main} implies a solution of cost $8\opt + \eps$ for \contkmed. We improve this ratio via an improved rounding procedure by Charikar, Guha, Shmoys, and Tardos \cite{cgst-primal-rounding}, which rounds $(\bar x, \bar z)$ such that $\sum_{j \in \Reps}\abs{\child(j)}d(j,S) \leq \frac 8 3 \opt_g$, instead of the $4\opt_g$ that we obtain above. This yields:
\begin{corollary}\label{cor:unfair:better}
There exists a polynomial time algorithm for \contkmed that, on an instance with optimum cost $\opt$, yields a solution of cost at most $6 \frac 2 3\opt + \eps$.
\end{corollary}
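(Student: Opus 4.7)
The plan is to reduce \contkmed to \contfairkmed by setting $r(v)=\infty$ for every client $v\in C$, which makes the fairness constraints \eqref{fairlp:fair} vacuous; equivalently, we drop \eqref{fairlp:fair} from \ref{fairlp:cost} entirely. With this choice, the filtering in \Cref{alg:fair-reps} uses $R(v)=\min\set{r(v),2C_v}=2C_v$, and the entire round-or-cut framework of \Cref{thm:main} — building $\Reps$, checking \ref{SepCy} via the ellipsoid algorithm, \Cref{lma:fairkmed:yball-half,lma:fairkmed:Reps-2k}, and the consolidation step \Cref{alg:fair:consolidate} — goes through verbatim, since none of these arguments use \eqref{fairlp:fair}. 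This yields, in polynomial time, either a separating hyperplane of the form \ref{SepCy} to return to the ellipsoid algorithm (after which we binary-search over $\opt_g$ up to $\opt$), or a half-integral fractional solution $(\bar x,\bar z)$ of \ref{lpfairkmed} with LP value at most $2\opt_g$ and $\bar z_j\ge \half$ for every $j\in \Reps$.

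The only change from the proof of \Cref{thm:main} is to replace the rounding step \Cref{lma:fair:use-cgst}, which only guarantees the LP-rounding factor $\sum_{j\in\Reps}\abs{\child(j)}d(j,S)\le 4\opt_g$, by the improved rounding procedure of Charikar, Guha, Shmoys, and Tardos~\cite{cgst-primal-rounding}. Their procedure, applied to any half-integral feasible solution of \ref{lpfairkmed} with LP cost at most $2\opt_g$ and with the consolidated structure $\bar x_{jj}=\bar z_j,\ \bar x_{j,s(j)}=1-\bar z_j$ guaranteed by \Cref{alg:fair:consolidate}, outputs an integral $S\subseteq \Reps$ with $\abs S=k$ satisfying $\sum_{j\in\Reps}\abs{\child(j)}d(j,S)\le \tfrac{8}{3}\opt_g$. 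This is exactly the $\tfrac{4}{3}$-integrality-gap statement underlying the original $6\tfrac{2}{3}$-approximation of~\cite{cgst-primal-rounding} for discrete $k$-median, and it depends only on the half-integral structure of $(\bar x,\bar z)$ — in particular, not on the ambient facility set being discrete — so it ports to our setting unchanged.

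Combining with \Cref{lma:fair:reps}, whose proof does not invoke fairness either, we conclude
\begin{equation*}
\sum_{v\in C} d(v,S) \;\le\; \sum_{j\in\Reps}\abs{\child(j)}\,d(j,S) + 4\opt_g \;\le\; \tfrac{8}{3}\opt_g + 4\opt_g \;=\; \tfrac{20}{3}\opt_g.
\end{equation*}
Binary-searching over $\opt_g$ in the range $[1,\,n^4]$ permitted by \Cref{lem:delta-vs-n}, and accounting for the $O(1/n)$ additive loss from discretizing radii to $\Ie$, yields a solution of cost at most $\tfrac{20}{3}\opt + \eps = 6\tfrac{2}{3}\opt + \eps$.

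The main technical obstacle is confirming that the CGST rounding ports cleanly: its analysis is usually phrased in the discrete setting with an explicit facility set, and one must check that its input requirements are precisely the half-integral structure produced by \Cref{alg:fair:consolidate} together with conclusions \ref{lma:fair:use-cgst:k}–\ref{lma:fair:use-cgst:j-or-sj} of \Cref{lma:fair:use-cgst}. Since those requirements are exactly what~\cite{cgst-primal-rounding} uses, and since in our algorithm the open facilities are ultimately chosen from $\Reps\subseteq C$ anyway, no additional modification is needed.
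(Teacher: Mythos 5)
Your proposal is correct and takes essentially the same route as the paper: set $r(v)=\infty$ so that \eqref{fairlp:fair} becomes vacuous, run the same round-or-cut / filtering / consolidation pipeline, and replace the final tree-based rounding (which gives $\sum_{j\in\Reps}\abs{\child(j)}d(j,S)\le 4\opt_g$) with the stronger CGST rounding that achieves $\tfrac{8}{3}\opt_g$, then add the $4\opt_g$ from \Cref{lma:fair:reps} to land at $\tfrac{20}{3}\opt_g$. One small wording slip: $(\bar x,\bar z)$ out of \Cref{alg:fair:consolidate} is not itself half-integral — it is a fractional solution with $\bar z_j\ge\half$, and the half-integralization is a separate step inside the CGST procedure — but this does not affect the argument.
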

This improved rounding, however, no longer guarantees to open either $j$ or $s(j)$ for each $j \in \Reps$. Such a guarantee (Lemma~\ref{lma:fair:use-cgst:j-or-sj}) is crucial to our fairness bound in \Cref{lma:fair:fair}. So the improvement is not naively adaptable to \contfairkmed.
\section*{Acknowledgements}
We thank the reviewers for their comments, which helped us improve our exposition and exhibit a stronger connection to existing work.

\clearpage
\bibliographystyle{plain}
\bibliography{references}
\appendix
\section{Details of the \texorpdfstring{\contfairkmed}{Cont-fair-k-Med} results}\label{appsec:fairness}
\subsection{Filtering algorithm}\label{appsec:reps}
\begin{algorithm}[H]\caption{Filtering for \contfairkmed}\label{alg:fair-reps}
\begin{algorithmic}[1]
\Require A proposed solution $\paren{\{C_v\}_{v \in C},\{y(v,r)\}_{v \in C, r \in \RR}}$ for \ref{fairlp:cost}
\State $R(v) \gets \min\{r(v),2C_v\}$ for all $v \in C$
\State $\RepsCy \gets \emptyset$ \Comment{``representative'' clients}
\State $U \gets C$ \Comment{``uncovered'' clients}
\While{$U \neq \emptyset$}
	\State \label{alg:fair-reps:ln:rep} Pick $j \in U$ with minimum $R(j)$
	\State \label{alg:fair-reps:ln:child} $\child(j) \gets \{v \in U \mid d(v,j) \leq 2R(v)\}$
	\State $U \gets U \setminus \child(j)$
	\State $\RepsCy \gets \RepsCy \cup \set{j}$
\EndWhile
\Ensure $\RepsCy$
\end{algorithmic}
\end{algorithm}

\subsection{Proof of \texorpdfstring{\Cref{lma:markovplus-to-markov}}{Markov Lemma}}\label{appsec:lp}

\markovlp*
\begin{proof}
\begin{align*}
    C_v &\geq \int_0^\infty \paren{1-y(v,r)}\dee r &\dots\text{\eqref{fairlp:markovplus}}\\
    &\geq \int_0^{r_0} \paren{1-y(v,r)}\dee r\\
    &\geq \paren{1-y(v,r_0)}\int_0^{r_0}\dee r&\dots\text{by \eqref{fairlp:mon}, }y(v,r) \leq y(v,r_0)\\
    &=r_0\paren{1-y(v,r_0)}
\end{align*}
\end{proof}

\subsection{Proof of \texorpdfstring{\Cref{lma:fair:reps}}{Reps Lemma}}\label{appsec:fairness:reps}

\reps*
\begin{proof}
Consider a client $v \in C$. In \Cref{alg:fair-reps}, all clients are eventually covered, so there must exist $j \in \RepsCy$ such that $v \in \child(j)$. Thus 
\[d(v,S) \leq d(j,S) + d(v,j) \leq d(j,S) + 2R(v)\]
where the last inequality follows by Line~\ref{alg:fair-reps:ln:child}. Summing this over all $v \in C$,
\begin{align*}
    \sum_{v \in C}d(v,S) &\leq \sum_{j \in \RepsCy}\sum_{v \in \child(j)}\paren{d(j,S) + 2R(v)}\\
    &\leq \sum_{j \in \RepsCy}\abs{\child(j)}d(j,S) + 4\sum_{v \in C}C_v
\end{align*}
where the last inequality follows because $R(v) \leq 2C_v$. The lemma follows by \eqref{fairlp:cost}.
\end{proof}

\subsection{Proof of \texorpdfstring{\Cref{lma:fair:use-cgst}}{Rounding Lemma}}\label{appsec:fairness:rounding}
\cgst*
We first round $(\bar x, \bar z)$ to a $\braces{\half,1}$-integral solution $(\tilde x, \tilde z)$ of the same cost. Then, we round $(\tilde x, \tilde z)$ to an integral solution $(x,z)$ with twice the cost.
\subsubsection*{Rounding to a $\braces{\half,1}$-integral solution}
We first exhibit this procedure for \Cref{lma:fair:use-cgst}, and then modify it slightly for \Cref{lma:kmeans:use-cgst}. We will assume that $\sum_{j \in \Reps}\bar z_j$ is an integer because, if it is not, then it is strictly less than $k$, and we can distribute the value $k - \sum_{j \in \Reps}\bar z_j$ among the $\bar z$-values of some arbitrary representatives, without increasing the quantity $\sum_{i \in X}\bar x_{ij}d(j,i)$ for any $j \in \Reps$. Hence
\begin{assumption}\label{assm:integral-z}
	$\sum_{j \in \Reps}\bar z_j \in \ZZ$.
\end{assumption}
Given the assumption, we round to a $\braces{\half, 1}$-solution as follows: we start with $\tilde z$ being $\bar z$, and then we find two representatives $j_1,j_2$ that violate $\braces{\half,1}$-integrality. We will pick the one with a lower value of $\abs{\child(j)}d(j,s(j))$; say this is $j_1$. Then we will decrease $\tilde z_{j_1}$ and increase $\tilde z_{j_2}$ until one of them hits $\half$ or $1$. We will repeat this until $\tilde z$ becomes $\braces{\half,1}$-integral. Alongside, $\tilde x$ values are adjusted accordingly.

\begin{algorithm}[H]\caption{Rounding to a $\braces{\half,1}$-integral solution \label{alg:half-integral}}
    \begin{algorithmic}[1]
    \Require A solution $\paren{\braces{\bar x_{ij}}_{i,j \in \Reps}, \braces{\bar z_j}_{j \in \Reps}}$ of \ref{lpfairkmed}; other $\tilde x, \tilde z$ values implicitly zero
    \State $(\tilde x, \tilde z) \gets (\bar x, \bar z)$ \label{alg:half-integral:ln:init}
    \While{$\exists j_1,j_2 \in \Reps, j_1 \neq j_2, \half < \tilde z_{j_1}, \tilde z_{j_2} < 1$}\label{alg:half-integral:ln:whilecond}
        \State \Comment{wlog assume $\abs{\child(j_1)}d(j_1,s(j_1)) \leq \abs{\child(j_2)d(j_2,s(j_2))}$}
        \State $\delta(j_1,j_2) \gets \min\braces{\tilde z_{j_2}-\half, 1-\tilde z_{j_1}}$
        \State $\tilde z_{j_1} \gets \tilde z_{j_1} - \delta(j_1,j_2)$
        \State $\tilde z_{j_2} \gets \tilde z_{j_2} + \delta(j_1,j_2)$
        \For{$j \in \braces{j_1,j_2}$}
            \State $\tilde x_{jj} \gets \tilde z_j$
            \State $\tilde x_{s(j)j} \gets \tilde 1-\tilde z_j$
        \EndFor
    \EndWhile
    \Ensure $(\tilde x, \tilde z)$
    \end{algorithmic}
\end{algorithm}

\begin{lemma}[\cite{chakrabarty-negahbani-fairness, cgst-primal-rounding}]\label{lma:half-integral}
Let $(\bar x, \bar z)$ be a feasible solution of \ref{lpfairkmed} with cost at most $2\opt_g$, such that $\forall j \in \Reps, \bar z_j \geq \half$. Then $(\tilde x, \tilde z)$ is a feasible $\braces{\half,1}$-integral solution of \ref{lpfairkmed} of cost at most $2\opt_g$, such that $\forall j \in \Reps, \bar z_j = 1 \implies \tilde z_j = 1$.
\end{lemma}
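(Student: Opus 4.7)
The plan is to analyze Algorithm~\ref{alg:half-integral} by establishing three invariants maintained through every iteration of the while-loop, together with a termination argument. The invariants are: (i) $(\tilde x, \tilde z)$ remains a feasible solution of \ref{lpfairkmed}; (ii) the objective value of $(\tilde x, \tilde z)$ does not increase; and (iii) every $\tilde z_j$ stays inside $[\half, 1]$, and in particular, any $j$ with $\bar z_j = 1$ keeps $\tilde z_j = 1$ throughout. Initialization (Line~\ref{alg:half-integral:ln:init}) gives all three from the hypothesis on $(\bar x, \bar z)$; I then check each is preserved by an arbitrary iteration.

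For feasibility, the budget constraint \eqref{lpfairkmed:k} survives because the transfer $\tilde z_{j_1} \mapsto \tilde z_{j_1} - \delta(j_1,j_2)$, $\tilde z_{j_2} \mapsto \tilde z_{j_2} + \delta(j_1,j_2)$ leaves $\sum_{j \in \Reps}\tilde z_j$ unchanged. The coverage constraint \eqref{lpfairkmed:cov} for $j \in \braces{j_1,j_2}$ holds by construction because $\tilde x_{jj} + \tilde x_{s(j)j} = \tilde z_j + (1-\tilde z_j) = 1$, and for other $j$ it is untouched. The sanity constraint \eqref{lpfairkmed:sanity} reduces to checking $\tilde x_{s(j)j} = 1-\tilde z_j \leq \tilde z_{s(j)}$, which follows from invariant (iii): since $\tilde z_j \geq \half$ and $\tilde z_{s(j)} \geq \half$, we get $1-\tilde z_j \leq \half \leq \tilde z_{s(j)}$.

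For the cost invariant, I will observe that each representative $j$ contributes $\abs{\child(j)}(1-\tilde z_j)d(j,s(j))$ to the objective, since $j$ only routes its fractional demand between itself (at distance $0$) and $s(j)$. Transferring $\delta(j_1,j_2)$ mass from $j_1$ to $j_2$ therefore changes the objective by
\begin{equation*}
    \delta(j_1,j_2) \cdot \bigl[\abs{\child(j_1)}d(j_1,s(j_1)) - \abs{\child(j_2)}d(j_2,s(j_2))\bigr],
\end{equation*}
which is non-positive by the algorithm's rule of choosing $j_1$ as the representative with the smaller such product. Hence the objective stays at most its initial value, $2\opt_g$.

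For termination and $\braces{\half,1}$-integrality, the step size $\delta(j_1,j_2) = \min\braces{\tilde z_{j_2}-\half,\ 1-\tilde z_{j_1}}$ is chosen precisely so that after each iteration at least one of $\tilde z_{j_1}, \tilde z_{j_2}$ reaches an endpoint of $[\half,1]$; since no other coordinate is modified and no value ever leaves $[\half,1]$, the count of strictly-fractional $\tilde z_j \in (\half,1)$ strictly decreases each iteration. Thus the while-loop runs at most $\abs{\Reps}$ times, after which its guard \eqref{alg:half-integral:ln:whilecond} fails and $\tilde z$ is $\braces{\half,1}$-integral. Preservation of $1$'s is then immediate, since the guard only fires when both $\tilde z_{j_1}, \tilde z_{j_2} < 1$. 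The main subtlety---more a point to flag than a real obstacle---is the circular-looking dependence in invariant (iii): when $s(j) \in \braces{j_1,j_2}$, its value has just been modified, so one must use the endpoint-clipping in $\delta(j_1,j_2)$ to argue that $\tilde z_{s(j)}$ never dips below $\half$. This is exactly what secures the sanity constraint above.
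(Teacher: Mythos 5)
Your proof is structurally the same as the paper's: track feasibility, cost, and the $\tilde z_j \geq \half$ lower bound as invariants through the while-loop, then argue termination. Your feasibility and cost arguments are sound, and your counting-based termination argument is a perfectly good substitute for the paper's potential-function argument via $\psi(\tilde x, \tilde z) = \sum_{j \in \Reps}\max\braces{\tilde z_j - \half, 1-\tilde z_j}$.

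There is, however, a real gap in the last step. You assert that once the guard of Line~\ref{alg:half-integral:ln:whilecond} fails, ``$\tilde z$ is $\{\half,1\}$-integral.'' But the guard fails as soon as there are \emph{not two} distinct representatives with $\tilde z_j \in (\half,1)$, and this is also consistent with there being \emph{exactly one} such representative --- in which case $\tilde z$ is not $\{\half,1\}$-integral and the lemma would fail. Ruling this out requires Assumption~\ref{assm:integral-z}, namely $\sum_{j \in \Reps}\bar z_j \in \ZZ$, which you never invoke. The paper uses it explicitly: each iteration preserves $\sum_j \tilde z_j$ and keeps all coordinates $\geq \half$, so if all but one coordinate were in $\{\half,1\}$, an integer sum would force the last coordinate to lie in $\{0,\half\}$ modulo $1$, contradicting its lying strictly in $(\half,1)$. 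Without this observation your conclusion does not follow from the counting argument alone, so you should add it.

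A side remark: the step size $\delta(j_1,j_2)=\min\braces{\tilde z_{j_2}-\half,\ 1-\tilde z_{j_1}}$ that you copy verbatim from Algorithm~\ref{alg:half-integral} appears to have its subscripts swapped; since $\tilde z_{j_1}$ decreases and $\tilde z_{j_2}$ increases, the clipping that keeps both in $[\half,1]$ and sends one to an endpoint is $\min\braces{\tilde z_{j_1}-\half,\ 1-\tilde z_{j_2}}$. Your reasoning (and the paper's own proof) implicitly uses this corrected formula, so this does not affect the validity of your argument, but it is worth noting that the claim ``at least one of $\tilde z_{j_1},\tilde z_{j_2}$ reaches an endpoint'' is false for the formula as written.
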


\begin{proof}

Observe that \Cref{alg:half-integral} never alters $\tilde z$-values that are $0$ or $1$ in Line~\ref{alg:half-integral:ln:init}. Moreover, it never lowers a $\tilde z$-value that is at least $\half$ to a value strictly below $\half$. So the property
\[\forall j \in \Reps, \tilde z_j \geq \half\]
as initially ensured, is preserved throughout. Thus, our construction never violates \eqref{lpfairkmed:sanity}; and, by \Cref{assm:integral-z}, we can find $j_1,j_2$ as in Line~\ref{alg:half-integral:ln:whilecond} iff $\tilde z$ is not $\braces{\half,1}$-integral. 

Now consider the quantity
\[\psi(\tilde x, \tilde z) := \sum_{j \in \Reps}\max\braces{\tilde z_j - \half, 1-\tilde z_j}\]
Observe that the while-loop in \Cref{alg:half-integral} strictly decreases $\psi(\tilde x, \tilde z)$ in every iteration, but never makes it negative. Also, $\psi(\tilde x, \tilde z) = 0$ iff $(\tilde x, \tilde z)$ is $\braces{\half,1}$-integral. So the while-loop terminates and yields a $\braces{\half,1}$-integral solution. Note that every iteration of the while-loop preserves $\sum_{j \in \Reps}\tilde z_j$. So we preserve \eqref{lpfairkmed:k}. We satisfy \eqref{lpfairkmed:cov} by construction.

Finally, for the cost bound:
\[\sum_{j \in \Reps}\abs{\child(j)}\sum_{i \in X}\tilde x_{ij}d(j,i) = \sum_{j \in \Reps}\tilde x_{s(j)j}d(j,s(j))\]
In one iteration of the while-loop, $\tilde x_{s(j_1)}$ decreases by $\delta(j_1,j_2)$, while $\tilde x_{s(j_2)}$ increases by the same. So the RHS above, in one iteration, increases by
\[\delta(j_1,j_2)\paren{\abs{\child(j_1)}d(j_1,s(j_1))-\abs{\child(j_2)}d(j_2,s(j_2))}\]
which is designed to be non-positive in Line~\ref{alg:half-integral:ln:whilecond}. Thus by Line~\ref{alg:half-integral:ln:init} and \Cref{lma:fair:consolidate} we have
\[\sum_{j \in \Reps}\abs{\child(j)}\sum_{i \in X}\tilde x_{ij}d(j,i) \leq \sum_{j \in \Reps}\abs{\child(j)}\sum_{i \in X}\bar x_{ij}d(j,i) \leq 2\opt_g\]
\end{proof}
\subsubsection*{Rounding to an integral solution}

Given $(\tilde x, \tilde z)$ as constructed above, we first open all $j \in \Reps$ that have $\tilde z_j = 1$. We discard these representatives, and also any representatives that now have $\tilde z_{s(j)} = 1$. On the remaining $\Reps$, we construct a forest, as follows: we draw directed edges of the form $(j,s(j))$, and arbitrarily discard any antiparallel edges that occur when $s(s(j)) = j$. Then we open alternate levels of every tree in this forest, choosing either odd or even levels to get the lower cost.

\begin{algorithm}[H]\caption{Rounding to an integral solution\label{alg:integral}}
    \begin{algorithmic}[1]
    \Require A $\braces{\half, 1}$-integral solution $\paren{\braces{\tilde x_{ij}}_{i,j \in \Reps}, \braces{\tilde z_j}_{j \in \Reps}}$ of \ref{lpfairkmed}, other $\tilde x, \tilde z$ implicitly zero
        \State $(x,z) \gets$ all zeros
        \State $V \gets \Reps$ \Comment{vertex set}
        \For{$j \in \Reps$ such that $\tilde z_j = 1$}
            \State $z_j \gets 1$, $x_{jj} \gets 1$
            \State $V \gets V \setminus j$
        \EndFor
        \For{$j \in \Reps$ such that $z_j = 1$}
            \State $x_{s(j)j} \gets 1$
            \State $V \gets V\setminus j$
        \EndFor\label{alg:integral:line:V}
        \State $E \gets \braces{(j,s(j)) \mid j \in V}$ \Comment{edge set, has antiparallel edges}
        \For{Each connected component $T$ in $(V,E)$} 
            \State Pick $j \in T$ such that $s(s(j)) = j$ \Comment{unique, corresponds to closest pair in $T$}
            \State $E \gets E \setminus (s(j),j))$ \Comment{removing antiparallel edges}
            \State $\treeroot(T) \gets s(j)$ \Comment{now all edges point upwards in the tree, and levels well-defined}
            \State $T_\odd \gets$ odd level vertices in $T$
            \State $T_\even \gets$ even level vertices in $T$
            \If{$\sum_{j \in T}d(j,T_\odd) \leq \sum_{j \in T}d(j,T_\even)$}
                \State $\forall j \in T_\odd, z_j \gets 1, x_{jj} \gets 1$
                \State $\forall j \in T_\even, x_{s(j)j} \gets 1$
            \Else
                \State $\forall j \in T_\even, z_j \gets 1, x_{jj} \gets 1$
                \State $\forall j \in T_\odd, x_{s(j)j} \gets 1$
            \EndIf
        \EndFor
        \Ensure $(x,z)$
    \end{algorithmic}
\end{algorithm}

\begin{lemma}[\cite{cgst-primal-rounding}] \label{lma:integral}
	Let $(\tilde x, \tilde z)$ be a feasible $\braces{\half,1}$-integral solution of \ref{lpfairkmed} with cost at most $2\opt_g$. Then $(x,z)$ is a feasible integral solution of \ref{lpfairkmed} with cost at most $4\opt_g$, such that $\forall j \in \Reps$, $\tilde z_j = 1 \implies z_j = 1$; and either $z_j = 1$ or $z_{s(j)} = 1$.
\end{lemma}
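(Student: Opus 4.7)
The plan is to first analyze the structure of the directed graph produced by \Cref{alg:integral} after its two opening for-loops, and then read off each of the four claims of the lemma from this structure. After Line~\ref{alg:integral:line:V}, every $j$ remaining in $V$ has $\tilde z_j = \tfrac12$ and, by applying \eqref{lpfairkmed:sanity} to $\tilde x_{s(j)j} \ge \tfrac12$, also $\tilde z_{s(j)} = \tfrac12$; in particular $s(\cdot)$ restricts to a function $V \to V$, so $(V,E)$ is an out-degree-$1$ digraph. Assuming ties are broken consistently so that each $s(j)$ is unique, every directed cycle in $(V,E)$ must have length exactly $2$: on a cycle of length $\ge 3$ the successive edge lengths $d(j_i, j_{i+1})$ would have to form a strictly decreasing sequence around the cycle, which is impossible. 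Hence every connected component contains exactly one antiparallel pair $\{j_0, \treeroot(T)\}$, and removing the edge from $\treeroot(T)$ to $j_0$ leaves a tree in which every non-root $j$ has $s(j)$ as parent, while $s(\treeroot(T)) = j_0$ sits one level below the root.

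Three of the four claims follow mechanically from this structure. Preservation of already-opened facilities ($\tilde z_j = 1 \implies z_j = 1$) is immediate from the first for-loop. For the either-or guarantee, note three cases: for $j$ still in $V$, the alternate-level opening makes either $z_j = 1$ or $z_{s(j)} = 1$; for $j$ removed by the second for-loop the claim holds via $z_{s(j)} = 1$; and for $j$ with $\tilde z_j = 1$ it holds via $z_j = 1$. Constraints \eqref{lpfairkmed:cov} and \eqref{lpfairkmed:sanity} are satisfied by inspection, since each $j$ is assigned to exactly one facility that has been opened.

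For the cost bound, I would partition the representatives into three types. Phase-$1$ opened vertices pay $0$ on both sides. Vertices removed by the second for-loop pay exactly twice their half-integral contribution, $|\child(j)|\,d(j,s(j))$ against $\tfrac12 |\child(j)|\,d(j,s(j))$. For each tree $T$, the two opening choices incur costs $A = \sum_{j \in T_\even}|\child(j)|\,d(j,s(j))$ and $B = \sum_{j \in T_\odd}|\child(j)|\,d(j,s(j))$ (with the root contributing $d(\treeroot(T), s(\treeroot(T)))$ to whichever of $A,B$ corresponds to not opening its level), so $A + B$ equals twice the half-integral tree cost, and $\min(A,B) \le (A+B)/2$ matches the half-integral tree cost. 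Combining the three types, the integral connection cost is at most twice the half-integral cost, hence at most $4\opt_g$.

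The step I expect to be the main obstacle is verifying the facility-count constraint $\sum_i z_i \le k$, because \Cref{alg:integral} picks the cheaper side in each tree rather than the smaller one, so a tree of odd size may contribute $\lceil|T|/2\rceil > |T|/2 = \sum_{j \in T}\tilde z_j$ opened vertices. To handle this I would use \Cref{assm:integral-z}: since $\sum_j \tilde z_j$ and the phase-$1$ count are both integers, the number of odd-sized trees must be even, so they can be paired up and the cheaper-side rule refined to a matching-style rule that within each such pair opens $\lfloor|T_1|/2\rfloor$ vertices in one tree and $\lceil|T_2|/2\rceil$ in the other, restoring the total to $\sum_{j \in V}\tilde z_j$. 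A swap argument of the type used in \cite{cgst-primal-rounding} should show that this refinement can be made without violating the $\min(A,B) \le (A+B)/2$ bound used above, thereby yielding $\sum_i z_i \le k$ and the $4\opt_g$ cost guarantee simultaneously.
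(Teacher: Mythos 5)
Your overall structure matches the paper's: you verify integrality, the covering and sanity constraints, the preservation of $\tilde z_j = 1$, and the either-or guarantee by inspection, and you bound the cost by comparing each client's integral contribution to twice its half-integral one. Your cost bookkeeping ($A+B$ equals twice the half-integral tree cost, so the cheaper side pays at most the half-integral cost, while the second-for-loop clients pay exactly twice theirs) is if anything tighter than the paper's per-client $\times 2$ bound.

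You are also right that the paper's treatment of \eqref{lpfairkmed:k} is the delicate point. The paper simply asserts that \Cref{alg:integral} opens ``at most $\tfrac{|\Reps_\half|}{2}$'' of the half-valued representatives and then uses \Cref{assm:integral-z} only to remove a ceiling; it never justifies the assertion. But your description of the failure mode undersells it, and as a result your proposed repair does not fully work. In a tree $T$, the partition into $T_\odd$ and $T_\even$ can be arbitrarily unbalanced --- e.g.\ in a star where every non-root $j$ has $s(j) = \treeroot(T)$, we get $|T_\even| = 1$ and $|T_\odd| = |T|-1$, so the cheaper side can carry as many as $|T|-1$ facilities, not just $\lceil |T|/2\rceil$. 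Concretely, taking $\Reps = \{0,1,2,3\}$ with $d(0,i)=1$ and $d(i,j)=2$ for $i,j \in \{1,2,3\}$, all four have $\tilde z = \tfrac12$, the unique tree roots at $0$, and \Cref{alg:integral} as written opens $\{1,2,3\}$ (cost $1$ beats cost $3$), violating $\sum_i z_i \le k = 2$. Your fix by pairing odd-sized trees to swap between $\lfloor|T|/2\rfloor$ and $\lceil|T|/2\rceil$ tacitly assumes the two sides have those sizes, which the star shows is false. The repair that actually works --- and, I believe, is what the cited \cite{cgst-primal-rounding} argument intends --- is to open the \emph{smaller} of $T_\odd$, $T_\even$ rather than the cheaper. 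Then each tree contributes at most $\lfloor|T|/2\rfloor \le \sum_{j \in T}\tilde z_j$ openings, so $\sum_i z_i \le \sum_j \tilde z_j \le k$ directly, with no parity argument needed; and the cost of the forced (possibly larger) side is still $\max(A,B) \le A+B$, i.e.\ at most twice the half-integral tree cost, which recovers exactly the per-client factor-$2$ bound the paper proves and hence the $4\opt_g$ guarantee. So: your diagnosis of where the paper's proof is thin is correct and valuable, but your fix needs to be replaced by the smaller-side rule (or you must first argue that sides are size-balanced, which they are not).
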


\begin{proof}
By construction, $(x,z)$ is integral, and satisfies \eqref{lpfairkmed:sanity} and \eqref{lpfairkmed:cov}. The construction also ensures that for each $j \in \Reps$, at least one of $z_j, z_{s(j)}$ is set to $1$.

Now we show that $(x,z)$ satisifies \eqref{lpfairkmed:k}. Let $\Reps_\half = \braces{j \in \Reps \mid \tilde z_j = \half}$. \Cref{alg:integral} sets all $z$ values in $\Reps \setminus \Reps_\half$ to $1$, and sets at most $\frac{\abs{\Reps_\half}}{2}$ of the $z$ values in $\Reps_\half$ to $1$. By \Cref{assm:integral-z}, and since \Cref{alg:half-integral} maintains $\sum_{i \in X}\tilde z_i = \sum_{i \in X}\bar z_i$, we have that $\abs {\Reps_\half}$ is even, so $\ceil{\frac{\abs{\Reps_\half}}{2}} = \frac{\abs{\Reps_\half}}{2}$. Thus
		\[\sum_{i \in X}z_i = \abs{\Reps \setminus \Reps_\half} + \frac{\abs{\Reps_\half}}{2} = \sum_{j \in \Reps}\tilde z_j\]
which is at most $k$ since $(\tilde x, \tilde z)$ is feasible.
	
Finally, for the cost bound, it suffices to show the following: for each $j \in \Reps$,
\[\sum_{i \in X}x_{ij}d(j,i) \leq 2\sum_{i \in X}\tilde x_{ij}d(j,i).\]
If $z_j = \tilde z_j = 1$, then LHS = RHS = $0$ above. Otherwise $\tilde z_j = \half = \tilde x_{s(j)j}$. So the RHS is exactly $2 \cdot \half \cdot d(j,s(j)$, and since either $z_j = 1$ or $z_{s(j)} = 1$, the LHS is at most $d(j,s(j))$.
\end{proof}

Setting $S := \braces{j \in \Reps \mid z_j = 1}$ yields \Cref{lma:fair:use-cgst}.

\section{\texorpdfstring{$32$}{32}-approximation for \texorpdfstring{\contkmeans}{Cont-k-Means}}\label{appsec:kmeans}
\begin{definition}[Continuous $k$-means (\contkmeans)]
The input is a metric space $(X,d)$, clients $C \subseteq X, \abs{C} = n \in \NN$, and $k \in \NN$. The goal is to find $S \subseteq X, \abs{S} = k$ minimizing $\cost(S) := \sum_{v \in C}d(v,S)^2$.
\end{definition}

\begin{theorem}\label{thm:kmeans}
    There exists a polynomial time algorithm that, for an instance of \contkmeans with optimum $\opt$, yields a solution of cost at most $32\opt + \eps$. Here $\eps = O\paren{\frac 1 {n^2}}$.
\end{theorem}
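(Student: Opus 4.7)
The plan is to mirror the \contfairkmed framework of \Cref{sec:fairness}, replacing the distance-based cost variables $C_v$ with \emph{squared}-distance ones and swapping the layer-cake lower bound $\int_0^\infty \paren{1 - y(v,r)}\dee r$ for its $k$-means analog. Concretely, I would write a polynomial-variable LP with variables $C_v \geq 0$ and $y(v,r) \geq 0$ for $v \in C$, $r \in \Ie$, the objective bound $\sum_{v \in C} C_v \leq \opt_g$, the exponentially many cardinality constraints $\sum_{B \in \B} y(B) \leq k$ for pairwise disjoint $\B \subseteq \braces{B(v,r)}$, and concentric monotonicity of $y$ (as in \eqref{ufllp:mon}; the non-concentric version is not needed here because there is no fairness constraint). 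The essential new ingredient is the \emph{$k$-means layer-cake constraint}
\[
    \int_0^\infty 2r\paren{1 - y(v,r)}\dee r \;\leq\; C_v \qquad \forall v \in C,
\]
which is tight integrally since $\int_0^{d(v,S)} 2r\, \dee r = d(v,S)^2$. Combined with monotonicity, this yields the Markov-style consequence $C_v \geq r_0^2\paren{1 - y(v, r_0)}$ for every $r_0 \in \RR$, the direct $k$-means analog of \Cref{lma:markovplus-to-markov}.

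The filtering step mimics \Cref{alg:fair-reps} with $R(v) := \sqrt{2 C_v}$, which, via the Markov consequence at $r_0 = R(v)$, forces $y(v, R(v)) \geq \half$. Representatives are picked greedily by smallest $R(j)$, with $\child(j) := \braces{v \in U : d(v,j) \leq 2 R(v)}$; I would then define $s(j)$, $a(j) = d(j, s(j))/2$, and the separator \eqref{SepCy} exactly as in \Cref{sec:fairness}. Monotonicity together with $a(j) \geq R(j)$ gives $y(j, a(j)) \geq \half$, so if \eqref{SepCy} holds then $\abs{\RepsCy} \leq 2k$ by the same argument as \Cref{lma:half-mass}; otherwise I would feed \eqref{SepCy} to the ellipsoid algorithm as a separating hyperplane. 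Since $d(v,j) \leq 2 R(v) = 2\sqrt{2 C_v}$, two applications of $(a+b)^2 \leq 2(a^2 + b^2)$ give the $k$-means reps lemma
\[
    \sum_{v \in C} d(v, S)^2 \;\leq\; 16 \opt_g \;+\; 2 \sum_{j \in \RepsCy} \abs{\child(j)}\, d(j, S)^2,
\]
using the partition property $\sum_j \abs{\child(j)} C_j \leq \sum_{v \in C} C_v$, which holds because $j$ is picked with smallest $R(j)$ and $R$ is increasing in $C$.

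For the rounding, I would consolidate as in \Cref{alg:fair:consolidate} with $\bar z_j = \min\braces{y(j, a(j)), 1}$, $\bar x_{jj} = \bar z_j$, and $\bar x_{j s(j)} = 1 - \bar z_j$. The Markov consequence $a(j)^2 \paren{1 - y(j, a(j))} \leq C_j$ together with $d(j, s(j))^2 = 4 a(j)^2$ shows that the $k$-means objective $\sum_{j \in \RepsCy} \abs{\child(j)} \sum_i \bar x_{ij} d(j, i)^2$ of the consolidated $(\bar x, \bar z)$ is at most $4 \opt_g$. I would then port the CGST $\braces{\half, 1}$-integralization and tree-level rounding (\Cref{alg:half-integral} and \Cref{alg:integral}) to squared distances, arguing that the swap step in \Cref{alg:half-integral} preserves the squared cost and that the alternate-level opening in \Cref{alg:integral} at most doubles it, for an integral $S \subseteq \RepsCy$ satisfying $\sum_{j \in \RepsCy}\abs{\child(j)} d(j, S)^2 \leq 8 \opt_g$. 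Plugging back gives $\sum_{v \in C} d(v, S)^2 \leq 16 \opt_g + 2 \cdot 8 \opt_g = 32 \opt_g$, which proves the statement modulo the standard $O(1/n^2)$ discretization error from \Cref{sec:prelim}.

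The main obstacle is this last item: re-deriving the CGST-style rounding for the squared-distance objective. In \Cref{alg:half-integral} the swap argument has to be re-balanced so that the weight $\abs{\child(j)} d(j, s(j))^2$ replaces $\abs{\child(j)} d(j, s(j))$ while still selecting the cheaper representative for an overall non-increasing cost. In \Cref{alg:integral} the alternate-level choice has to ensure that the $k$-means cost $\sum_j \abs{\child(j)} d(j, S)^2$ is bounded by a factor of $2$, not $4$, times its $\braces{\half, 1}$-integral counterpart, which forces a careful pairing between each closed $j$ and its open $s(j)$. A secondary subtlety is the calibration $R(v) = \sqrt{2 C_v}$: this exact constant couples $R(v)^2$ with $C_v$ so that the reps lemma lands on $16 \opt_g$ rather than something larger, and it still leaves room for $y(v, R(v)) \geq \half$ via the Markov consequence; any looser choice in either direction would push the final ratio above $32$.
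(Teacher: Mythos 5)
Your proposal is correct and matches the paper's own proof in Appendix B almost step for step: the same LP with a squared layer-cake constraint, the same Markov-style consequence $C_v^2 \geq r_0^2(1-y(v,r_0))$, the same filtering with $R(v)=\sqrt{2}\,C_v$ (you write $\sqrt{2 C_v}$ because you let $C_v$ denote the squared cost share, which is equivalent), the same reps bound $16\opt_g + 2\sum_j\abs{\child(j)}d(j,S)^2$, the same consolidation to cost $4\opt_g$, and the same ported CGST rounding to $8\opt_g$ giving $32\opt_g$ overall. The one small remark: the factor-$2$ loss in the final integral rounding does not really come from the odd/even level choice (that part only preserves the budget $k$); it comes from the guarantee that for each $j$, either $j$ or $s(j)$ is opened, so $d(j,S)^2\leq d(j,s(j))^2 = 2\cdot\bigl(\tfrac12 d(j,s(j))^2\bigr)$ whenever $\tilde z_j=\tfrac12$ — exactly as in the $k$-median case, just with squared distances.
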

To prove this, we proceed with a round-or-cut framework similar to the one in \Cref{sec:fairness} using a linear program similar to \ref{fairlp:cost} (except the fairness constraints). Here the variables are $\set{C_v^2}_{v \in C}$ for the cost shares and $\set{y(v,r)}_{v \in C, r \in \RR}$ for the facility masses. The constraints involving only the $y(v,r)$'s remain the same as in \ref{fairlp:cost}, while those that involve $C_v$'s are now adapted to the squared costs.

\begin{align}
\sum_{v \in C}C_v^2 &\leq \opt_g \tag{$\lp^2$}\label{kmeanslp:cost}\\
\sum_{B \in \B}y(B) &\leq k  &\forall \B \subseteq  \{B(v,r)\}_{\substack{v \in C \\ r \in \RR}} \text{ pairwise disjoint} \tag{$\lp^2$-1}\label{kmeanslp:k}\\
2\int_0^\infty\paren{1-y(v,r)}r\dee r &\leq C_v^2 & \forall v \in C\tag{$\lp^2$-2}\label{kmeanslp:markovplus}\\
y(u,r) &\leq y(v,r')& \forall u,v \in C, r,r' \in \RR, B(u,r) \subseteq B(v,r') \tag{$\lp^2$-3}\label{kmeanslp:mon}\\
y(v,r) \geq 0,\ & C_v^2 \geq 0& \forall v \in C, r \in \RR\notag
\end{align}
The rounding algorithm and its analysis are analogous to our results in \Cref{sec:fairness}. We first prove the following analogue of \Cref{lma:markovplus-to-markov} for \ref{kmeanslp:cost}.
\begin{lemma}\label{lma:markovplus-to-markov-kmeans}
Consider a solution $(C^2,y)$ of \ref{kmeanslp:cost}. If for a client $v$, $(C^2,y)$ satisfies all constraints of the form \eqref{kmeanslp:markovplus} and \eqref{kmeanslp:mon} involving $v$, then for any $r_0 \in \RR$, $C_v^2 \geq r_0^2\paren{1-y(v,r_0)}$.
\end{lemma}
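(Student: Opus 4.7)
The plan is to mimic directly the proof of Lemma \ref{lma:markovplus-to-markov}, with the only change being the extra factor of $r$ inside the integral coming from the squared cost. I would start from the constraint \eqref{kmeanslp:markovplus}, which asserts $C_v^2 \geq 2\int_0^\infty (1-y(v,r))\, r\, \dee r$, and then truncate the domain of integration to $[0, r_0]$ to obtain the lower bound $C_v^2 \geq 2\int_0^{r_0}(1-y(v,r))\, r\, \dee r$.

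Next, I would use monotonicity to replace $y(v,r)$ by $y(v,r_0)$ inside the integrand. Concretely, for every $r \leq r_0$ we have $B(v,r)\subseteq B(v,r_0)$, so \eqref{kmeanslp:mon} gives $y(v,r)\leq y(v,r_0)$, and hence $1-y(v,r)\geq 1-y(v,r_0)$. Pulling the constant $1-y(v,r_0)$ out of the integral and evaluating $2\int_0^{r_0} r\, \dee r = r_0^2$ yields the desired bound $C_v^2\geq r_0^2(1-y(v,r_0))$.

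There is essentially no obstacle here: this is a one-line calculation that exactly parallels the $k$-median case, with the weight $r\, \dee r$ in the integrand accounting for the squared cost, and with the antiderivative $r_0^2/2$ cancelling the factor of $2$ in \eqref{kmeanslp:markovplus}. The only things to double check are that \eqref{kmeanslp:mon} covers concentric balls (it does, by taking $u=v$), and that the truncation to $[0,r_0]$ is legitimate because the integrand $(1-y(v,r))\, r$ is nonnegative (which follows from $y(v,r)\leq 1$ on the relevant range; and if $y(v,r_0)\geq 1$ the statement is trivial, so we may as well assume $y(v,r)\leq 1$ for $r\leq r_0$ by monotonicity).
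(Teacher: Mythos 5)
Your proof is correct and follows the paper's own argument line for line: truncate the integral in \eqref{kmeanslp:markovplus} to $[0,r_0]$, replace $y(v,r)$ by $y(v,r_0)$ using the monotonicity constraint \eqref{kmeanslp:mon} (with $u=v$), pull out the constant, and evaluate $2\int_0^{r_0} r\,\dee r = r_0^2$. One small aside: your parenthetical justification for the truncation step argues that the integrand is nonnegative for $r\le r_0$, but what the truncation actually requires is that the discarded tail $\int_{r_0}^\infty(1-y(v,r))r\,\dee r$ be nonnegative; the paper (like you) glosses over this, and the cleanest patch is the observation you already make, namely that the claimed inequality is vacuous whenever $y(v,r_0)\ge 1$, so one may assume an LP solution with $y\le 1$ throughout without loss.
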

\begin{proof}
\begin{align*}
    C_v^2 &\geq 2\int_0^\infty \paren{1-y(v,r)}r\dee r &\dots\text{\eqref{kmeanslp:markovplus}}\\
    &\geq 2\int_0^{r_0} \paren{1-y(v,r)}r\dee r\\
    &\geq \paren{1-y(v,r_0)}\cdot 2\int_0^{r_0}r\dee r&\dots\text{by \eqref{kmeanslp:mon}, }y(v,r) \leq y(v,r_0)\\
    &=r_0^2\paren{1-y(v,r_0)}
\end{align*}
\end{proof}

The filtering procedure is the same as \Cref{alg:fair-reps}, except $R(v)$ now equals $\sqrt 2 C_v$. Also, for clarity, we call the output $\RepsCCy$. For a $j \in \RepsCCy$, $s(j)$ and $a(j)$ are defined as before, and we use the same separating constraint, which we now call \ref{SepCCy}.
\begin{align}
    \sum_{j \in \RepsCCy}y(j,a(j)) \leq k\tag{$\SepCCy$}\label{SepCCy}
\end{align}
As before, we will focus on satisfying \ref{SepCCy} and argue that this is sufficient for our rounding algorithm. We prove an analogue of \Cref{lma:half-mass}.

\begin{lemma}
If $(C^2,y)$ satisfies \eqref{kmeanslp:markovplus}-\eqref{kmeanslp:mon} and \ref{SepCCy}, then
\begin{enumerate}[ref = \thetheorem.\arabic*]
    \item $\forall j \in \RepsCCy,\;y\paren{j,a(j)} \geq \half$\label{lma:kmeans:yball-half}
    \item $\abs{\RepsCCy} \leq 2k$\label{lma:kmeans:Reps-2k}
\end{enumerate}
\end{lemma}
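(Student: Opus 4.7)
The plan is to mirror the proof of \Cref{lma:half-mass}, adapted for squared costs. For part (1), fix $j \in \RepsCCy$ (assuming $\abs{\RepsCCy} \geq 2$, else the lemma is trivial). The key inequality I need is $a(j) \geq R(j) = \sqrt{2}\,C_j$, which follows from the filtering algorithm exactly as in the fairness setting: for distinct $j, j' \in \RepsCCy$, whichever of the two is picked first forces $d(j, j') > 2R(\cdot)$ for the other (otherwise the other would have been absorbed into a $\child$-set), and minimality of $R$ at the picked vertex then gives $d(j, j') > 2R(j)$ in both cases. Specializing to $j' = s(j)$ yields $a(j) = d(j, s(j))/2 \geq R(j)$.

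Now I invoke \Cref{lma:markovplus-to-markov-kmeans} at $r_0 = a(j)$, obtaining $C_j^2 \geq a(j)^2 \paren{1 - y(j, a(j))}$. If $C_j = 0$, this forces $y(j, a(j)) \geq 1 \geq \half$ (since $a(j) > 0$ whenever $\abs{\RepsCCy} \geq 2$). Otherwise, squaring $a(j) \geq \sqrt{2}\,C_j$ gives $a(j)^2 \geq 2 C_j^2$, and combining with the Markov-type bound gives $a(j)^2/2 \geq a(j)^2 \paren{1 - y(j, a(j))}$, i.e., $y(j, a(j)) \geq \half$.

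For part (2), the collection $\braces{B(j, a(j))}_{j \in \RepsCCy}$ is pairwise disjoint by construction of $s(j)$: for distinct $j, j' \in \RepsCCy$, both $a(j) = d(j, s(j))/2$ and $a(j') = d(j', s(j'))/2$ are at most $d(j, j')/2$, so $a(j) + a(j') \leq d(j, j')$. Hence \ref{SepCCy} is a legitimate instance of \eqref{kmeanslp:k}, and it gives $\sum_{j \in \RepsCCy} y(j, a(j)) \leq k$. Combined with part (1), each summand is at least $\half$, so $\abs{\RepsCCy} \leq 2k$.

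I expect no major obstacle: the proof is a direct port of \Cref{lma:half-mass}, with two adjustments. First, there is no fairness branch to handle, so only the Markov case remains. Second, the filtering threshold is $\sqrt{2}\,C_v$ rather than $2 C_v$, and we use the squared-radius Markov inequality of \Cref{lma:markovplus-to-markov-kmeans} in place of the linear version; the arithmetic balances because $(\sqrt{2})^2 = 2$, yielding exactly the factor of $\half$ needed to conclude $y(j, a(j)) \geq \half$ and thence $\abs{\RepsCCy} \leq 2k$.
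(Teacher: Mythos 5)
Your proof is correct and takes essentially the same route as the paper's: establish $a(j) \geq R(j) = \sqrt 2\, C_j$ from the filtering rule, apply the squared-radius Markov inequality (\Cref{lma:markovplus-to-markov-kmeans}) at $r_0 = a(j)$, handle $C_j = 0$ separately, and then divide out; the disjointness of $\braces{B(j,a(j))}_{j \in \RepsCCy}$ together with part (1) and \ref{SepCCy} gives part (2). Your version is marginally more explicit — it spells out why filtering gives $d(j,s(j)) > 2R(j)$ and flags the degenerate case $\abs{\RepsCCy} = 1$ where $s(j)$ is undefined — but these are presentational differences, not a different argument.
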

\begin{proof}
Fix $j \in \RepsCCy$. By construction of $\child$ sets, we have
\begin{align}
a(j) = \frac{d(j,s(j))}2 \geq R(v) = \sqrt 2 C_j\label{eq:djsj-Rj-kmeans}
\end{align}
By \Cref{lma:markovplus-to-markov-kmeans},
\begin{align*}
    C_j^2 &\geq a(j)^2\paren{1-y\paren{j,a(j)}}
\end{align*}
If $C_j = 0$, since $a(j) > 0$ by definition, this implies $y\paren{j,a(j)}\geq 1$. 
Otherwise, substituting $a(j)$ by $R(v) = \sqrt 2 C_j$ from \eqref{eq:djsj-Rj-kmeans} gives
\begin{align*}
    C_j^2 &\geq 2C_j^2\paren{1-y\paren{j,a(j)}} \implies  y\paren{j,a(j)} \geq \half.
\end{align*}
Now, by \ref{SepCCy},
\[k \geq \sum_{j \in \RepsCCy}y\paren{j,a(j)} \geq \half \cdot \abs{\RepsCCy}\]
\end{proof}

So if we find that $\abs\RepsCCy > 2k$, we can pass \ref{SepCCy} to ellipsoid as a separating hyperplane. Hence for our rounding algorithm, we can now use a $(C^2,y)$ that satisfies \ref{SepCCy}. We fix such a $(C^2,y)$, and hereafter refer and $\RepsCCy$ as simply $\Reps$. We now prove an analog of \Cref{lma:fair:reps}, which allows us to ignore clients outside of $\Reps$ hereafter.

\begin{lemma}\label{lma:kmeans:reps}
$S \subseteq X$ be a solution to \contkmeans. Consider a proposed solution $(C^2,y)$ of \ref{kmeanslp:cost} that satisfies \eqref{kmeanslp:cost}. Then
\begin{align*}
    \sum_{v \in C}d(v,S)^2 \leq 2\sum_{j \in \RepsCy}\abs{\child(j)}d(j,S)^2 + 16\opt_g
\end{align*}
\end{lemma}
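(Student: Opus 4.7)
The plan is to mirror the proof of \Cref{lma:fair:reps} but account for the squared objective. As in the $k$-median setting, the filtering algorithm (the $k$-means analogue of \Cref{alg:fair-reps}, with $R(v) = \sqrt{2}\,C_v$) guarantees that every client $v \in C$ lies in $\child(j)$ for some $j \in \RepsCCy$, and by the child-defining line in the filtering algorithm,
\[
d(v,j) \leq 2R(v) = 2\sqrt{2}\,C_v.
\]

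With that in hand, I would invoke the triangle inequality followed by the elementary ``squared triangle'' bound $(a+b)^2 \leq 2a^2 + 2b^2$ to obtain
\[
d(v,S)^2 \;\leq\; \bigl(d(j,S) + d(v,j)\bigr)^2 \;\leq\; 2\,d(j,S)^2 + 2\,d(v,j)^2 \;\leq\; 2\,d(j,S)^2 + 16\,C_v^2.
\]
This is the step that, compared with \Cref{lma:fair:reps}, explains both the leading factor of $2$ in front of the $\sum_j|\child(j)|d(j,S)^2$ term and the factor $16 = 2 \cdot (2\sqrt{2})^2$ in front of $\opt_g$.

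Summing over $v \in C$, grouping by the representative whose child set contains $v$, and using that $\sum_{j \in \RepsCCy}\sum_{v \in \child(j)} = \sum_{v \in C}$ (since the $\child$ sets partition $C$), I get
\[
\sum_{v \in C} d(v,S)^2 \;\leq\; 2\sum_{j \in \RepsCCy}\abs{\child(j)}\,d(j,S)^2 \;+\; 16\sum_{v \in C} C_v^2.
\]
Finally, applying \eqref{kmeanslp:cost} to bound $\sum_{v \in C} C_v^2 \leq \opt_g$ yields the claimed inequality. No step is particularly delicate; the only place where I need to be mindful is the quadratic triangle inequality, since using $(a+b)^2 \leq (1+\alpha)a^2 + (1+1/\alpha)b^2$ with any $\alpha \neq 1$ would change the trade-off between the two terms, and here the symmetric choice $\alpha = 1$ is exactly what produces the stated constants.
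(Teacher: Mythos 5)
Your proof is correct and follows exactly the same route as the paper's: apply the triangle inequality with the representative $j$ containing $v$ in $\child(j)$, use $(a+b)^2 \leq 2a^2 + 2b^2$, bound $d(v,j) \leq 2R(v) = 2\sqrt{2}\,C_v$, sum over clients grouping by $\child$, and finish with \eqref{kmeanslp:cost}. No differences worth noting.
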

\begin{proof}
Consider a client $v \in C$. In the filtering algorithm, all clients are eventually covered, so there must exist $j \in \RepsCCy$ such that $v \in \child(j)$. Thus 
\[d(v,S) \leq d(j,S) + d(v,j) \leq d(j,S) + 2R(v) = d(j,S) + 2\sqrt 2C_v\]
where the last inequality follows by Line~\ref{alg:fair-reps:ln:child}. Squaring, we get
\[d(v,S)^2 \leq (d(j,S) + 2\sqrt 2C_v)^2 \leq 2(d(j,S)^2 + (2\sqrt 2C_v)^2) = 2(d(j,S)^2 + 8C_v^2)\]
Summing this over all $v \in C$,
\begin{align*}
    \sum_{v \in C}d(v,S)^2 &\leq 2\sum_{j \in \RepsCy}\sum_{v \in \child(j)}\paren{d(j,S)^2 + 8C_v^2}\\
    &\leq 2\sum_{j \in \RepsCy}\abs{\child(j)}d(j,S)^2 + 16\sum_{v \in C}C_v^2
\end{align*}
The lemma follows by \eqref{kmeanslp:cost}.
\end{proof}
Next, following our algorithm for \contfairkmed closely, we will round $(C^2,y)$ to an integral solution of \eqref{lpkmeans}, which has the same constraints as \ref{lpfairkmed}, but the objective function changes to suit the $k$-means problem:
\begin{align}
    \textrm{minimize} \sum_{j \in \Reps}\abs{\child(j)}&\sum_{i \in \Reps}x_{ij}d(j,i)^2 \tag{$\DLP^2$} \label{lpkmeans}
\end{align}

This is achieved via the same consolidation procedure as in \Cref{alg:fair:consolidate}. We prove an analogue of \Cref{lma:fair:consolidate}.

\begin{lemma}\label{lma:kmeans:consolidate}
$(\bar x, \bar z)$ is a feasible solution of \ref{lpkmeans} with cost at most $4\opt_g$ and, $\forall j \in \Reps$, $\bar z_j \geq \half$.
\end{lemma}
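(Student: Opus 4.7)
The plan is to mirror the proof of Lemma~\ref{lma:fair:consolidate} from the \contfairkmed\ section, with two adjustments: the threshold $R(v)$ is now $\sqrt{2} C_v$ instead of $\min\{r(v), 2C_v\}$, and all distances in the objective appear squared, so I will need the squared-distance version of the Markov-style inequality (Lemma~\ref{lma:markovplus-to-markov-kmeans}) in place of Lemma~\ref{lma:markovplus-to-markov}.

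First I would handle the $\bar z_j \geq \half$ claim and feasibility. By Lemma~\ref{lma:kmeans:yball-half}, $y(j,a(j)) \geq \half$, so the consolidation step sets $\bar z_j = \min\{y(j,a(j)),1\} \geq \half$. The coverage constraint \eqref{lpfairkmed:cov} holds by construction since $\bar x_{jj} + \bar x_{s(j)j} = \bar z_j + (1-\bar z_j) = 1$. For \eqref{lpfairkmed:sanity}, the only non-trivial check is $\bar x_{s(j)j} = 1 - \bar z_j \leq \half \leq \bar z_{s(j)}$, using that every $\bar z$-value is at least $\half$. The cardinality constraint \eqref{lpfairkmed:k} follows from $\sum_j \bar z_j \leq \sum_j y(j,a(j)) \leq k$ by~\ref{SepCCy}.

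Next I would bound the cost. For each $j \in \Reps$, the contribution of client $j$ to the inner sum is $\sum_i \bar x_{ij} d(j,i)^2 = (1-\bar z_j)\, d(j,s(j))^2$. If $\bar z_j = 1$ this vanishes; otherwise $\bar z_j = y(j,a(j))$, and since $d(j,s(j)) = 2a(j)$, the contribution equals $4 a(j)^2 (1 - y(j,a(j)))$. Applying Lemma~\ref{lma:markovplus-to-markov-kmeans} with $r_0 = a(j)$ gives $a(j)^2(1-y(j,a(j))) \leq C_j^2$, so the contribution is at most $4 C_j^2$. Multiplying by $|\child(j)|$ and summing,
\begin{align*}
\sum_{j \in \Reps}|\child(j)|\sum_{i \in \Reps}\bar x_{ij}d(j,i)^2 \;\leq\; 4\sum_{j \in \Reps}|\child(j)|\,C_j^2.
\end{align*}

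Finally, I would exploit the minimum-$R$ choice in Line~\ref{alg:fair-reps:ln:rep} of the filtering step: since $R(v) = \sqrt{2}\,C_v$ for \contkmeans, each representative $j$ has $C_j \leq C_v$ for every $v \in \child(j)$, hence $C_j^2 \leq C_v^2$. This gives $\sum_j |\child(j)| C_j^2 \leq \sum_j \sum_{v \in \child(j)} C_v^2 = \sum_{v \in C} C_v^2 \leq \opt_g$ by \eqref{kmeanslp:cost}, yielding the $4\opt_g$ bound. I do not anticipate a serious obstacle here; the only subtlety, compared to the linear case, is keeping the squaring honest, which is exactly what Lemma~\ref{lma:markovplus-to-markov-kmeans} was tailored to handle.
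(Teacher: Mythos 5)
Your proposal is correct and follows essentially the same route as the paper's own proof: establish $\bar z_j \ge \half$ via Lemma~\ref{lma:kmeans:yball-half}, bound each $j$'s contribution as $4a(j)^2(1-y(j,a(j))) \le 4C_j^2$ via Lemma~\ref{lma:markovplus-to-markov-kmeans}, and then use the minimum-$R$ selection rule in filtering to pass from $\sum_j \abs{\child(j)} C_j^2$ to $\sum_{v\in C} C_v^2 \le \opt_g$. The only difference is that you spell out the feasibility checks (\eqref{lpfairkmed:cov}, \eqref{lpfairkmed:sanity}, \eqref{lpfairkmed:k}) a bit more explicitly than the paper does, which is harmless.
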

\begin{proof}
For a $j \in \Reps$, if $y\paren{j,a(j)} = 1$ then $\bar z_j = 1$. Otherwise, by Lemma~\ref{lma:kmeans:yball-half}, $\bar z_j = y\paren{j,a(j)} \geq \half$.

Hence $\bar 1-\bar z_j \leq \half \leq \bar z_{s(j)}$, which implies feasibility by construction and $\Sep$. We also have
\[\sum_{i \in \Reps}\bar x_{ij}d(j,i)^2 = (1-\bar z_j)d(j,s(j))^2\]
If $y\paren{j,a(j)} = 1$, then the RHS above is $0 \leq 4C_j^2$. Otherwise
\[\sum_{i \in \Reps}\bar x_{ij}d(j,i)^2 = \paren{1-y\paren{j,a(j)}}d(j,s(j))^2 = 4\paren{1-y\paren{j,a(j)}}a(j)^2\leq 4C_j^2\]
where the last inequality follows from \Cref{lma:markovplus-to-markov-kmeans}. Multiplying by $\abs{\child(j)}$ and summing over all $j \in \Reps$, we have by construction of $\Reps$,
\begin{align*}
    \sum_{j \in \Reps}\abs{\child(j)}\sum_{i \in \Reps}\bar x_{ij}d(j,i)^2 \leq 4\sum_{j \in \Reps}\abs{\child(j)}C_j^2 &\leq 4\sum_{j \in \Reps}\sum_{v \in \child(j)}C_v^2\\
    &= 4\sum_{v \in C}C_v^2 \leq 4\opt_g
\end{align*}
\end{proof}

Finally, we prove an analog of \Cref{lma:fair:use-cgst}.

\begin{lemma}[Modified from \cite{cgst-primal-rounding, chakrabarty-negahbani-fairness}]\label{lma:kmeans:use-cgst}
Let $(\bar x, \bar z)$ be a feasible solution of \ref{lpkmeans} with cost at most $4\opt_g$, such that $\forall j \in \Reps$, $\bar z_j \geq \half$. Then there exists a polynomial time algorithm that produces $S \subseteq \Reps$ such that
\begin{enumerate}[ref = \thetheorem.\arabic*]
    \item $\abs S = k$\label{lma:kmeans:use-cgst:k}
    \item If $\bar z_j = 1$, then $j \in S$.\label{lma:kmeans:use-cgst:1s}
    \item $\forall j \in \Reps$, at least one of $j,s(j)$ is in $S$\label{lma:kmeans:use-cgst:j-or-sj}
    \item $\displaystyle \sum_{j \in \Reps}\abs{\child(j)}d(j,S)^2 \leq 8\opt_g$\label{lma:kmeans:use-cgst:cost}
\end{enumerate}
\end{lemma}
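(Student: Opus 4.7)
The plan is to mirror, essentially verbatim, the two-stage rounding procedure that proves \Cref{lma:fair:use-cgst} (Algorithms~\ref{alg:half-integral} and~\ref{alg:integral}), with squared distances in place of distances wherever a cost is measured. This is natural because the combinatorial structure that drove the previous proof --- the forest on $\Reps$ obtained from the $s(\cdot)$ map and the alternating-level rounding on each tree --- does not use any metric property of $d$ beyond non-negativity, so it transfers unchanged to the objective $\sum_{j}\lvert\child(j)\rvert\sum_i x_{ij} d(j,i)^2$.

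First, I would run a $\{\half,1\}$-integral rounding analogous to \Cref{alg:half-integral}, where in Line~\ref{alg:half-integral:ln:whilecond} the pair $(j_1,j_2)$ is chosen so that $\lvert\child(j_1)\rvert d(j_1,s(j_1))^2 \leq \lvert\child(j_2)\rvert d(j_2,s(j_2))^2$, and then $\tilde z$-mass is shifted from $j_1$ to $j_2$ by exactly the amount $\delta(j_1,j_2)$ defined there. Under \Cref{assm:integral-z} (which transfers because distributing unused $\bar z$-mass onto arbitrary $\bar z_j$'s never increases $\sum_i \bar x_{ij} d(j,i)^2$ for any $j$), the potential $\sum_j \max\{\tilde z_j-\half,\,1-\tilde z_j\}$ strictly decreases every iteration, the shift preserves $\sum_j \tilde z_j$, and the cost $\sum_j \lvert\child(j)\rvert \tilde x_{s(j)j}\, d(j,s(j))^2$ does not increase. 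This yields a $\{\half,1\}$-integral feasible $(\tilde x,\tilde z)$ of \ref{lpkmeans} of cost at most $4\opt_g$, with $\tilde z_j=1$ whenever $\bar z_j=1$.

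Second, I would run \Cref{alg:integral} verbatim, except that the odd-vs-even comparison inside each tree $T$ uses $\sum_{j\in T} d(j,T_\odd)^2$ versus $\sum_{j\in T} d(j,T_\even)^2$. The cardinality bound $\lvert S\rvert = k$ follows exactly as in \Cref{lma:integral}: everything outside $\Reps_{1/2}$ is opened fully, and $\lvert\Reps_{1/2}\rvert$ is even by \Cref{assm:integral-z}, so $\sum_i z_i = \sum_j \tilde z_j \leq k$. Properties \ref{lma:kmeans:use-cgst:1s} (preservation of $\tilde z_j=1$) and \ref{lma:kmeans:use-cgst:j-or-sj} (either $j$ or $s(j)$ opened) are immediate from the construction, as in the $k$-median proof. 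For the cost bound \ref{lma:kmeans:use-cgst:cost}, it suffices to show that for each $j\in\Reps$,
\[
\sum_{i\in X} x_{ij}\, d(j,i)^2 \;\leq\; 2 \sum_{i\in X} \tilde x_{ij}\, d(j,i)^2 .
\]
If $\tilde z_j = 1$ then both sides vanish. Otherwise $\tilde z_j = \tilde x_{s(j)j} = \half$, so the right-hand side equals $\half\, d(j,s(j))^2$, while the left-hand side is at most $d(j,s(j))^2$ because one of $j,s(j)$ lies in $S$. Multiplying by $\lvert\child(j)\rvert$ and summing gives total cost at most $2\cdot 4\opt_g = 8\opt_g$.

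The main subtlety to watch for is that squared distances do \emph{not} satisfy the triangle inequality, so I must ensure no step of the rounding secretly appeals to it on $d(\cdot,\cdot)^2$. In the above, both algorithms only weigh each representative by a single quantity $d(j,s(j))^2$ and only ever route a client $j$ to $j$ or to $s(j)$, so no $d^2$-triangle inequality is invoked. The one place where distances \emph{are} combined across different points is already dealt with separately in \Cref{lma:kmeans:reps} via the elementary inequality $(a+b)^2 \leq 2(a^2+b^2)$, which is exactly where the extra factor of $2$ in front of $\sum_j \lvert\child(j)\rvert d(j,S)^2$ in that lemma comes from, and which combines with the $8\opt_g$ from property \ref{lma:kmeans:use-cgst:cost} and the $16\opt_g$ tail to yield the $32\opt_g$ bound of \Cref{thm:kmeans}.
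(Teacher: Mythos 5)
Your proposal is correct and follows essentially the same two-stage route as the paper's proof: rerun \Cref{alg:half-integral} with the selection rule weighted by $\abs{\child(j)}d(j,s(j))^2$, then rerun \Cref{alg:integral}, and observe that each representative only ever pays $0$ or $d(j,s(j))^2$, so the per-$j$ factor-$2$ comparison against the $\{\half,1\}$-integral cost carries over verbatim. Your explicit check that no step secretly invokes the triangle inequality on squared distances is a worthwhile observation, but it does not change the argument.
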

\begin{proof}
The proof is similar to that of \Cref{lma:fair:use-cgst}. We first apply \Cref{alg:half-integral}, except we now take $j_1, j_2$ such that $\abs{\child(j_1)}d(j_1,s(j_1))^2 \leq \abs{\child(j_2)}d(j_2,s(j_2))$. This, via similar reasoning as the proof of \Cref{lma:half-integral}, gives us a $\set{\half,1}$-integral solution $(\tilde x, \tilde z)$ of \ref{lpkmeans} such that $\forall j \in \Reps, \bar z_j = 1 \implies \tilde z_j = 1$, and
\[\sum_{j \in \Reps}\abs{\child(j)}\sum_{i \in X}\tilde x_{ij}d(j,i)^2 \leq \sum_{j \in \Reps}\abs{\child(j)}\sum_{i \in X}\bar x_{ij}d(j,i)^2 \leq 4\opt_g\]
We then apply \Cref{alg:half-integral}. Similarly to the proof of \Cref{lma:integral}, we get an integral solution $(x,z)$ satisfying the necessary \ref{lpkmeans} constraints. We also have that $\forall j \in \Reps, \tilde z_j = 1 \implies z_j = 1$ and either $z_j = 1$ or $z_{s(j)} = 1$. So we need to show that
\[\sum_{i \in X}x_{ij}d(j,i)^2 \leq 2\sum_{i \in X}d(j,i)^2\]
If $z_j = \tilde z_j = 1$, then both the LHS and the RHS are $0$. Otherwise $\tilde z_j = \half = \tilde x_{s(j)j}$. So the RHS is exactly $2\cdot\half\cdot d(j,s(j))^2$, and the LHS is at most $d(j,s(j))^2$.
\end{proof}

So by \Cref{lma:kmeans:reps}, we have, for the $S$ constructed above,
\[\sum_{v \in C}d(v,S)^2 \leq 2 \cdot 8\opt_g + 16\opt_g = 32\opt_g\]

\subsubsection*{Generalizing further to Continuous \texorpdfstring{$(k,p)$}{(k,p)}-clustering}
Our approach extends to the generalization where the objective function is the sum of some $p^{th}$ power of the distances.
\begin{definition}[Continuous ($k,p$)-clustering (\contkp)]
The input is a metric space $(X,d)$, clients $C \subseteq X, \abs{C} = n \in \NN$, and $k,p \in \NN$. The goal is to find $S \subseteq X, \abs{S} = k$ minimizing $\cost(S) := \sum_{v \in C}d(v,S)^p$.
\end{definition}

\begin{theorem}\label{thm:kp}
    There exists a polynomial time algorithm that, for an instance of \contkp with optimum $\opt$, yields a solution of cost at most $2^{2p+1}\opt + \eps$. Here $\eps = O\paren{\frac 1 {n^2}}$.
\end{theorem}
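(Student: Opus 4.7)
The plan is to lift every step of the \contkmeans argument to general $p$, choosing scaling parameters so that all the $p$-dependent constants line up.

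First I would write down the natural LP: variables $\{C_v^p\}_{v \in C}$ and $\{y(v,r)\}$, objective $\sum_v C_v^p \le \opt_g$, the same disjoint-balls constraint $\sum_{B \in \B} y(B) \le k$, the same non-concentric monotonicity constraint \eqref{kmeanslp:mon}, and the $p$-th order Markov-style constraint
\begin{align*}
p \int_0^\infty (1 - y(v,r)) r^{p-1} \dee r \le C_v^p \qquad \forall v \in C.
\end{align*}
Exactly as in \Cref{lma:markovplus-to-markov-kmeans}, the combination of this constraint with monotonicity yields $C_v^p \ge r_0^p (1 - y(v,r_0))$ for every $r_0$, since $\int_0^{r_0} r^{p-1} \dee r = r_0^p / p$.

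Next I would run the filtering procedure of \Cref{alg:fair-reps} with the choice $R(v) := 2^{1/p} C_v$, producing a set $\Reps$; the separating hyperplane is again $\sum_{j \in \Reps} y(j,a(j)) \le k$. The half-mass argument goes through because $a(j) \ge R(j)$ gives $a(j)^p \ge 2 C_j^p$, so the Markov bound $C_j^p \ge a(j)^p (1 - y(j,a(j)))$ forces $y(j,a(j)) \ge 1/2$, and hence $|\Reps| \le 2k$. If the separating constraint is violated we feed it to the ellipsoid algorithm; otherwise we proceed to round.

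The only place one must be careful is the Reps reduction: for $v \in \child(j)$ we have $d(v,S) \le d(j,S) + 2R(v)$, and raising to the $p$-th power uses convexity,
\begin{align*}
d(v,S)^p \le (d(j,S) + 2 \cdot 2^{1/p} C_v)^p \le 2^{p-1} d(j,S)^p + 2^{p-1} \cdot 2^{2p-1} \cdot 2 C_v^p / 2^{p-1}
\end{align*}
or more cleanly $(a+b)^p \le 2^{p-1}(a^p + b^p)$, which after summation yields
\begin{align*}
\sum_{v \in C} d(v,S)^p \le 2^{p-1} \sum_{j \in \Reps} |\child(j)| d(j,S)^p + 2^{2p} \opt_g.
\end{align*}
The consolidation step of \Cref{alg:fair:consolidate} (still setting $\bar z_j = \min\{y(j,a(j)),1\}$, and connecting $j$ to $j$ and $s(j)$) now has per-client cost $(1-\bar z_j) d(j,s(j))^p = 2^p (1 - y(j,a(j))) a(j)^p \le 2^p C_j^p$, so the LP cost is at most $2^p \opt_g$. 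The $\{1/2,1\}$-rounding followed by the tree-based integral rounding of \Cref{alg:half-integral} and \Cref{alg:integral} still costs at most a factor of two more (for each $j$, if $z_j \ne 1$ then $z_{s(j)} = 1$, replacing the LP cost $\tfrac{1}{2} d(j,s(j))^p$ by $d(j,s(j))^p$); tie-breaking on $\abs{\child(j_1)} d(j_1,s(j_1))^p$ in \Cref{alg:half-integral} makes the same exchange argument go through. Thus $\sum_{j \in \Reps} |\child(j)| d(j,S)^p \le 2^{p+1} \opt_g$, and plugging back gives
\begin{align*}
\sum_{v \in C} d(v,S)^p \le 2^{p-1} \cdot 2^{p+1} \opt_g + 2^{2p} \opt_g = 2^{2p+1} \opt_g,
\end{align*}
which for $p = 1, 2$ recovers the $8$ and $32$ of \Cref{thm:main} and \Cref{thm:kmeans}.

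The main (mild) obstacle is just bookkeeping the constants in the Reps lemma, since that is the one place where the $p$-th power of a triangle inequality appears and the convexity factor $2^{p-1}$ shows up; once one fixes $R(v) = 2^{1/p} C_v$ so that the half-mass bound holds with the same slack as in \contkmeans, everything else is a mechanical translation of \Cref{sec:fairness} and \Cref{appsec:kmeans}. Discretization of radii into $\Ie$ adds the usual additive $\eps = O(1/n^2)$ loss.
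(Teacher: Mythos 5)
Your proposal is correct and follows exactly the route the paper itself takes: the same $p$-adapted LP with cost shares $C_v^p$ and the $p\int_0^\infty (1-y(v,r))r^{p-1}\,\dee r \le C_v^p$ constraint, the same filtering radius $R(v) = 2^{1/p}C_v$ yielding the half-mass bound, the same consolidation and $\{\tfrac12,1\}$-rounding with the $|\child(j)|d(j,s(j))^p$ tie-break, and the relaxed triangle inequality $(a+b)^p \le 2^{p-1}(a^p+b^p)$ combining to give $2^{p-1}\cdot 2^p \cdot 2 + 2^{2p} = 2^{2p+1}$. (The intermediate display $2^{p-1}\cdot 2^{2p-1}\cdot 2 C_v^p/2^{p-1}$ is an unnecessarily scrambled way to write $2^{2p}C_v^p$, but it does evaluate correctly, as your ``more cleanly'' remark acknowledges.)
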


This follows by the same framework as for \contkmed and \contkmeans, with the following alterations:
\begin{description}
    \item[LP formulation.] The cost-share variables are $\set{C_v^p}_{v \in C}$. \eqref{kmeanslp:cost} is replaced by $\sum_{v \in C}C_v^p \leq \opt_g$, and \eqref{kmeanslp:markovplus} is replaced by $p\int_0^1 (1-y(v,r))r^{p-1}\dee r \leq C_v^p$.
    \item[Filtering.] $R(v) := 2^{1/p}C_v$.
    \item[Linear program on $\Reps$.] The objective function in \eqref{lpkmeans} is replaced by \[\sum_{j \in \Reps}\abs{\child(j)}\sum_{i \in \Reps}x_{ij}d(j,i)^p.\]
    \item[Rounding to $\set{\half,1}$-integral solution.] In \Cref{alg:half-integral}, we use $j_1,j_2$ such that \[\abs{\child(j_1)}d(j_1,s(j_1))^p \leq \abs{\child(j_2)}d(j_2,s(j_2))^p.\]
\end{description}

The guarantees are obtained using the relaxed triangle inequality $(a+b)^p \leq 2^{p-1}(a^p+b^p)$. Via the filtering step, we get
\[\sum_{v \in C}d(v,S)^2 \leq 2^{p-1}\sum_{j \in S}\abs{\child(j)}d(j,S) + 2^{2p}\opt_g\]
and the consolidation step incurs a factor of $2^p$. The remaining algorithm incurs a factor of $2$ irrespective of the value of $p$. So we get an approximation factor of $2^{p-1}\cdot 2^p \cdot 2 + 2^{2p} = 2^{2p+1}$.

\section{Continuous \texorpdfstring{$k$}{k}-center with Outliers}\label{appsec:kcwo}
In the $k${\sf-center} problem, the input is the same as that for $k${\sf-median}, but the objective is now the {\em maximum} distance that a client has to the open facilities, rather than the sum. Furthermore, we consider a version here that allows {\em outliers}, i.e. only some $m$ clients need to be considered when computing the cost.
\begin{definition}[Continuous $k$-center with Outliers (\contkcwo)]
The input is a metric space $(X,d)$, clients $C \subseteq X, \abs{C} = n \in \NN$, and $k,m \in \NN$. The goal is to find $S \subseteq X, \abs{S} = k$ and $D \subseteq C, \abs D \geq m$ minimizing $\max_{v \in D}d(v,S)$.
\end{definition}

In this section, we provide an algorithm for \contkcwo that matches the state of the art for discrete \kcwo up to a small additive factor. The best approximation algorithm for the discrete version is a 2-approximation primal-rounding approach~\cite{ChakrGK2020}. Later, Chakrabarty and Negahbani \cite{chakrabarty-negahbani-f-center} extended this result to allow for generalized constraints on facilities, using a round-or-cut framework. Both of those results are tight due to a hardness of approximation by Hochbaum and Shmoys~\cite{hochbaum-shmoys}. Via our idea of creating $y$-variables for balls around clients, we adapt the discrete results~\cite{chakrabarty-negahbani-fairness, ChakrGK2020} to the \contkcwo problem.

\begin{theorem}\label{thm:kcwo}
There exists a polynomial time algorithm that, for an instance of \contkcwo with optimum $\opt$, yields a solution of cost at most $2\opt+\eps$. Here $\eps = O\paren{\frac 1 {n^2})}$.
\end{theorem}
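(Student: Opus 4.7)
The plan is to run the paper's round-or-cut framework on the $k$-center with outliers objective, and port the primal-rounding algorithm of Chakrabarty-Goyal-Krishnaswamy~\cite{ChakrGK2020}. Unlike the sum-objectives treated earlier, the $k$-center objective is a maximum, so I first binary-search over polynomially many candidate radii $\opt_g$, using Lemma~\ref{lem:delta-vs-n} to bound the pool of relevant distances. For each $\opt_g$ I test feasibility of the LP described below, and upon success round to a solution of maximum radius at most $2\opt_g$. The smallest $\opt_g$ for which my rounding succeeds satisfies $\opt_g \le \opt + \eps$, yielding the claimed $2\opt + \eps$ bound.

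The LP uses the paper's signature variables $\{y(v,r)\}_{v \in C,\, r \in \RR}$ for the mass of facilities within distance $r$ of $v$, together with coverage variables $\{x_v\}_{v \in C}$ for the extent to which each client is served within radius $\opt_g$. I impose $x_v \le y(v,\opt_g)$ for every client, $\sum_{v \in C} x_v \ge m$ to enforce the outlier budget, non-concentric monotonicity $y(u,r) \le y(v,r')$ whenever $B(u,r) \subseteq B(v,r')$, and the exponential packing family $\sum_{B \in \B} y(B) \le k$ over pairwise-disjoint collections $\B \subseteq \{B(v,r)\}_{v,r}$. As in the previous sections, only the packing family is exponential, and it is handled by the ellipsoid-based round-or-cut loop.

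Given a proposed $(x,y)$, I filter as in Algorithms~\ref{alg:ufl-reps} and~\ref{alg:fair-reps}: process clients in the order prescribed by the ChakrGK analysis (by aggregated LP weight), greedily place the current client $j$ into $\Reps$, and assign every still-uncovered client within distance $2\opt_g$ of $j$ to $\child(j)$. This renders $\{B(j,\opt_g)\}_{j \in \Reps}$ pairwise disjoint, and I check the corresponding packing constraint $\sum_{j \in \Reps} y(j,\opt_g) \le k$. If it is violated I return it to the ellipsoid routine as a separating hyperplane. Otherwise I aggregate the LP onto $\Reps$, assigning representative $j$ the demand $X_j := \sum_{v \in \child(j)} x_v$ and supply $y(j,\opt_g)$, and execute the ChakrGK primal rounding on this polynomial-sized instance. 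Since every child of an opened representative $j$ lies within $2\opt_g$ of $j$, the resulting $S$ (of size at most $k$) and served set $D$ (of size at least $m$) achieve maximum radius at most $2\opt_g$.

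The main obstacle is certifying that the aggregated instance on $\Reps$ is a legitimate input to the ChakrGK rounding: the aggregated demand at each representative must be dominated by an appropriate aggregated supply. This reduces to two checks. First, $\sum_v x_v \ge m$ survives aggregation since $\{\child(j)\}_{j \in \Reps}$ partitions $C$. Second, each $X_j$ is bounded by a sum of $y$-masses in a slightly enlarged region around $j$, which follows by combining $x_v \le y(v,\opt_g)$ with the non-concentric monotonicity analogue of~\eqref{fairlp:mon}, yielding $y(v,\opt_g) \le y(j,3\opt_g)$ for $v \in \child(j)$. With these bridges in place the ChakrGK analysis ports verbatim, delivering the $2$-approximation claimed.
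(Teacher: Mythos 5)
Your high-level plan matches the paper's: ball-indexed variables $y(v,r)$ plus per-client coverage variables $\cov_v$ (your $x_v$), an exponential packing family as the only ellipsoid-handled constraint, greedy filtering at radius $2\opt_g$, the disjoint packing constraint $\sum_{j\in\Reps} y(j,\opt_g)\le k$ as the separating hyperplane, and a final selection of the best $k$ representatives. However, the justification of your ``second check'' has a genuine gap.

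You bound $X_j=\sum_{v\in\child(j)}x_v$ by chaining $x_v\le y(v,\opt_g)$ with non-concentric monotonicity to get $y(v,\opt_g)\le y(j,3\opt_g)$, and hence $X_j\le \abs{\child(j)}\cdot y(j,3\opt_g)$. But the balls $B(j,3\opt_g)$ for $j\in\Reps$ are \emph{not} pairwise disjoint --- representatives are only guaranteed to be more than $2\opt_g$ apart --- so the packing family gives you no bound on $\sum_{j\in\Reps}y(j,3\opt_g)$, and the averaging step that should certify ``the top-$k$ $\abs{\child(j)}$'s sum to at least $m$'' does not go through. The correct argument uses the greedy processing order, not non-concentric monotonicity: if you process clients in decreasing order of $x_v$, then every $v\in\child(j)$ satisfies $x_v\le x_j$, hence $X_j\le\abs{\child(j)}\,x_j\le\abs{\child(j)}\,y(j,\opt_g)$. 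Combined with $\sum_j y(j,\opt_g)\le k$ and $\sum_j X_j=\sum_v x_v\ge m$, the standard ``top-$k$ of weighted quantities'' argument yields $\sum_{i\in S}\abs{\child(i)}\ge m$. You already have the right filtering order in hand, but your certification invokes the wrong constraint; replacing the monotonicity step with the order property closes the gap. (Note also that the paper's LP for \contkcwo uses only \emph{concentric} monotonicity; the non-concentric variant you introduce is unnecessary here and is what leads you astray.)
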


As before, we proceed with a round-or-cut framework with a guessed optimum $\opt_g$, and variables $y(v,r) = y(B)$ representing the number of open facilities in each ball $B=B(v,r)$. We will also have variables $\cov_v$ for each $v \in C$ representing whether or not $v$ has a facility open within distance $\opt_g$ of it. Our linear program \kcwo is as follows:

\begin{align}
    \sum_{B \in \B}y(B) &\leq k &\forall \B=\braces{B(v,r)}_{\substack{v \in C \\ r \in \RR}}\text{ pairwise disjoint}&\tag{\kcwo-1}\label{kcwolp:k}\\
    y(v,\opt_g) &\geq \cov_v &\forall v \in C \tag{\kcwo-2}\label{kcwolp:sanity}\\
    y(v,r) &\leq y(v,r') &\forall v \in C, r,r' \in \RR \tag{{\sf kCwO}-3}\label{kcwolp:mon}\\
    \sum_{v \in C}\cov_v &\geq m \tag{{\sf kCwO}-4}\label{kcwolp:m}
\end{align}
All constraints except for \eqref{kcwolp:k} are polynomially many, and we use ellipsoid for \eqref{kcwolp:k}. Given a proposed solution $(\cov,y)$ of \kcwo, we construct $\Repscov \subseteq C$ and $\Scov \subseteq \Repscov$ as follows. We first produce $\Repscov$ such that all clients are within $2\opt_g$ of it, but it is potentially of size larger than $k$. Then, we pick $\Scov \subseteq \Repscov, \abs \Scov = k$ by choosing the $k$ largest $\child$ sets.

Constructing $\Repscov$ is simple: initially all clients are uncovered ($U = C$), and we pick the uncovered client $i \in U$ with maximum $\cov_i$. We put $i$ in $\Repscov$, and consider all uncovered clients within distance $2\opt_g$ of $i$ to be in $\child(i)$, which are then covered. We repeat this until all clients are covered.

\begin{algorithm}[H]\caption{Rounding for \contkcwo\label{alg:kcwo}}
\begin{algorithmic}[1]
    \Require $\braces{\cov_v}_{v \in C}, \opt_g$
    \State $\Repscov \gets \emptyset$
    \State $U \gets C$
    \While{$U \neq \emptyset$}
        \State Pick $i \in U$ with maximum $\cov_i$ \label{alg:kcwo:ln:child}
        \State $\Repscov \gets \Repscov + i$
        \State $\child(i) \gets U \cap B(i,2\opt_g)$
        \State $U \gets U \setminus B(i,2\opt_g)$
    \EndWhile
    \State $\Scov \gets \emptyset$
    \While{$\abs{\Scov} < k$}\Comment{construct $\Scov$ to be the $k$ points that cover the most}
        \State Pick $i \in \Repscov\setminus \Scov$ with maximum $\abs{\child(i)}$
        \State $\Scov \gets \Scov+i$
    \EndWhile
    \Ensure $\Scov$
\end{algorithmic}
\end{algorithm}

Observe that, by construction of $\Repscov$ in \Cref{alg:kcwo}, the collection of balls $\B_\cov := \braces{B(i,\opt_g)}_{i \in \Repscov}$ is pairwise disjoint. So the following constraint, called \ref{Sepcov} is of the form \eqref{kcwolp:k}:
\begin{align}
    \sum_{i \in \Repscov}y(i,\opt_g) \leq k\tag{$\Sepcov$}\label{Sepcov}
\end{align}

We will show that
\begin{lemma}\label{lma:kcwo}
If $(\cov, y)$ satisfies \eqref{kcwolp:sanity}-\eqref{kcwolp:m} and \ref{Sepcov}, then
\[\sum_{i \in \Scov}\abs{\child(i)} \geq m\]
\end{lemma}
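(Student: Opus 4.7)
The plan is to combine the coverage lower bound $\sum_v \cov_v \geq m$ from \eqref{kcwolp:m} with a budget upper bound $\sum_{i \in \Repscov} \cov_i \leq k$ (extracted from \eqref{kcwolp:sanity} and \ref{Sepcov}) via a fractional-knapsack argument, exploiting the greedy choice in Line~\ref{alg:kcwo:ln:child} to convert the former sum into a $\abs{\child(i)}$-weighted one over $\Repscov$.

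First, since $\braces{\child(i)}_{i \in \Repscov}$ partitions $C$ by construction, one can write $\sum_{i \in \Repscov} \sum_{v \in \child(i)} \cov_v = \sum_{v \in C} \cov_v \geq m$. Because Line~\ref{alg:kcwo:ln:child} picks $i$ to maximize $\cov$ over the then-uncovered clients, every $v \in \child(i)$ satisfies $\cov_v \leq \cov_i$; hence $\sum_{v \in \child(i)} \cov_v \leq \cov_i \cdot \abs{\child(i)}$, and summing over $i \in \Repscov$ gives
\[\sum_{i \in \Repscov} \cov_i \cdot \abs{\child(i)} \geq m.\]
Meanwhile, \eqref{kcwolp:sanity} and \ref{Sepcov} yield $\sum_{i \in \Repscov} \cov_i \leq \sum_{i \in \Repscov} y(i, \opt_g) \leq k$. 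I would also use $\cov_i \in [0,1]$, which is the natural relaxation of the indicator interpretation of $\cov_v$ and can be added to the LP without loss.

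Now consider the auxiliary program $\max \sum_{i \in \Repscov} x_i \abs{\child(i)}$ subject to $\sum_i x_i \leq k$ and $x_i \in [0,1]$. The vector $\braces{\cov_i}_{i \in \Repscov}$ is feasible with objective at least $m$. Assuming $\abs{\Repscov} \geq k$, this is the classical fractional knapsack whose (integral) optimum is attained by setting $x_i = 1$ on the $k$ indices with largest $\abs{\child(i)}$ -- which is exactly the set $\Scov$ produced by \Cref{alg:kcwo}. Therefore $\sum_{i \in \Scov}\abs{\child(i)} \geq m$. In the degenerate case $\abs{\Repscov} < k$, reading the final while-loop of \Cref{alg:kcwo} as terminating when $\Repscov \setminus \Scov$ is empty gives $\Scov = \Repscov$, and then $\sum_{i \in \Scov}\abs{\child(i)} = \abs{C} \geq m$ trivially.

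The only non-routine step is the greedy observation that permits bounding $\cov_v \leq \cov_i$ for $v \in \child(i)$; once this is in hand, the rest is a one-line fractional-knapsack comparison and no metric geometry is needed for this lemma. The metric enters only upstream, through the $2\opt_g$-radius construction of $\child(i)$ in \Cref{alg:kcwo}, which will later convert the cardinality conclusion $\abs{\bigcup_{i \in \Scov} \child(i)} \geq m$ into the $2\opt_g$ bound on the objective.
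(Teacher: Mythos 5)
Your proof is correct and follows essentially the same route as the paper's: both derive $\sum_{i\in\Repscov}\cov_i \leq k$ from \eqref{kcwolp:sanity} and \ref{Sepcov}, derive $\sum_{i\in\Repscov}\abs{\child(i)}\cov_i \geq m$ from the greedy choice in Line~\ref{alg:kcwo:ln:child} together with \eqref{kcwolp:m}, and then apply a weighted-average (equivalently, fractional-knapsack) comparison to conclude that the $k$ largest $\abs{\child(i)}$ sum to at least $m$. You are right to flag that the knapsack step genuinely requires $\cov_i \leq 1$ (without it a single large $\cov_i$ could absorb all the weight and the conclusion fails), a bound the paper uses implicitly but omits from the displayed LP, and your handling of the corner case $\abs{\Repscov} < k$ is a harmless and welcome precaution.
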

So if we get $\sum_{i \in \Scov}\abs{\child(i)} < m$, we know that \ref{Sepcov} must be violated, so we return it to ellipsoid as a separating hyperplane. Hence, in polynomial time, we can either conclude that our guess of $\opt_g$ should be larger, or obtain $\Scov$ such that $\sum_{i \in \Scov}\abs{\child(i)} \geq m$. Since $\child(i)$ sets are pairwise disjoint by construction, putting $D = \sqcup_{i \in \Scov}\abs{\child(i)}$ yields the $2$-approximation.

\begin{proof}[Proof of \Cref{lma:kcwo}] For a fixed $\opt_g$ and $(\cov, y)$, let $\Reps = \Repscov, S = \Scov$. By \eqref{kcwolp:sanity} and \ref{Sepcov},
\begin{align}
    \sum_{i \in \Reps}\cov_i \leq \sum_{i \in \Reps}y(i,\opt_g) \leq k \label{eq:covs}
\end{align}
Also, by \Cref{alg:kcwo} and \eqref{kcwolp:m},
\begin{align}
    \sum_{i \in \Reps}\abs{\child(i)}\cov_i &\geq \sum_{i \in \Reps}\sum_{v \in \child(i)}\cov_v = \sum_{v \in C}\cov_v \geq m &\dots\text{by \eqref{kcwolp:m}} \label{eq:bcovs}
\end{align}

By \eqref{eq:covs} and \eqref{eq:bcovs}, we can view $\sum_{i \in \Reps}\abs{\child(i)}\cdot\frac{\cov_i}k$ as an average of quantities $\abs{\child(i)}$ with weights $\frac{\cov_i}k$. Since this average is at least $m/k$, the $k$ largest $\abs{\child(i)}$'s must sum up to at least $m$, as desired.
\end{proof}
\section{Hardness of approximation for \texorpdfstring{\conlamufl}{Cont-UFL}}\label{appsec:ufl}

\uflhard*

We will reduce from the following result, that was used \cite{cohen-addad-et-al} to show a hardness of $(2-\eps)$ for \contkmed.
\begin{theorem}[\cite{KMS17,DKKMS18a,DKKMS18b,BKS18,KMS18}]Given a graph $G=(V,E)$ and an $\eps > 0$, it is $\NP$-hard to distinguish between the following:
\begin{itemize}
    \item There exists pairwise disjoint independent sets $V_1,V_2,V_3,V_4 \subseteq V$ such that $\forall i \in [4]$, $\abs{V_i}\geq \frac{1-\eps}4\abs V$.
    \item There is no independent set in $G$ of size $\eps\abs V$.
\end{itemize}
\end{theorem}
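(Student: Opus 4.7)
The plan is to reduce from the independent-set gap problem stated just above the theorem to \conlamufl. Our reduction closely follows the template of Cohen-Addad, Karthik, and Lee \cite{cohen-addad-et-al} for \contkmed; we reuse their metric embedding wholesale, and the only new ingredient is choosing the facility-opening cost $\lambda$ so as to simultaneously (i) force essentially four facilities to be opened in the YES case, and (ii) make every candidate solution expensive in the NO case, independent of the number of facilities it opens.

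\textbf{Setup.} Given a graph $G=(V,E)$ on $n$ vertices, take $C = V$ and embed $C$ into a metric $(X,d)$ exactly as in \cite{cohen-addad-et-al}. The embedding has two key properties used there: (a) every pair of distinct clients is at distance at most $2$; and (b) for any continuous point $x \in X$, the set $T(x) := \{v \in C : d(v,x) \leq 1\}$ is an \emph{independent set} of $G$ (up to $o(1)$ slack that we absorb into the final $\eps$). Moreover, for every independent set $I \subseteq V$ there is some $x \in X$ with $d(v,x) \leq 1$ for all $v \in I$. We set $\lambda = \eps n$.

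\textbf{YES case.} Let $V_1,\ldots,V_4$ be pairwise disjoint independent sets each of size at least $(1-\eps)n/4$. By the second half of property (b), for each $i$ there is $x_i \in X$ with $d(v,x_i)\leq 1$ for every $v \in V_i$. Put $S = \{x_1,x_2,x_3,x_4\}$. Clients in $V_1\cup\cdots\cup V_4$ contribute at most $(1-\eps)n$ to $\costc(S)$, the at most $\eps n$ leftover clients contribute at most $2\eps n$ by property (a), and $\costf(S) = 4\lambda = 4\eps n$. Summing yields $\costf(S)+\costc(S) \leq (1+\eps)n + 2\eps n + 4\eps n \leq (1+6\eps)n$.

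\textbf{NO case.} Let $S \subseteq X$ be arbitrary and write $s = |S|$. For each $f \in S$, property (b) together with the non-existence of an $\eps n$-size independent set gives $|T(f)| < \eps n$. Setting $U = \bigcup_{f \in S}T(f)$, we have $|U| \leq s \eps n$, every client in $U$ contributes at least $1$ to $\costc(S)$, and every client outside $U$ contributes at least $2$. Hence
\begin{align*}
\costf(S) + \costc(S) &\geq s\lambda + |U| + 2(n - |U|) \\
&= 2n + s\lambda - |U| \\
&\geq 2n + s\eps n - s\eps n = 2n.
\end{align*}
Absorbing the embedding's $o(1)$ slack into $\eps$ yields $\costf(S)+\costc(S) \geq (2-\eps)n$, completing both sides of the gap.

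\textbf{Main obstacle.} The crux of the argument is property (b): we need the ``$\leq 1$ vs $\geq 2$'' gap to hold uniformly against \emph{all} continuous centers $x \in X$, not merely centers located at vertices of $G$ — this is exactly the subtlety that distinguishes continuous clustering from its discrete cousin. Establishing property (b) is precisely what \cite{cohen-addad-et-al} do for \contkmed, and we invoke their construction as a black box. The new work here is purely quantitative: balancing $\lambda$ against the independent-set threshold $\eps n$ so that, in the NO case, the lower bound depends on $s$ only through the non-negative slack $s(\lambda - \eps n)$, cleanly recovering $2n$ regardless of how many facilities the adversary opens.
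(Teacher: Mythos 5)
The statement you were asked to prove is the imported gap independent-set hardness itself: the assertion, cited by the paper from the 2-to-2 Games line of work, that it is $\NP$-hard to distinguish graphs containing four pairwise disjoint independent sets of size $\frac{1-\eps}{4}\abs V$ from graphs with no independent set of size $\eps\abs V$. Your proposal does not prove this statement; it \emph{assumes} it and reduces from it \emph{to} \conlamufl. A reduction from problem $A$ to problem $B$ transfers hardness to $B$ and says nothing about the hardness of $A$. The paper treats this theorem as a black box precisely because its proof lives in the hardness-of-approximate-coloring/2-to-2-Games literature (Khot--Minzer--Safra and related works) and is far beyond an elementary gadget reduction; no argument of the kind you sketch can substitute for it. What you have actually written is an attempt at \Cref{thm:ufl-hard} via \Cref{lma:ufl:reduction}, which is a different statement in the paper.

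Even read as a proof of \Cref{thm:ufl-hard}, the NO-case accounting does not go through as stated. In the paper's construction each vertex is embedded as $A(v)\in\RR^m$ with coordinates in $\braces{0,\pm 2}$ under $\ell_\infty$, so adjacent clients are at distance $4$ and non-adjacent ones at distance $2$; a continuous center can lie at \emph{any} distance from a client, e.g.\ at distance $0$ (cost $0$, not $\geq 1$) or at distance $1.5$ (so lying outside the unit ball only forces cost $>1$, not $\geq 2$). Hence your dichotomy ``clients within distance $1$ of some center pay at least $1$, all others pay at least $2$'' is unjustified, and the bound $\costc(S)\geq 2n-\abs U$ does not follow. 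The paper instead lower-bounds the connection cost through \Cref{claim:large-matching}: within each cluster, vertices are matched along edges of the induced subgraph (possible because there is no large independent set), and each matched pair pays at least $d(A(u),A(v))=4$ by the triangle inequality, giving $\costc(S)\geq 2n-2\eps n$; the balancing of $\lambda=\eps n$ against the number $r$ of open centers (with $\eps'=\eps/r$) then recovers the $(2-\eps)n$ bound. Your choice of $\lambda$ plays the same role, but the per-client lower bound must go through the matching argument (or an equivalent), not through unit-radius balls.
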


\begin{proof}[Proof of \Cref{thm:ufl-hard}]
Given a graph $G=(V,E)$ and $\eps > 0$, we construct the following instance of the \conlamufl problem.

For $m := \abs E$, let $X = \RR^m$, and let $d$ be the distance induced in $\RR^m$ by the $\ell_\infty$-norm. Fix an arbitrary orientation on $E$, and let the clients be $C := \braces{A(v) \mid v \in V}$ where, for a $v \in V$ and $e=(u',v') \in E$,
    \[A(v)(e) := \begin{cases}
    2 & \text{if }v=u'\\
    -2 & \text{if }v=v'\\
    0 & \text{otherwise}
    \end{cases}\]
So we can hereafter say $n = \abs V = \abs C$. Now let $\lambda = \eps n$. Also, fix an arbitrary optimal solution of this \conlamufl instance that opens $r$ facilities for some $r \geq 1$, and set $\eps' = \frac \eps r$.

The proof follows by \Cref{lma:ufl:reduction}, which analyzes the two directions of this reduction.
\end{proof}

\begin{lemma}\label{lma:ufl:reduction}
Consider a \conlamufl instance constructed from the graph $G=(V,E)$, as per the construction described earlier. Suppose the optimum in this instance opens $r$ facilities, and let $\eps' = \eps/r$.

If there are pairwise disjoint independent sets $V_1,V_2,V_3,V_4 \subseteq V$, each of size at least $\frac{1-\eps'}4 n$, then there is a solution in the \conlamufl instance with cost at most $(1+6\eps)n$. Conversely, if there is no independent set of size $\eps' n$ in $G$, then the optimum in the \conlamufl instance is at least $(2-\eps)n$.
\end{lemma}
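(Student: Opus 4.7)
The plan is to compute the pairwise client distances once, then handle the two directions independently.

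A direct calculation shows that for $u \neq v$ in $V$, the $\ell_\infty$ distance $d(A(u), A(v))$ equals $4$ when $uv \in E$ (realized at the coordinate $e = uv$, where $A(u)$ and $A(v)$ take values $+2$ and $-2$) and is at most $2$ when $uv \notin E$ (since no edge can have both endpoints in $\{u, v\}$, so $A(u)(e)$ and $A(v)(e)$ are never simultaneously nonzero). This $\{2, 4\}$ gap is the only geometric input for the rest of the argument.

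For the forward (YES) direction, given pairwise disjoint independent sets $V_1, \ldots, V_4$ each of size at least $\tfrac{1-\eps'}{4}n$, I would open four facilities $f_1, \ldots, f_4 \in \RR^m$ with $f_i(e) = +1$ if $e$ is outgoing from some vertex of $V_i$, $f_i(e) = -1$ if $e$ is incoming to some vertex of $V_i$, and $f_i(e) = 0$ otherwise. Independence of $V_i$ is exactly what makes this assignment well defined (no edge has both endpoints in $V_i$, so the first two cases never clash). A coordinate-wise check gives $d(A(v), f_i) \leq 1$ for $v \in V_i$, while trivially $d(A(v), f_i) \leq 3$ for any $v$ since the entries of $A(v)$ and $f_i$ have magnitudes at most $2$ and $1$. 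At most $\eps' n$ clients lie outside $V_1 \cup \cdots \cup V_4$, so the total cost is at most $4\eps n + (n - \eps' n) \cdot 1 + \eps' n \cdot 3 = n(1 + 4\eps + 2\eps')$, which is at most $(1+6\eps)n$ since $\eps' = \eps/r \leq \eps$ (using $r \geq 1$).

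For the converse (NO) direction, let $S = \{f_1, \ldots, f_r\}$ be the optimum and set $N'_i := \{v \in V : d(A(v), f_i) < 2\}$. The key observation is that each $N'_i$ is an independent set: if $u, v \in N'_i$ with $uv \in E$, the triangle inequality would give $4 = d(A(u), A(v)) \leq d(A(u), f_i) + d(A(v), f_i) < 4$, a contradiction. Under the NO hypothesis, $|N'_i| < \eps' n$, so at most $r \eps' n = \eps n$ clients lie within distance $< 2$ of $S$. The remaining $\geq (1-\eps)n$ clients each incur connection cost at least $2$, and adding the opening cost $r\eps n \geq \eps n$ yields a total of at least $2(1-\eps)n + \eps n = (2-\eps)n$.

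The main subtlety I expect to highlight is the choice of threshold $2$ in the definition of $N'_i$: the naive threshold $1$ only gives $(1-\eps)n$ for the connection-cost lower bound, which is short by a factor of $2$. The fact that $d(A(u), A(v)) = 4$ (and not $2$) for adjacent $u, v$ is precisely what lets the threshold be pushed to $2$ while keeping $N'_i$ independent, which in turn doubles the lower bound and delivers the desired $(2-\eps)n$.
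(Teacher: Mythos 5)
Your forward (YES) direction is identical to the paper's: open one facility $s_i$ per independent set $V_i$, note $d(A(v),s_i)\le 1$ for $v\in V_i$ and $\le 3$ always, and add the opening cost $4\lambda=4\eps n$. Your converse (NO) direction, however, takes a genuinely different and somewhat cleaner route. The paper fixes the optimal assignment, partitions clients as $V_i = \{v : A(v)\text{ assigned to }s_i\}$, invokes a matching lemma (Claim~1 of Cohen-Addad et al.: if $G$ has no $\eps'n$-independent set, the induced subgraph on $V_i$ has a matching of size $\ge (|V_i|-\eps'n)/2$), and then lower-bounds the connection cost of each matched pair via the triangle inequality $d(A(u),s_i)+d(A(v),s_i)\ge d(A(u),A(v))=4$, summing to $\ge 2n-2\eps n$. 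You instead observe directly that $N'_i:=\{v : d(A(v),f_i)<2\}$ must be independent (two adjacent vertices in $N'_i$ would force $4 = d(A(u),A(v)) < 2+2$), so $|N'_i|<\eps'n$ and at most $\eps n$ clients can be within distance $<2$ of $S$; the rest contribute $\ge 2$ each, again giving $\ge 2(1-\eps)n$. Both reach the same bound, but your argument avoids the auxiliary matching claim entirely and makes the role of the $\{2,4\}$ distance gap more transparent; the paper's matching argument has the advantage of being lifted verbatim from the existing $k$-median hardness proof, which keeps the exposition parallel to the cited source. Your proof is correct.
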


\begin{proof}
\textbf{Large independent sets imply low-cost UFL:} Given pairwise disjoint independent sets $V_1,V_2,V_3,V_4 \subseteq V$, each of size $\frac{1-\eps'}4n \geq \frac{1-\eps}4n$ in $G$, we can produce a solution in the \conlamufl instance that has cost at most $(1+6\eps)n$. To construct this solution, open the facilities $s_1,s_2,s_3,s_4$, defined as follows. For $i \in [4]$ and each coordinate $e = (u',v') \in E$,
\[s_i(e) := \begin{cases}
1 & \text{if }u' \in V_i\\
-1 & \text{if }v' \in V_i\\
0 &\text{otherwise}
\end{cases}\]
Since coordinates in any $A(v)$ are in $\braces{\pm 2,0}$ and coordinates in each $s_i$ are in $\braces{\pm 1, 0}$, we get the simple upper bound of $d(A(v),s_i) \leq 3$ for any $i \in [4], v \in V$.

For some $i \in [4], v \in V_i$ and $e=(u',v') \in E$, we study the quantity $\abs{A(v)(e) - s_i(e)}$. If $u' = v$, then it is $\abs{2-1} = 1$. If $v' = v$, then it is $\abs{-2-(-1)} = 1$. If $u' \in V_i-v$, then it is $\abs{0-1} = 1$. If $v' \in V_i-v$, then it is $\abs{0-(-1)} = 1$. When $e$ lies outside $V_i$, then it is zero. So when $v \in V_i$, we get a tighter bound of $d(A(v),s_i) \leq 1$.
Thus the cost of connecting clients to facilities is at most
\begin{align*}
    \sum_{i=1}^4\abs{V_i} + 3\abs{\paren{V \setminus \bigcup_{i=1}^4 V_i}} \leq 4\cdot \frac{1-\eps}4n + 3\eps n \leq (1+2\eps)n
\end{align*}
We open $4$ facilities, so the facility opening cost is $4\lambda = 4\eps n$. Thus the total cost of this solution is $(1+6\eps)n$, as promised.

\noindent \textbf{Small maximum independent set implies high-cost UFL:} Suppose there is no independent set in $G$ of size $\eps'n$. In the optimum solution of \conlamufl that we had fixed, let the facilities opened be $S = \braces{s_1,\dots,s_r}$ and, for $i \in [r]$, let the set $C_i$ be the clients assigned to $s_i$ in this solution in other words, $d(v,S) = d(v,s_i)$ for all $i \in [r],v \in C_i$. Also let $V_i := \braces{v \in V \mid A(v) \in C_i}$. We have the following claim (Claim 1, \cite{cohen-addad-et-al}):
\begin{claim} \label{claim:large-matching}
If $G$ has no independent set of size $\eps'n$, then the induced subgraph of $G$ on $V_i$ contains a matching $M_i$ of size at least $\frac{\abs{V_i}-\eps'n} 2$.
\end{claim}
We prove the claim at the end of this section. Given the claim, we have that
\begin{align*}
    \sum_{v \in V_i} d(A(v),s_i) &\geq \sum_{(u,v) \in M_i}\paren{d(A(v),s_i)+d(A(u),s_i}\\
    &\geq \sum_{(u,v) \in M_i}d(A(u),A(v))\\
    &=\sum_{(u,v) \in M_i}\abs{A(u)((u,v))-A(v)((u,v))}\\
    &=4\abs{M_i}\\
    &\geq 2\paren{\abs{V_i}-\eps'n}
\end{align*}
and thus $\costc(S)$ is
\begin{equation*}
    \sum_{j \in C}d(j,S) = \sum_{i=1}^r\sum_{v \in V_i}d(A(v),s_i)
    \geq 2n - 2\eps'rn
    = 2n-2\eps n
\end{equation*}
Since $r$ facilities are open, $\costf(S)$ is $\lambda r = r\eps n \geq \eps n$. Thus $\costf(S) + \costc(S)$ is at least $n(2-2\eps+\eps) \geq n(2-\eps)$.
\end{proof}

\begin{proof}[Proof of \Cref{claim:large-matching} \cite{cohen-addad-et-al}]
If $\abs{V_i} \geq \eps' n$, then it cannot be independent. So pick an edge in the subgraph induced by $V_i$, include it in the matching $M_i$, and discard its endpoints from $V_i$. Repeat this process until $\abs{V_i} < \eps' n$ to obtain the desired $M_i$.
\end{proof}
\end{document}